\definecolor{darkgreen}{rgb}{0,0.5,0}
\DeclareMathOperator*{\myDelta}{\Delta}
\newcommand{\comment}[1]
\newcommand\myfunc[5]{%
	\begingroup
	\setlength\arraycolsep{0pt}
	#1\colon\begin{array}[t]{c >{{}}c<{{}} c}
		#2 & \to & #3 \\ #4 & \mapsto & #5 
	\end{array}%
	\endgroup}
\title{ Algorithmic Networks: \\ central time to trigger \\ expected emergent open-endedness  }
\thanks{ This work is a revised version of the Research Report no. 4/2018 at the National 
Laboratory for Scientific Computing (LNCC) that is available at: 
\url{http://www.lncc.br/departamentos/producaocientificageral.php?vMenu=2&vTipo=13&vCabecalho=pesq&vTitulo=lncc&vDepto=&idt_responsavel=&vAn}
 }
\thanks{ A preliminary version of some of the results in this article was orally presented in the the 8th International Workshop on Guided Self-Organization at the Fifteenth International Conference on the Synthesis and Simulation of Living Systems (ALIFE) under the title ``Emergent algorithmic creativity on networked Turing machines'' with an extended abstract available at \url{http://guided-self.org/gso8/program/index.html} }
\author[Felipe S. Abrah\~{a}o]{Felipe S. Abrah\~{a}o}
\email[A1]{fsa@lncc.br}
\author[Klaus Wehmuth]{Klaus Wehmuth}
\email[A2]{klaus@lncc.br}
\author[Artur Ziviani]{Artur Ziviani}
\email[A3]{ziviani@lncc.br}
\address[A1,A2,A3]{National Laboratory for Scientific Computing (LNCC)
	\\ 25651-075 – Petropolis, RJ – Brazil}
\thanks{Authors acknowledge the partial support from CNPq through their individual grants: F. S. Abrah\~{a}o (313.043/2016-7), K. Wehmuth (312599/2016-1), and A. Ziviani~(308.729/2015-3). Authors also acknowledge the INCT in Data Science -- INCT-CiD (CNPq 465.560/2014-8).}
\begin{document}

\maketitle\thispagestyle{plain}

% Appendix conditions

\newtheorem{thm}{Theorem}[subsection]
\newtheorem{amslemma}{Lemma}[subsection]
\newtheorem{amscorollary}{Corollary}[thm]
\newtheorem{corollaryundersubsection}{Corollary}[subsection]

\newtheorem{notation}{Notation}[subsection]

\theoremstyle{definition}
\newtheorem{amsproposition}{Proposition}[subsection]
\newtheorem{amsdefinition}{Definition}[subsection]
\newtheorem{subdefinition}{Definition}[amsdefinition]
\newtheorem{subsubdefinition}{Definition}[subdefinition]
\newtheorem{autodefinition}{Definition}

\theoremstyle{remark}
\newtheorem{subnotation}{Notation}[amsdefinition]

\theoremstyle{remark}
\newtheorem{remark}{Remark}[amsdefinition]
\newtheorem{remarknote}{Note}[amsdefinition]
\newtheorem{noteunderlemma}{Note}[amslemma]
\newtheorem{noteunderthm}{Note}[thm]
\newtheorem{subremarknote}{Note}[subdefinition]
\newtheorem{subsubremarknote}{Note}[subsubdefinition]
\newtheorem{subremarknote2}{Note}[remarknote]
\newtheorem{note}{\textbf{Note}}[subsection]

\newtheorem{Bthm}{Theorem}[section]
\newtheorem{Blemma}{Lemma}[section]
\newtheorem{Bcorollary}{Corollary}[Bthm]
\newtheorem{Bcorollaryundersection}{Corollary}[section]

\theoremstyle{definition}
\newtheorem{Bproposition}{Proposition}[section]
\newtheorem{Bdefinition}{Definition}[section]
\newtheorem{Bsubdefinition}{Definition}[Bdefinition]
\newtheorem{Bsubsubdefinition}{Definition}[Bsubdefinition]
\newtheorem{Bautodefinition}{Definition}
\newtheorem{Bnotation}{Notation}[section]
\newtheorem{BnotationunderBnotation}{Notation}[Bnotation]

\theoremstyle{remark}
\newtheorem{Bsubnotation}{Notation}[Bdefinition]

\theoremstyle{remark}
\newtheorem{Bremark}{Remark}[Bdefinition]
\newtheorem{Bremarknote}{Note}[Bdefinition]
\newtheorem{BnoteunderBlemma}{Note}[Blemma]
\newtheorem{BnoteunderBthm}{Note}[Bthm]
\newtheorem{Bsubremarknote}{Note}[Bsubdefinition]
\newtheorem{Bsubsubremarknote}{Note}[Bsubsubdefinition]
\newtheorem{Bsubremarknote2}{Note}[Bremarknote]
\newtheorem{Bnote}{\textbf{Note}}[section]
\newtheorem{BnoteunderBnotation}{Note}[Bnotation]

\begin{abstract}\label{abstract}
		This article investigates emergence and complexity in complex systems that can share 
		information on a network. To this end, we use a theoretical approach from information 
		theory, computability theory, and complex networks. One key studied question is how much 
		emergent complexity (or information) arises when a population of computable systems is 
		networked compared with when this population is isolated. First, we define a general model 
		for networked theoretical machines, which we call \emph{algorithmic networks}. Then, we 
		narrow our scope to investigate algorithmic networks that optimize the average fitnesses of 
		nodes in a scenario in which each node imitates the fittest neighbor and the randomly 
		generated population is networked by a time-varying graph. We show that there are 
		graph-topological conditions that cause these algorithmic networks to have the property of 
		expected emergent open-endedness for large enough populations. In other words, the 
		expected emergent algorithmic complexity of a node tends to infinity as the population size 
		tends to infinity. Given a dynamic network, we show that these conditions imply the 
		existence of a central time to trigger expected emergent open-endedness. Moreover, we 
		show that networks with small diameter compared to the network size meet these 
		conditions. We also discuss future research based on how our results are related to some 
		problems in network science, information theory, computability theory, distributed 
		computing, game theory, evolutionary biology, and synergy in complex systems.   
\end{abstract}

\keywords{ 
	\textbf{Extended keywords:}
	emergence; 
	Kolmogorov complexity; 
	algorithmic information; 
	emergent information;
	Busy Beaver; 
	network science; 
	complex systems; 
	open-endedness; 
	Turing machines; 
	distributed systems; 
	Busy Beaver game; 
	imitation game; 
	time centrality; 
	small diameter; 
	cover time;
	diffusion power }

\subjclass[2010]{ 68Q30; 03D32; 68R10; 05C82; 94A15; 68Q01 }

\pagestyle{myheadings}
\markboth{Algorithmic Networks}{Central Time to Trigger Expected Emergent Open-Endedness}

\section{Introduction}\label{sectionIntro}

The general context of our research is to mathematically investigate emergence and complexity when a population of complex systems is networked. That is, we will study emergence of complexity (or information) in a system composed of interacting complex systems. As supported by \cite{Barabasi2009a,Michail2018}, the pursuit of a universal framework for the problem of complexity in complex networks is paramount for a wide range of topics in complex systems. The present work relies on an intersection between information theory, computability theory, evolutionary game theory, complex networks, distributed computing, multi-agent systems, communication complexity, adaptive complex systems, and biology. In fact, we discuss in further detail in Section~\ref{sectionRelatedandFuture} that the investigated problem is connected to questions ranging, for instance, from the problem of symbiosis~\cite{MargulisLynn1981}, cooperation~\cite{Hammerstein1994, Axelrod1997, Axelrod2006}, integration \cite{Feltz2006, Oizumi2014,Maguire2014}, and synergy \cite{Lizier2018} to biological \cite{Rashevsky1955,Walker2016,Kim2015}, economic \cite{Schweitzer2009, Axelrod1997, Shoham2008}, social \cite{Miller2007} networks, or artificial \cite{Lee2011,DeLillo2018} networks.

In particular, we study the problem of how a group (or population) of randomly generated computable systems give rise to emergent phenomena through the exchange of information and how it would affect the overall performance, fitness, or payoff. We create a theoretical toy model from which one can investigate how much emergent complexity the whole group has on average when they are networked compared with when they are isolated. As we will formally show later in this paper, the randomly generated algorithmic network $\mathfrak{N}_{BB}$ that plays the Busy Beaver imitation game (BBIG) --- see Section~\ref{sectionBusyBeaverGame} --- is a mathematical object useful to prove fruitful theorems using well-known results from statistics, computability theory, information theory, and graph theory. 

Thus, how does one define a measure of expected emergent complexity (or information) for a randomly generated population of Turing machines? In the context of this abstract toy model, we answer this question by generalizing the results for open-ended evolutionary systems under algorithmic information theory (AIT), as presented in \cite{Chaitin2012, Chaitin2013, Chaitin2014, Abrahao2015, Abrahao2016, Abrahao2016b, Abrahao2016c, Hernandez-Orozco2016}, in order to present a new mathematical phenomenon of \emph{open-endedness}. We will show that there are graph-topological conditions that trigger \emph{expected emergent open-endedness}~(EEOE), i.e., that trigger an unlimited increase of expected emergent complexity as the population size goes to infinity~(see Section~\ref{sectionAEOE}). It is an akin --- but different --- phenomenon to evolutionary open-endedness \cite{Bedau1998,Channon2001,Maley1999,Standish2003,Ruiz-Mirazo2004}: Instead of achieving an unbounded quantity of complexity over time~(or successive mutations), an unbounded quantity of emergent complexity is achieved as the population size increases when this population is networked (see Section \ref{subsectionOE}). In other words, the interaction (i.e., the exchange of information) among the population of randomly generated computable systems induces an endless increase in the expected (or average) emergent complexity as the population size grows toward infinity.

We start by defining a general encompassing mathematical model that we call algorithmic networks in Section~\ref{sectionAN}. This definition relies on a population of arbitrarily chosen theoretical machines and relies on a MultiAspect Graph (MAG)~\cite{Wehmuth2016b}, making the vertices in the respective MultiAspect Graph (MAG) to correspond to this population of systems/programs ~(see Definition \ref{BdefFunctionbinAN}) such that edges are communication channels that nodes/systems/programs can use to send and receive information. Thus, as it was our intention described in the first paragraph, an algorithmic network is a network composed of algorithms as nodes, with each node representing a computable system.

Then, we introduce in Section \ref{sectionBusyBeaverGame} a particular model of synchronous dynamic algorithmic networks $ \mathfrak{N}_{BB} $ that is based on simple imitation of the fittest neighbor: a type of algorithmic network that plays the BBIG. A network Busy Beaver game is a game in which each player is trying to calculate the largest integer\footnote{ As established as our measure of fitness or payoff. See Section~\ref{sectionBusyBeaverGame}.} it can using the information shared by its neighbors. The BBIG is a particular case of the Busy Beaver game in which every node can only propagate the largest integer, taking into account the one produced by itself and the ones from their neighbors. It configures a simple imitation-of-the-fittest procedure. Thus, these algorithmic networks $ \mathfrak{N}_{BB} $ can be seen as playing an optimization procedure where the whole pursues the increase of the average fitness/payoff through diffusing on the network the best randomly generated solution~(see discussion on Section~\ref{subsectionDiscussiononBBIM...}).

We present our main Theorem \ref{BthmMain} proving that there is a lower bound for the expected emergent algorithmic complexity in algorithmic networks $ \mathfrak{N}_{BB} $. Additionally, we prove Corollary \ref{BcorMain} showing that this lower bound can be calculated from a diffusion measure like cover time \cite{Costa2015a}. Further, from this corollary, we also prove in Theorem~\ref{BthmMainCentralTime} that there are asymptotic conditions on the increasing diffusion power of the cover time (as a function of the population size) such that they ensure that there is a central time to trigger EEOE. Then, we introduce in Section~\ref{sectionDiameter} a small modification on the family of MultiAspect Graphs of $\mathfrak{N}_{BB} $ with the purpose of investigating what would happen if the time-varying networks present a small diameter compared to the network size, i.e., $D(G_t,t) = \mathbf{ O }(\lg(N))$. Indeed, in this case, we show in Corollary \ref{BcorDiameter} that a small diameter is sufficient to ensure the existence of a central time\footnote{ Or just to trigger EEOE, if the network is static. See Note~\ref{BnoteDiameterandstaticgraphs} and Definition~\ref{BdefStaticNetwork}.} to trigger EEOE for algorithmic networks $ \mathfrak{N}_{BB} $ --- even in a computably larger number of communication rounds compared to the temporal diffusion diameter.

Additionally, we discuss in Section~\ref{sectionRelatedandFuture} future research from how our results are related to problems in network science, statistical (or probabilistic) information theory, computability theory, distributed systems, automata theory, game theory, evolutionary biology, and synergy in complex systems. We also choose to add discussions in the Sections in order to improve the readability and explanation of the new definitions and new models that we will introduce in the present article.
Finally, Section~\ref{sectionConclusion} concludes the paper. Complementarily, the appendix shows extended versions of the proofs of Lemmas \ref{BlemmaSLLNandAIT}, \ref{BlemmaComplexityp_i}, \ref{BlemmaGibbsandalgorithmicentropy}, \ref{BlemmaComplexityonHaltp_i}, \ref{BlemmaComplexityonBarHalt}, and \ref{BlemmaMinComplexityonDiffusion}, as well as of Theorems~\ref{BthmMain} and~\ref{BthmMainCentralTime}.

\section{Algorithmic Networks}\label{sectionAN}

\subsection{Discussion on algorithmic networks} \label{subsectionDiscussionundersectionAN}

In this section, we will define a general mathematical model for the study of networked machines which can share information with each other across their respective network while performing their computations. We want to define it in a general sense in order to allow future variations, to add specificities and to extend the model presented in Section \ref{sectionBusyBeaverGame}, while still being able to formally grasp a mathematical analysis of systemic features like the emergence of information and complexity along with its related phenomena: for example, the expected emergent open-endedness in our case (see discussion \ref{subsectionDiscussiononBBIM...} and Section \ref{sectionAEOE}). 

Since, following this general approach, one can see these mathematical models as a merger of algorithmic (and statistical) information theory and complex networks theoretically combining both distributed computing (or multiagent systems) and game theory, we refer to it as `\emph{algorithmic networks}'. The main idea is that a population of formal theoretical machines can use communication channels over the graph's edges. Thus, the graph topology causes this population to be networked. Once the elements of the population start to exchange information, it forms a overarching model for a system composed of interacting subsystems. So, note that algorithmic networks will be networks of algorithms --- which is the reason of its chosen name ---, should each theoretical machine represent a computable system. Indeed, in the present article we will consider each node as a program of a universal Turing machine (see Definitions \ref{BdefU'} and \ref{BdefN_BB}), which justifies calling either the nodes or the elements of the population of an algorithmic network as \emph{nodes/programs} hereafter.

The term `algorithmic network' is also employed to visually represent systems or processes with circuits of algorithms \cite{Marley1997,Ivanishev1997,Korolev2016} in order to tackle the problem of computationally modeling these processes. Differently, we employ the expression with the purpose of mathematically representing a model for networked populations of computable systems from which one can investigate systemic properties and prove theorems. In this way, a possible disambiguation may be using the expression `\emph{circuit-modeling} algorithmic network' for the former case and `\emph{population-systemic} algorithmic network' for our present approach --- on which, for the sake of simplifying our notation, we will use only the expression `algorithmic network' in this article.
Although the usage in the present article shares the general goal of mathematically representing models for systems through networked algorithms, it may be seen as a generalization\footnote{ If one accepts a formalization of algorithms using theoretical machines (e.g., Turing machines).} of the circuit-modeling approach. Note that, as we will discuss in Section~\ref{subsectionDiscussiononGM}, there may be several different aspects of the graph corresponding to different properties of the population such that a ``circuit of algorithms'' would not grasp. For example, there may be different nodes being the same Turing machine (that is, the population of programs may contain repeated elements), there may be no need of a `delay operator' for multi-step loops, and the graph may not be static. In fact, as we will present in Section \ref{sectionBusyBeaverGame}, these and other features will occur in $\mathfrak{N}_{BB}$.

\subsection{Discussion on graphs and complex networks}

One can have populations with very different properties and several different graphs linking them. Therefore, a general mathematical representation for graphs is paramount --- see Definition \ref{defMAG}. Additionally, we need a way to make abstract aspects of these graphs correspond to properties of the population of theoretical machines, which will be formalized in Definition \ref{BdefAN}. 

Aiming at a wider range of different network configurations, as mentioned in Sections \ref{sectionIntro} and \ref{subsectionDiscussionundersectionAN}, we ground our formalism about graph representations on MultiAspects Graphs (MAG) as presented in \cite{Wehmuth2016b}. In this way, one can mathematically represent abstract aspects that could appear in complex networks. For example, dynamical (or time-varying) networks, multicolored nodes and multilayer networks. Moreover, it facilitates network analysis by showing that their aspects can be isomorphically mapped into a classical directed graph. Thus, MAG abstraction has proven to be crucial to establish connections between the characteristics of the network and the properties of the population composed of theoretical machines (see Definition \ref{BdefFunctionbinAN}).

\subsection{Definition of MultiAspect Graphs}

\begin{Bdefinition}\label{defMAG}
	As in \cite{Wehmuth2017,Wehmuth2016b}, let $ G=(\mathscr{A},\mathscr{E}) $ be a 
	MultiAspect Graph (MAG), where $\mathscr{E}$ is the set of existing composite edges of 
	the 
	MAG and $\mathscr{A}$ is a class of sets (or a space), each of which is an \emph{aspect}. 
	Each 
	aspect $ \mathbf{ \sigma } \in \mathscr{A} $ is a finite set and the number of aspects $ p 
	= | 
	\mathscr{A} | $ is called the \emph{order} of $ G $. By an immediate convention, we call a 
	MAG with only one aspect as a \emph{first order} MAG, a MAG with two aspects as a 
	\emph{second order} MAG and so on. Each composite edge (or arrow) $ e \in \mathscr{E} 
	$ 
	may be denoted by an ordered $2p$-tuple $ ( a_1,\dots,a_p, b_1, \dots, b_p ) $, where $ 
	a_i, b_i 
	$ are elements of the $i$-th aspect with $ 1 \leq i \leq p = | \mathscr{A} | $.

	\begin{Bsubnotation}\label{notationAspectsandEdgesofMAGS}
		$ \mathscr{A}( G ) $ denotes the set of aspects of $ G $ and $ \mathscr{E}( G ) $ denotes the \emph{composite edge set} of $ G $. 
	\end{Bsubnotation}
	
	\begin{Bsubnotation}\label{notationithaspectofaMAG}
		We denote the $i$-th aspect of $ G $ as $ \mathscr{A}( G )[i] $. So, $ | \mathscr{A}( G )[i] |$ denotes the number of elements in $ \mathscr{A}( G )[i] $. In order to match the classical graph case, we adopt the convention of calling the elements of the first aspect of a MAG as \emph{vertices}. Therefore, we denote the set $ \mathscr{A}( G )[1] $ of elements of the first aspect of a MAG $ G $ as $ \mathrm{V}(G) $. Thus, a vertex should not be confused with a composite vertex (see Notation~\ref{notationCompositeverticesandedges}).
	\end{Bsubnotation}
	
	\begin{Bsubnotation}\label{notationCompositeverticesandedges}
		The set of all \emph{composite vertices} $ \mathbf{v} $ of $ G $ is denoted by
		\[
		\mathbb{V}( G ) = \bigtimes_{i=1}^{p} \mathscr{A}( G )[i] 
		\]
		\noindent  and the set of all \emph{composite edges} $ e $ of $ G $ is denoted by
		\[
		\mathbb{E}(G) = \bigtimes_{n=1}^{2p}  \mathscr{A}(G)[ { (n-1)\pmod{p} } + 1 )] \text{ ,}
		\]
		\noindent so that, for every ordered pair $ ( \mathbf{u} , \mathbf{v} ) $ with $ \mathbf{u} , \mathbf{v} \in \mathbb{V}( G )  $, we have $ ( \mathbf{u} , \mathbf{v} ) = e \in \mathbb{E}( G )  $. Also, for every $ e \in \mathbb{E}( G )  $ we have $ ( \mathbf{u} , \mathbf{v} ) = e $ such that $ \mathbf{u} , \mathbf{v} \in \mathbb{V}( G )  $. Thus,
		\[
		\mathscr{E}( G ) \subseteq \mathbb{E}(G)
		\] 
	
	\end{Bsubnotation}	
		
	\begin{Bremarknote}\label{noteNodesand vertices}
			The terms \emph{vertex} and \emph{node} may be employed interchangeably in this article. However, we choose to use the term \emph{node} preferentially in the context of networks, where nodes may realize operations, computations or would have some kind of agency, like in real networks. Thus, we choose to use the term \emph{vertex} preferentially in the mathematical context of graph theory.  
	\end{Bremarknote}

	\begin{Bremarknote}
		Note that $\mathscr{E}$ determines the (dynamic or not) topology of $G$.
	\end{Bremarknote}
	
	\begin{Bremarknote}
		Each aspect in $\mathscr{A}$ determines which variant a graph $G$ will be (and how the set $\mathscr{E}$ will be defined). As in \cite{Costa2015a}, we will deal only with time-varying graphs $G_t$ hereafter, so that there will be only two aspects ($|\mathscr{A}|=2$): the set of vertices (or nodes) $\mathrm{V}(G_t)$ and the set of time instants $\mathrm{T}(G_t)$. An element in $ \mathrm{V}(G_t) \times \mathrm{T}(G_t) $ is a \emph{composite vertex} (or composite node). The dynamic graphs $G_t$ will be better explained in Section \ref{sectionBusyBeaverGame} . 
	\end{Bremarknote}
	
\end{Bdefinition} 

\subsection{Discussion on the general model}\label{subsectionDiscussiononGM}
In a broad sense, one can think of an algorithmic network as a theoretical distributed computing representation model in which each node (or vertex) computes using network's shared information, returning a final output\footnote{ Which is our current case. If the maximum number of node cycles is finite, then every node/program must return a final output. } or not\footnote{ An algorithmic network may have a non limited number of node cycles, so that each node may remain returning an endless number of partial outputs. This may be also the case for circuit-modeling algorithmic networks. However, we do not tackle this problem in the present article.} . The computation of each node may be seen in a combined point of view or taken as individuals. Respectively, nodes/programs may be computing using network's shared information to solve a common purpose --- as the classical approach in distributed computing --- or, for example, nodes may be competing\footnote{ This game-theoretical approach will be discussed further in Section \ref{sectionRelatedandFuture}.} with each other. For the present purposes, we are interested in the average fitness (or payoff), and its related emergent complexity that may arise from a process that increases the average fitness. 

An algorithmic network may have several different configurations, so that the following Definition \ref{BdefAN} will not specify how a particular algorithmic network works. Instead, our formalism enables one to represent every (or most) variation of algorithmic networks with the purpose of modeling a particular problem that may arise from a networked complex system. For example, the networked population may be synchronous\footnote{ See Definition~\ref{BdefSynchronous}.} or asynchronous, have a set of information-sharing protocols\footnote{ See Definition~\ref{BdefIFP}.} or none, a randomly generated population\footnote{ See Definition~\ref{BdefRandompopulation}.} or a fixed one, with communication costs or without them\footnote{ The current model of this article does not consider any cost of communication or of using the communication channels to send information.} etc. In addition, the network topology that determines the communication channels may be dynamical\footnote{ See Definition~\ref{BdefTVG}.} , with weighted edges, multilayer etc. In fact, in accordance with the goal of this article, we will prove theorems on a particular model of algorithmic networks which will be defined in Section \ref{sectionBusyBeaverGame} . However, in this section we limit ourselves to present a general definition of algorithmic networks. Also the reader is invited to note that a circuit-modeling algorithmic network \cite{Marley1997,Ivanishev1997,Korolev2016} is one of these possible different configurations of population-systemic algorithmic network (see Section~\ref{subsectionDiscussionundersectionAN}). 

\subsection{Definitions of general algorithmic networks}

\begin{Bdefinition}\label{BdefAN}
	We define an \emph{algorithmic network} $ \mathfrak{N} = (G, \mathfrak{P}, b)$ upon a population of theoretical machines $\mathfrak{P}$, a MultiAspect Graph $G=(\mathscr{A},\mathscr{E})$ and a function $b$ that causes aspects of $G$ to be mapped\footnote{ See Definition \ref{BdefFunctionbinAN} .} into properties of $\mathfrak{ P }$, so that a vertex in $\mathrm{V}(G) $ corresponds one-to-one to a theoretical machine in $\mathfrak{ P }$. The MAG $G$ was previously defined in Definition~\ref{defMAG}, and we will define $\mathfrak{ P }$ and $b$ in Definitions \ref{BdefPopulation} and \ref{BdefFunctionbinAN}, respectively.  
	
	\begin{Bremarknote}
		In the model studied in this article, we will only deal with second order MultiAspect Graphs (MAGs), so that they represent time-varying graphs. See Definitions~\ref{BdefTVG} and \ref{BdefN_BB}. Thus, we choose to refer to MAGs $ G=(\mathscr{A},\mathscr{E}) $ as just \emph{graphs} in this article. 
	\end{Bremarknote}
	
	\begin{Bsubdefinition}\label{BdefPopulation}
		
		Let the \emph{population} $\mathfrak{P}$ be a ``subset'' of $L$ in which repetitions\footnote{ See also Definitions~\ref{BdefHalt} and~\ref{BdefL_BB}. } are allowed, where $L$ is the language on which the chosen theoretical machine $U$ are running. More formally, a population $\mathfrak{P}$ is a \emph{multiset} that has a set  $ X \subseteq L $  as its support set. Each member of this population may receive inputs and return outputs through communication channels.
		
		\begin{Bsubremarknote}
			One may also interpret a population $ \mathfrak{ P } $ as an ordered sequence\footnote{ In which repetitions are allowed.} $  \left( o_1 , \dots , o_i , \dots , o_{ \left| \mathfrak{P} \right| } \right) $, where $ X $ is the support set of the population and $f_o$ is a labeling surjective function
			\[
			\myfunc{ f_o }{ \mathfrak{ P } = \left( o_1 , \dots , o_i , \dots , o_{ \left| \mathfrak{P} \right| } \right)  }{ X \subseteq L }{ o_i }{ f_o( o_i ) = w }
			\]
			\noindent Therefore, it also justifies an equivalent definition of a population $ \mathfrak{ P } $ as the ordered set of labels
			\[
			\mathfrak{ P } = \{ o_i \mid f_o( o_i ) \in X \subseteq L \}
			\]
			\noindent For example, we may assume a labeling defined by an ordered pair such that
			\[
			o_i = ( i , w )
			\]
			\noindent where $ w = f_o( o_i ) $. This way, by informing $ o_i $ as input, one can promptly retrieve all the information about $ f_o( o_i ) $.
			\begin{Bsubsubdefinition}\label{BdefPopulationasaset}
				For the present purposes, since all populations will be finite, randomly generated (see Definition~\ref{BdefP_BB}), and the ordering of elements in this sequence does not have any impact on our lemmas and theorems, we will assume that our populations are random sequences (see Definition~\ref{BdefRandompopulation}) in the form $ \left( o_1 , \dots , o_i , \dots , o_{ \left| \mathfrak{P} \right| } \right)  $. Therefore, we choose preferentially the interpretation of a population as a set $ \{ o_i \mid f_o( o_i ) \in X \subseteq L \} $ with the labeling $ o_i = ( i , w ) $. However, except for changes in notation and in function's domains (e.g., see Definition~\ref{BnoteFunctionBinAN}), all these three definitions are equivalent in the present article.
			\end{Bsubsubdefinition} 
		\end{Bsubremarknote}
		
	\end{Bsubdefinition}
	
	\begin{Bsubremarknote}
		The choice of $L$ and $U$ determines the class of nodes/systems. For example, one may allow only time-bounded Turing machines in the population. In the present work, $ \mathbf{L_U} $ will be a self-delimiting universal programming language for a simple extended universal Turing machine $\mathbf{U'}$ (see Definition \ref{BdefU'} ) --- i.e., an oracle Turing machine --- that returns zero whenever (and only if) a non-halting computation occur.  
	\end{Bsubremarknote}
	
	\begin{Bsubsubdefinition}\label{BdefCycle}
		A \emph{node cycle} in a population $ \mathfrak{ P } $ is defined as a node/program returning an \emph{output} (which, depending on the language and the theoretical machine the nodes are running on, is equivalent to a node completing a halting computation).
		\begin{enumerate}
			\item If this node cycle is not the last node cycle, then its respective output is called a \emph{partial output}, and this partial output is shared (or not, which depends on whether the population is networked or isolated) with the node's neighbors, accordingly to a specific information-sharing protocol (or not);
			\item If this node cycle is the last one, then this output is called a \emph{final output} such that no more information is shared through the network; 
			\item If every node/program in $ \mathfrak{ P } $ has completed its last node cycle, returning its final outputs, and the population $ \mathfrak{ P } $ is running networked by an algorithmic network $ \mathfrak{N} = (G, \mathfrak{P}, b) $, then we say the algorithmic network $ \mathfrak{N} $ as a whole completed an \emph{algorithmic network cycle}.
		\end{enumerate}
		In addition, let 
		\[
		\mathfrak{C} = \bigcup_{ o_i \in \mathfrak{ P } }  \mathfrak{C}( o_i )
		\]
		be the set of the maximum number of node cycles that any node/program $o_i$ in the population $ \mathfrak{P} $ would be able to perform in order to return a final output, where $\mathfrak{C}(o_i)$ is the set of all node cycles that node/program $o_i$ can perform.
	\end{Bsubsubdefinition}

	\begin{Bsubsubdefinition}\label{BdefSynchronousinthelastcycle}
		In the \emph{last node cycle}, every node only returns its \emph{final output}. 
	\end{Bsubsubdefinition}
	
	\begin{Bsubsubremarknote} %And final output
		So, if the algorithmic network is asynchronous, a node cycle can be seen as an individual communication round, not depending on whether its neighbors are still running or not, whereas, if the network is synchronous, a node cycle can be seen as the usual communication round in synchronous distributed computing. 
		Note that one also has a \emph{graph cycle} in graph theory \cite{Diestel2017,Brandes2005a}, i.e., a path with the same starting and ending vertex. Analogously, in \cite{Wehmuth2016b}, a \emph{MAG cycle} is a path between composite vertices that starts and ends on the same vertex. Thus, algorithmic network cycle, node cycle, and graph (or MAG) cycle are three distinct concepts. 
		In the present article, since the studied algorithmic network model in Definition~\ref{BdefN_BB} is synchronous\footnote{ See Definition~\ref{BdefSynchronous}. }, a node cycle will be equivalent to a communication round (as in synchronous distributed computing) and, for every $ o_i , o_j \in \mathfrak{P} $,  we will have that $  \mathfrak{C}( o_i ) =  \mathfrak{C}( o_j ) $ and that the final output of every node is returned in node cycle $ n = \max\{ \mathfrak{C} \}   \in \mathbb{N} $.
		Moreover, an algorithmic network cycle will be equivalent to a halting distributed (or parallel) computation in $ n = \max\{ \mathfrak{C} \}  $ communication rounds. 
		For the sake of simplicity, in this article, if the word ``cycle'' appears alone, then it refers to a node cycle.
	\end{Bsubsubremarknote}
	
	\begin{Bsubdefinition}
		A \emph{communication channel} between a pair of elements from $\mathfrak{P}$ is defined in $\mathscr{E}$ by an edge (whether directed or not) linking this pair of nodes/programs.
		
		\begin{Bsubremarknote}
			A directed edge (or arrow) determines which node/program sends an output to another node/program that takes this information as input. An undirected edge (or line) may be interpreted as two opposing arrows (i.e., a symmetric adjacency matrix). 
		\end{Bsubremarknote}
		
	\end{Bsubdefinition}

	\begin{Bsubdefinition}\label{BdefFunctionbinAN}
		Let 
		\[ \myfunc{b}{ Y \subseteq \mathscr{A}(G) } { X \subseteq Pr(\mathfrak{P}) } { \mathbf{\overline{a}}  } { b( \mathbf{\overline{a}} ) = \mathbf{\overline{p_r}} } \]
		be a function that \emph{maps} a subspace of aspects $Y$ in $\mathscr{A}$ into a subspace of properties $X$ in the set of properties $Pr(\mathfrak{P})$ of the respective population such that there is an surjective function $ f_{V\mathfrak{ P }} $ such that, for every $ (v,\mathbf{ \overline{x} }) \in Y \subseteq \mathscr{A}(G) $ where $ b( v,\mathbf{ \overline{x} } ) = ( o, b_{ dim( Y ) - 1 }( \mathbf{ \overline{x} } ) ) \in X$, we have that
		\[ \myfunc{f_{V\mathfrak{ P }}}{ \mathrm{V}(G)  } { Support( \mathfrak{P} ) } { v } { f_{V\mathfrak{ P }}(v) = w } \]
		\noindent where $v$ is a vertex (or node), $w$ is an element of $ Support( \mathfrak{P} ) $, and $ Support( \mathfrak{P} ) $ is the support set of the multiset/population $ \mathfrak{P} $.
		
		\begin{Bremarknote}
			If $ Y = \mathscr{A}(G)  $, then we just denote
			\[
			\myfunc{b}{ \mathbb{V}( G ) } { X \subseteq Pr(\mathfrak{P}) } { 
			\mathbf{v}  } { b( \mathbf{v} ) = \mathbf{\overline{p_r}} } 
			\]
		\end{Bremarknote}
		
		\begin{Bsubremarknote}\label{BnoteFunctionBinAN}
			Equivalently, in the case of considering a population as an ordered set of labels from a sequence (see Definition~\ref{BdefPopulationasaset}), we have function $ f_{V\mathfrak{ P }} $ as a bijective function such that, for every $ (v,\mathbf{ \overline{x} }) \in Y \subseteq \mathscr{A}(G) $ where $ b( v,\mathbf{ \overline{x} } ) = ( o_i, b_{ dim( Y ) - 1 }( \mathbf{ \overline{x} } ) ) \in X$,
			\[ \myfunc{f_{V\mathfrak{ P }}}{ \mathrm{V}(G)  } { \mathfrak{P} = \{ o_i \mid f_o( o_i ) = w \in L \} }   { v } { f_{V\mathfrak{ P }}(v) = o_i } \]
			\noindent where $v$ is a vertex (or node) and $o_i$ is an element of sequence $\mathfrak{ P }$.
		\end{Bsubremarknote}

	\end{Bsubdefinition}

\end{Bdefinition}

\begin{Bdefinition}
	We say an element $ o_i \in \mathfrak{P} $ is \emph{networked} \textit{iff} there is $ \mathfrak{N} = (G, \mathfrak{P}, b) $, where $ \mathscr{E}(G) $ is non-empty\footnote{ That is, there must be at least one edge connecting two elements of the algorithmic network. }, such that $o_i$ is running on it.
	
	\begin{Bsubdefinition}
		We say $o_i$ is \emph{isolated} otherwise. That is, it is only functioning as an element of $ \mathfrak{P} $ and not of $ \mathfrak{N} = (G, \mathfrak{P}, b) $.
	\end{Bsubdefinition}
\end{Bdefinition}

\begin{Bdefinition}\label{BdefNetworkinput}
	We say that an input $ w \in L $ is a \emph{network input} \textit{iff} it is the only external source of information every node/program receives and it is given to every node/program before the algorithmic network begins any computation. 
	
	\begin{Bremarknote}
		Note that letter $w$ may also appear across the text as denoting an arbitrary element of a language. It will be specified in the assumptions before it appears or in statement of the definition, lemma, theorem or corollary.
	\end{Bremarknote}
\end{Bdefinition}

\section{The Busy Beaver Imitation Game (BBIG)}\label{sectionBusyBeaverGame}

\subsection{Discussion on the Busy Beaver imitation model, emergence, complexity, optimization, and synchronicity} \label{subsectionDiscussiononBBIM...}

In this section, we will formalize a toy model for comparing the emergence of algorithmic complexity from one of the simplest forms of information sharing that percolates through the network: imitation of the fittest.
As a toy model, our theoretical simple object of investigation must be general and abstract, whereas enabling further variations and extensions with the purpose of studying more properties of networked complex systems for example (see Section \ref{sectionDiameter}). In the present section we choose to give a full discussion on how one characterizes the algorithmic network $ \mathfrak{N}_{BB} (N, f, t, \tau, j) $ (formerly denoted only as $ \mathfrak{N}_{BB} $). This way, the reader might get a full picture of underlying ideas and motivations around this model of algorithmic network that plays the Busy Beaver Imitation Game (BBIG).  

Take a randomly generated set of programs. They are linked, constituting a network which is represented by a graph. Each node/program is trying to return the ``best solution'' it can. And eventually one of these nodes/programs end up being generated carrying beforehand a ``best solution'' for the problem in question. This ``best solution'' is spread through the network by a diffusion process in which each node imitates the fittest neighbor if, and only if, its shared information is ``better'' than what the very node can produce. The question is: how much more complexity can this diffusion process generates on the average compared with the best nodes/programs could do if isolated?

A comparison between the complexity of what a node/program can do when networked and the complexity of the best a node/program can do when isolated will give the \emph{emergent algorithmic complexity} of the algorithmic network. In the present case, the networked ``side of the equation'' relies only on the simple imitation of the fittest neighbor. Since this kind of imitating procedure is one of the simplest or ``worst'' ways to use neighbors' partial outputs to get closer to a best solution\footnote{ Not necessarily given by the randomly generated node/program with the higher fitness, since a node/program could use its neighbor's partial output to calculate a even larger integer for example. Thus, this is the reason we say colloquially that simply imitating the fittest neighbor is one of the ``worst''. Nevertheless, we leave the actual mathematical investigation of what would be the less effective way to get averagely closer to a better solution for future research. }, we are interested in obtaining the emergent algorithmic complexity that arises from a ``worst'' networked case compared with the best isolated case. 

Indeed, a possible interpretation of the diffusion described to the above is \emph{average optimization through diffusion} in a random sampling. Whereas optimization through selection in a random sampling may refer to evolutionary computation or genetic algorithms for example \cite{Fogel2005} (in which the best solution eventually appears and remains sufficiently stable over time), optimization is obtained in our model in a manner that a best solution also eventually appears, but is diffused over time in order to make every individual as averagely closer to the best solution as they can. Therefore, the underlying goal of this process would be to \emph{optimize the average fitness} of the population using the least amount of diffusion time --- we will come back to this in Section \ref{sectionRelatedandFuture}. This type of optimization would particularly be better suited for the cases when adding new nodes/programs (or subparts) is cheaper than adding new cycles (or computational resources).     

But how does one measure complexity of what a node/program computes? We ground our complexity analysis on algorithmic information theory (AIT). As a well-established mathematical field in theoretical computer science and information science, it has proven to be a powerful tool to achieve analytical results, proving sound lemmas and theorems, in order to investigate and build theories on how complexity changes over time. See \cite{Chaitin2012,Chaitin2013,Chaitin2014,Abrahao2015,Abrahao2016,Abrahao2016c, Hernandez-Orozco2016,Hernandez-Orozco2017,Hernandez-Orozco2018}. These works present mathematical results on evolutionary \emph{open-endedness} for computable complex systems. That is, a process in which systems that could be fully simulated on a theoretical Turing machine gain a unlimited amount of complexity over time as random mutations and natural selection\footnote{ Which can be mathematically described by a fitness function and a selection procedure.} apply on them. 

Therefore, as mentioned before, a different form of open-endedness plays the central role in the fundamental characteristics and consequences of the results we will present here (see Section \ref{sectionAEOE} ): the \emph{expected emergent open-endedness}. Instead of asking about how complex systems become over time, as in evolutionary open-endedness, we are focusing another akin question: how complex systems (in fact, systems composed of interacting systems \cite{Shoham2008} \cite{Axelrod1997}) become when the number of its subparts increases? Or, more specifically in our case, how much more emergent complexity arises on the average when the number of networked systems increases? In other words, we are interested in how synergy \cite{Griffith2012,Griffith2014,Quax2017,Bertschinger2014} (see also discussion \ref{sectionRelatedandFuture}) among interacting systems may have an impact on the emergence of complexity. 

In order to tackle this problem we choose to first work under the framework of algorithmic complexity, as in \cite{Chaitin2012,Chaitin2013,Chaitin2014,Abrahao2015,Abrahao2016,Abrahao2016b,Abrahao2016c,Hernandez-Orozco2016}. Furthermore, it will give a direct way to measure fitness of each node/program as explained below. Despite being a theoretical and abstract model for complex systems, it gives the solid foundation and fruitful framework to develop further quasi-isomorphic extensions from resource-bounded --- with a closer application to computer simulations --- and hipercomputable versions, as analogously did in \cite{Abrahao2015,Abrahao2016,Abrahao2016c}. However, in this article the population of nodes/programs will be composed of arbitrary programs running on an arbitrarily chosen universal Turing machine. It makes the Definition~\ref{BdefAlgComp} of algorithmic complexity straightforwardly applicable. But, analogously in \cite{Chaitin2012}, it forces the need of a simple and restrict hipercomputable procedure to deal with eventual non-halting programs (see Definitions \ref{BdefU'} and \ref{Bdefsensitivetooracles}).    

The second toy-modeling property that our algorithmic network $ \mathfrak{N}_{BB} (N, f, t, \tau, j) $ will have is \emph{synchronicity}, which will be defined as a population property in Definition~\ref{BdefSynchronous} . As in synchronous distributed computing, communications are forced to happen at the same time, so that only at the end of each node cycle (see Definition~\ref{BdefCycle}) each node/program is allowed to exchange information by sending partial outputs and receiving partial outputs from its neighbors as input for the next cycle. Indeed, an asynchronous version of our results and how it relates to one presented here are paramount for future research, but it is not in the present scope.

\subsection{Discussion on dynamic graphs and measures of diffusion}\label{subsectionDiscussiononDGMD}

As in Section \ref{sectionAN}, we will start defining topological properties of networks. Following a pursuit of overarching mathematical theorems, we choose to deal with \emph{time-varying} (or dynamical) directed graphs \cite{Pan2011,Guimaraes2013,Costa2015a}. The static case is covered by a particular case of dynamical networks in which topology does not change over time --- see Definition~\ref{BdefStaticNetwork}. And the undirected case can be seen as a graph in which each undirected edge (or line) represents two opposing directed edges (or arrows). Moreover, the dynamical case sets proper theoretical foundations for future research, should one be interested in studying systemic emergent properties of algorithmic networks with a cycle-varying population for example (i.e., an algorithmic network that changes its population size as the number of cycles increases).

Since we aim a formalization of a model for optimizing diffusion with the purpose of investigating emergent complexity, we need a diffusion measure that can be applied on dynamical networks. \emph{Cover time} is a diffusion measure that gives the average time intervals in which a fraction $\tau$ of the nodes is ``infected'' \cite{Guimaraes2013,Costa2015a}. Besides being useful in order to measure diffusion on dynamical networks, it offers some advantages on domain conditions that other measures like the average geodesic distance (or average shortest path length) does not. See Note \ref{BnoteCTconditions1}.
The algorithmic networks that we will define below get their graph topologies from a family of dynamical graphs that has a certain cover time function as a common feature. This function has the domain on population sizes, time instants and fractions $\tau$ of nodes (see Definition \ref{BdefFamilyG}).  

Next we will define the properties of the population composed of nodes/programs. Remember in Section \ref{sectionAN} that both aspects of a graph and properties of a population on a theoretical machine (as well as a function $b_j$ that maps aspects into properties) are necessary in order to properly formalize an algorithmic network. As mentioned in the beginning of the present section, the diffusion process must rely on an imitation of the fittest. So one does not only need to define how fitness\footnote{ Or payoff in a game-theoretical approach.} is measured but also to mathematically state the exact procedure from which each diffusion step is determined.  

\subsection{Definitions on networks and graphs}

\begin{Bdefinition}\label{BdefTVG} 
	As defined in \cite{Costa2015a}, let $ G_t=(\mathrm{V},\mathscr{E},\mathrm{T}) $ be a \emph{time-varying graph} (TVG), where $\mathrm{V}$ is the set of vertices (or nodes), $\mathrm{T}$ is the set of time instants, and $\mathscr{E} \subseteq \mathrm{V} \times \mathrm{T} \times \mathrm{V} \times \mathrm{T}$ is the set of edges\footnote{ That is, the set of existent (second order) composite edges. }.

	\begin{Bsubdefinition}
		We define the set of time instants of the graph $G_t=(\mathrm{V},\mathscr{E},\mathrm{T})$ as $ \mathrm{T}(G_t)=\{t_0, t_1, \dotsc, t_{|\mathrm{T}(G_t)|-1} \} $.
	\end{Bsubdefinition}
	
	\begin{Bsubnotation}
		Let $\mathrm{V}(G_t)$ denote the set of vertices of $G_t$.
	\end{Bsubnotation}
	
	\begin{Bsubnotation}
		Let $|\mathrm{V}(G_t)|$ be the size of the set of vertices in $G_t$.
	\end{Bsubnotation}

	\begin{Bsubnotation}
		Let $ G_t( t ) $ denote the TVG $G_t$ at time instant $ t \in \mathrm{T}(G_t) $, if $G_t$ is a snapshot dynamic network.\footnote{ In any event, our results in this article applies to dynamic networks in general.}
	\end{Bsubnotation}

	\begin{Bremarknote}
		Therefore, a TVG  $G_t=(\mathrm{V},\mathscr{E},\mathrm{T})$ is a particular case of a second order MAG (see Definition~\ref{defMAG} and \cite{Costa2015a,Wehmuth2018}). The first aspect is the set of vertices\footnote{ See Notation~\ref{notationithaspectofaMAG}.  } and the second aspect is the set of time instants.
	\end{Bremarknote}

	\begin{Bsubremarknote}
		For the sake of simplifying our notations in the theorems below, one can take a natural ordering for $ \mathrm{T}(G_t) $ such that
		\[ 
		\forall i \in \mathbb{N} \; \left( \, 0 \leq i \leq | \mathrm{T}(G_t) | - 1 \implies t_i = i + 1 \, \right) 
		\]
	\end{Bsubremarknote}
	
\end{Bdefinition}

\begin{Bdefinition}
	Let $d_t(G_t, t_i, u, \tau)$ be the minimum number of time instants\footnote{ Not taking into account the starting time instant $ t_i $. } (steps, time intervals \cite{Pan2011} or, specially in the present article, cycles) for a diffusion starting on vertex $u$ at time instant $t_i$ to reach a fraction $\tau$ of vertices in the graph $G_t$.
\end{Bdefinition}

\begin{Bdefinition}\label{BdefCovertime}
	As in \cite{Costa2015a}, we define the \emph{cover time} for time-varying graphs as
	
	\begin{equation*}
	CT(G_t,t_i, \tau)=
	\begin{cases}
	\frac{1}{ | \mathrm{V}(G_t) | } \displaystyle\sum_{ \substack {u \in \mathrm{V}(G_t)} } d_t(G_t, t_i, u, \tau) & \quad \text{if fraction } \tau \text{ is reached} ; \\
	\infty & \quad \text{otherwise} ; 
	\end{cases}
	\end{equation*}

	\begin{Bsubnotation}\label{BdefTemporaldiameter}
		Let $D(G_t,t)$ denote the \emph{temporal diffusion diameter} of the graph $G_t$ taking time instant $t$ as the starting time instant of the diffusion process. That is,
		\[
		D(G_t,t) =
		\begin{cases}
		max \{ x \mid \, x=d_t(G_t,t,u,1) \, \land \, u \in \mathrm{V}(G_t) \} \quad\quad  \text{otherwise}; \\
		\infty \quad \quad if \, \,  \exists u \in \mathrm{V}(G_t) \forall x \in \mathbb{N}  \big( x \neq d_t(G_t,t,u,1) \big);
		\end{cases}
		\]
	\end{Bsubnotation}
\end{Bdefinition}

\begin{Bdefinition}\label{BdefStaticNetwork}
	Let $G_s=(\mathrm{V},\mathscr{E}, \mathrm{T})$ be a \emph{static network}, where $G_s$ is a TVG in which, for every $ t_i, t_j, t_k, t_h \in \mathrm{T} $, 
	\[
	\{ ( v_i, v_j ) \mid ( v_i, t_ i , v_j , t_j ) \in \mathscr{E} \}
	=
	\{ ( v_i, v_j ) \mid ( v_i, t_ k , v_j , t_h ) \in \mathscr{E} \}
	\]
	
	\begin{Bremarknote}
		A general way to define a classical static graph is from collapsing all the aspects in $ \mathscr{A} $ into just one aspect (i.e., into the set of vertices/nodes $ \mathrm{V} $) where the set of edges of this MAG is invariant under any relation other than the set of vertices/nodes --- see also sub-determination in~\cite{Wehmuth2016b}. Thus, a static network is a static graph $ G=( \mathrm{V}, \mathrm{E} ) $ for all relations depending only on its set of edges $ E $. 
	\end{Bremarknote}
	
\end{Bdefinition}

%%%%%%%
\begin{Bdefinition}\label{BdefFamilyG}
	Let
	\begin{align*}
	\mathbb{G}(f,t,\tau) = \{ G_t \mid  & \, i = |\mathrm{V}(G_t)| \,\land \, f(i,t,\tau)=CT(G_t,t,\tau) \, \land \\
	& \land \, \forall i \in \mathbb{N^*} \exists!G_t \in \mathbb{G}( f,t,\tau )  (\, |\mathrm{V}(G_t)|=i \,) \}
	\end{align*}
	\noindent where 
	\[
	\myfunc{f}{ \mathbb{N^*} \times X \subseteq \mathrm{T}(G_t) \times Y \subseteq \; \left]0,1\right] } { \mathbb{N} } { (x,t,\tau) } { y }
	\]

	\noindent be a \emph{family} of unique sized time-varying graphs which shares $f(i,t,\tau)=CT(G_t,t,\tau)$, where $i$ is the number of vertices, as a common property. The finite number of vertices of each graph in this family may vary from $1$ to $\infty$.
	
	\begin{Bremarknote}\label{BnoteWeakeningfamilyG}
		The results in this article can be equally achieved with a weaker assumption within this family: one can instead define a family of graphs upon an arbitrary function $ f $ such that $ f(i,t,\tau)=CT(G_t,t,\tau) $ \emph{only} when the network size $ N $ goes to $ \infty $. 
	\end{Bremarknote}

	\begin{Bremarknote}\label{BnoteCTconditions1}
		Note that a well defined family $ \mathbb{G}(f,t,\tau) $	for every $ t \in \mathrm{T}(G_t) $ 
		and $ \tau \in \left] 0,1 \right] $ implies that $ CT(G_t,t,\tau) \neq \infty $. Hence, in this case, 
		the graphs $ G_t \in \mathbb{G}(f,t,\tau) $ must be strongly temporal-connected for every 
		time instant (and, therefore, also cyclic) --- see \cite{Wehmuth2016b} ---, so that every 
		vertex can reach any other vertex given a sufficient amount of time intervals. In fact, for our 
		main results in Section~\ref{sectionProofs} to hold, one may weaken these conditions on 
		$\tau$ and $t$, because our theorems assume arbitrary values as long as they belong to 
		their respective intervals (or set) and the cover time has a function $f$ well defined for that 
		respective domain. In fact, this is one of the advantages of using the cover time as a 
		diffusion metric for dynamic graphs.
	\end{Bremarknote}

\end{Bdefinition}
%%%%%%%

\subsection{Discussion on the population playing the BBIG}\label{subsectionDiscussiononPPBBIG}
As in \cite{Chaitin2012,Chaitin2013,Chaitin2014,Abrahao2015,Abrahao2016,Abrahao2016c,Abrahao2016b}, we use the \emph{Busy Beaver function} as our complexity measure of \emph{fitness}. Naming larger integers relates directly to increasing algorithmic complexity \cite{Chaitin2012}, which will allow us to establish crucial probabilistic and statistical properties of a randomly generated population in Lemma \ref{BlemmaSLLNandAIT}. Remember that there is a diffusion of the ``best solutions'' during the cycles. Now the ``best solution'' assumes a formal interpretation of fittest final output (or payoff). The choice of the word ``solution'' for naming larger integers now strictly means a solution for the Busy Beaver problem. Also note that several uncomputable problems are equivalently reduced to the Busy Beaver, including the halting problem. Thus, these mathematical features supports the Busy Beaver function as a sound and meaningful choice for a fitness function for a toy model \cite{Chaitin2013,Abrahao2015,Prokopenko2017}. Not only on a resource-boundless case (e.g., Turing machines) in which finding the best solution might be reducible to a first order uncomputable problem (in the Turing hierarchy) --- which is the case presented in this article ---, but also for more realistic resource-bounded versions of the Busy Beaver, as shown in \cite{Abrahao2015,Abrahao2016,Abrahao2016c,Zenil2016,Delahaye2012}. Such resource-bounded versions may be useful to model optimization problems in which finding the best solution falls under a higher time complexity class\footnote{ Note that time complexity refers to computational complexity in which computation is limited by a certain amount of time. Algorithmic complexity should not be confused with time/space complexity.} and, as we will mention in Section \ref{sectionRelatedandFuture}, we leave for future research.

Thus, with a fixed fitness function that works as a universal\footnote{ Note that, as a universal parameter, this fitness function may not work as a measure of how well adapted a system is in respect to its respective environment or in respect to generate more offsprings. In this manner, it is not in our present scope to discuss the problem of measuring fitness and adaptation in adaptive complex systems.} parameter for every node/program's final (and partial) output it makes sense to have an interpretation of these running algorithmic networks $ \mathfrak{N}_{BB} (N, f, t, \tau, j) $ as playing a \emph{network Busy Beaver game}: during the cycles each node is trying to use the information shared by its neighbors to return the largest integer it can. The larger the final output integer the better the payoff (or fitness). However, only after we define the language and the theoretical machine in \ref{BdefU'} the notions of partial output and algorithmic complexity --- necessary for a fitness measure --- will be well-defined. 

Furthermore, a definition of the \textit{imitation} part of the game is still required. Under what circumstances should a node/program imitate a neighbor? How the best partial outputs are diffused through the network? We will give a formal definition of the \emph{information-sharing protocols} that each node/program must follow when networked in \ref{BdefIFP}. But before we will define the language and universal Turing machine for a population of self-delimiting programs. Doing so, it makes the \emph{algorithmic complexity} --- as mentioned in open-endedness --- well-defined as a direct consequence and, moreover, its correspondent algorithmic probabilities ground our definition of \emph{randomly generated population}.  

We use the term \emph{protocol} as an abstraction of its usage in distributed computing and telecommunications. A protocol is understood as a set of rules or algorithmic procedures that nodes/program must follow at the end of each cycle when communicating. For example, it can be seen as the ``rules for the communications'' under a game-theoretical perspective. So there must be a computable procedure determining what each node/program do when receiving inputs and sending outputs to its neighbors. In fact, as our model of algorithmic network works under a simple imitation of the fittest neighbor, the protocol which must be followed by every node/program determines the exact procedure in doing this imitation. Hence, we call this global information-sharing protocol as \emph{imitation-of-the-fittest protocol} (IFP). We will define these simple algorithmic procedures in \ref{BdefIFP} . The main idea is that each node/program $o_i$ compares its neighbors' partial output (that is, the integer they have calculated in the respective cycle) and runs the program of the neighbor that have output the largest integer if, and only if, this integer is larger than the one that the node/program $o_i$ has output. Since $ \mathfrak{N}_{BB} (N, f, t, \tau, j) $ is playing the Busy Beaver game on a network while limited to simple imitation performed by a randomly generated population of programs, we say it is playing a \emph{Busy Beaver imitation game}.

Finally, we end this section by using these previous definitions  and the definition of the function $b_j$, which ``binds together'' network and programs, in order to define $ \mathfrak{N}_{BB} (N, f, t, \tau, j) $ in \ref{BdefN_BB} . It is an algorithmic network populated by $N$ nodes/programs (that constitutes population $ \mathfrak{P}_{BB} (N) $) such that, after the first (or $c_0$ cycles) cycle, it starts a diffusion\footnote{ There are other diffusions too, since it is possible that two randomly generated neighbors are not close to the node/program with the highest fitness and have different integers as first partial outputs. However, only the one from the biggest partial output is independent of neighbor's partial outputs, so that it disseminates in any situation.} process of the biggest partial output (given at the end of the first cycle) determined by a time-varying graph $G_t$ that belongs to a family of graphs $ \mathbb{G}( f, t, \tau ) $. Then, at the last time instant diffusion stops and one cycle (or more) is spent in order to cause each node to return a final output.

\subsection{Definitions on languages, Turing machines, and algorithmic information theory}
%%%%

\begin{Bnotation}
	Let $\lg(x)$ denote the binary logarithm $\log_{2}(x)$.
\end{Bnotation}

\begin{Bnotation}
	Let $ |x| $ denote the length of a finite string, if $ x \in  \{ 0 , 1 \}^*  $. In addition, let $ | X | $ denote the number of elements in a set, if $ X $ is a finite set.
\end{Bnotation}

\begin{Bnotation}\label{Review this}
	Let $ (x)_2 $ denote the binary representation of the number $ x \in \mathbb{N} $. In addition, let $ (x)_{L} $ denote the representation of the number $ x \in \mathbb{N} $ in language $ L $.
\end{Bnotation}

\begin{Bnotation}\label{BdefFunctionU}
	Let $ \mathbf{U}(x) $ denote the output of a universal Turing machine $\mathbf{U}$ when $x$ is given as input in its tape. Thus, $ \mathbf{U}(x) $ denotes a \emph{partial recursive} function
	\[
	\myfunc{ \varphi_{\mathbf{U}} }{ L }{ L }{ x }{ y = \varphi_{\mathbf{U}}(x) } \text{ ,}
	\]  
	\noindent where $L$ is a language. In particular, $ \varphi_{\mathbf{U}}(x) $ is a \emph{universal} partial function \cite{Rogers1987,Li1997}. Note that, if $x$ is a non-halting program on $\mathbf{U}$, then this function $\mathbf{U}(x)$ is undefined for $x$.
	
	\begin{BnotationunderBnotation}\label{BdefFunctionphi}
		Wherever number $ n \in \mathbb{N} $ appears in the domain or in the codomain of a partial (or total) function
		\[
		\myfunc{ \varphi_{ \mathcal{U} } }{ L }{ L }{ x }{ y = \varphi_{ \mathcal{U} }(x) } \text{ ,}
		\]
		\noindent where $ \mathcal{U} $ is a Turing machine, or an oracle Turing machine, running on language $L$, it actually denotes
		\[
		\left( n \right)_{ L }
		\]
	\end{BnotationunderBnotation}
\end{Bnotation}

\begin{Bnotation}\label{BdefConcatenation}
	Let $ \mathrm{\textbf{L}}_{\mathbf{U}} $ denote a recursive binary self-delimiting universal programming language for a universal Turing machine $\mathbf{U}$ such that there is a concatenation of strings $w_1, \dots , w_k$ in the language $ \mathrm{\textbf{L}}_{\mathbf{U}} $, which preserves\footnote{ For example, by adding a prefix to the entire concatenated string $ w_1 w_2  \dots  w_k $ that encodes the number of concatenations. Note that each string was already self-delimiting. See also \cite{Abrahao2016,Abrahao2016c}. } the self-delimiting (or prefix-free) property of the resulting string, denoted by 
	\[
	w_1 \circ  \dots  \circ w_k \in \mathrm{\textbf{L}}_{\mathbf{U}}
	\]
	In addition, $  \mathbf{L_U} $ is a complete binary code with
	\[
	\sum\limits_{ p \in \mathbf{L_U} } \frac{1}{ 2^{ | p | } } = 1
	\]
	\begin{BnoteunderBnotation}
		The reader may also note that this self-delimiting-preserving concatenation ``$ \circ $'' is just one example of recursive pairing bijective function $ \left< \cdot \, , \, \cdot \right> $, as in \cite{Li1997,Downey2010}. We know that this pairing function can be extended to $ \left<   \, \cdot \, , \, \left< \, \cdot \, , \, \cdot \, \right> \right> $ and, then, to an ordered triple $ \left< \, \cdot \, , \, \cdot \,  \, , \, \cdot \,\right> $. This way, this procedure can be recursively applied with the purpose of defining finite ordered tuples $ \left< \cdot \, , \, \dots \, , \, \cdot   \right> $. In addition, choosing between two distinct recursive pairing bijective functions $ \left< \cdot \, , \, \cdot \right>_1 $ and $ \left< \cdot \, , \, \cdot \right>_2 $, can only affect the algorithmic complexity\footnote{ See Definition~\ref{BdefAlgComp}. } by
		\[
		A( \left< w_1 \, , \, w_2 \right>_1 ) = A( \left< w_1 \, , \, w_2 \right>_2 ) \pm \mathbf{O}(1)
		\] 
		\noindent Therefore, the reader may equivalently replace\footnote{ Along with the appropriate re-interpretation of what is prefixes or suffixes in language $ \mathbf{L_U} $. }
		\[
		w_1 \circ \dots \circ w_k
		\]
		\noindent with
		\[
		\left< w_1 \, , \, \dots \, , \, w_k \right>
		\]
		\noindent in the present article without affecting the final result. 
	\end{BnoteunderBnotation}

	\begin{BnoteunderBnotation}
		In fact, the self-delimiting-preserving concatenation ``$ \circ $'' may be seen as a ``marked concatenation'' version for self-delimiting binary languages.\footnote{ See \cite{Li1997} for unmarked concatenation. } It ensures some little advantages that were exploited in \cite{Abrahao2016,Abrahao2016c}, like the properties
		\[
		\left|w_i  \right| < \left| w_1 \circ \dots \circ w_k \right| \text{ where $ k \geq i \geq 1 $}
		\]
		\noindent and
		\[
		\left| w_1 \circ \dots \circ w_k \right| \leq  \mathbf{O}( \lg( k ) )  + \left| w_1 \right| + \dots +  \left| w_k \right| \leq  \mathbf{O}(k)  + \left| w_1 \right| + \dots +  \left| w_k \right|
		\]
		\noindent where, for every $ i \in \mathbb{N} $, $ w_i $ is self-delimited. In any case, these are not relevant to the present article.
	\end{BnoteunderBnotation}
\end{Bnotation}

\begin{Bnotation}\label{BdefAlgComp}
	The (prefix) \emph{algorithmic complexity} (Kolmogorov complexity, program-size complexity or Solomonoff-Komogorov-Chaitin complexity) of a string $ w \in \mathrm{\textbf{L}}_\mathbf{U} $, denoted by $A(w)$, is the length of the shortest program $w^* \in \mathrm{\textbf{L}}_\mathbf{U}$ such that $ \mathbf{U}(w^*) = w $.
	
	\begin{BnoteunderBnotation}
		The reader may also find in the literature the prefix algorithmic complexity denoted by $H(w)$ or --- more frequently used --- $K(w)$. As introduced in Section \ref{sectionIntro} and presented in Section \ref{sectionRelatedandFuture}, this work might have several intersections with other fields in future work. Thus, we choose a self-explaining approach on notation in order to avoid ambiguity and notation conflicts in future work, such that we would choose to denote the (prefix) algorithmic complexity/information by $ I_A( w ) $. However, for the sake of simplifying our notation, we choose to denote it only by $ A( w ) $ in the present article.  In fact, $ I_A( w ) $ (i.e., $A(w)$) may be also interpreted as the algorithmic information contained in a object about itself \cite{Li1997}. To this end, note that we have from \cite{Li1997,Downey2010} that, if $ K(w) $ denotes the prefix algorithmic complexity of $w$, $ K( w' | w ) $ denotes the conditional prefix algorithmic complexity of $w'$ given $w$, $ I_K( w : w' ) = K(w') - K( w' | w ) $ denotes the K-complexity of information in $w$ about $w'$, and $ I_A( w ; w' ) = K(w') - K( w' | w^* )  $ denotes the mutual algorithmic information of two objects $w$ and $w'$, then $ I_K( w : w ) = K( w ) - \mathbf{O}(1) $ and $ I_A( w ; w ) = K(w) - \mathbf{O}(1)  $. Thus, the reader may also choose to define $ I_A( w ) $ as an equivalence class in which
		\begin{gather*}
			\left| I_A( w ) -  I_K( w : w ) \right|=\mathbf{ O }(1) \\
			\text{and} \\
			\left| I_A( w ) -  I_A( w ; w ) \right|=\mathbf{ O }(1) \\
		\end{gather*}
		\noindent This way, we will have that $ \left| I_A( w ) -  K(w) \right|=\mathbf{ O }(1)  $ and, therefore, the results of the present article hold anyway.
	\end{BnoteunderBnotation}

\end{Bnotation}

% Revised 1

\begin{Bdefinition}\label{BdefU'}
	Let $ \mathbf{L_U} $ be a recursive binary self-delimiting universal programming language $ \mathrm{\textbf{L}}_{\mathbf{U}} $ (as in Notation~\ref{BdefConcatenation}) for a universal Turing machine $\mathbf{U}$, where there is a constant $ \epsilon \in \mathbb{R} $, with $ 0 < \epsilon \leq 1 $, and a constant $  C_{L} \in \mathbb{N} $ such that, for every $ N \in \mathbb{N} $, 
	\[
	A(N) \leq \lg(N) + ( 1+\epsilon )\lg(\lg(N)) + C_{L}
	\] 
	\noindent We define an oracle\footnote{ Or any hypercomputer with a respective Turing degree higher than or equal to $\mathbf{1} $.} Turing machine $\mathbf{U'}$ such that, for every $w \in \mathrm{\textbf{L}}_{\mathbf{U}}$,
	\[ \mathbf{U'}(w) =
	\begin{cases}
	{\mathbf{U}}(w) \text{``} + 1\text{''} & \quad \text{  if  } \mathbf{U} \text{ halts on } w \\
	\text{``} 0 \text{''} & \quad \text{  if  } \mathbf{U} \text{ does not halt on } w \\
	\end{cases}
	\]
	
	\begin{Bremarknote}
		Since $ \mathbf{L_U} $ is self-delimiting and $ {\mathbf{U}}(w) \in \mathbf{L_U} $, we have that the operator $ \text{``}+ 1\text{''} $ actually means the successor operator in an arbitrary recursive enumeration of language $ \mathbf{L_U} $. In the same manner, we have that $ \text{``} 0 \text{''} $ actually means $ \left( 0 \right)_{ \mathbf{L_U} } $.
	\end{Bremarknote}

	\begin{Bremarknote}
		The oracle Turing machine is basically (except for a trivial bijection) the same as the chosen universal Turing machine. The oracle is only triggered to know whether the program halts or not in first place. Also note that $ \mathbf{U'}( w ) $ is a \emph{total} function, and not a partial function\footnote{ See Definition \ref{BdefFunctionU}.} as $ \mathbf{U}(w) $ --- this property will be important in Definition~\ref{BdefEAC}. That is, from Notations~\ref{BdefFunctionU} and~\ref{BdefFunctionphi}, we will have that
		\[
		\myfunc{ \varphi_{\mathbf{U'}} }{ \mathbf{L_U} }{ \mathbf{L_U} }{ x }{ y = \varphi_{\mathbf{U'}}(x) }
		\] 
		\noindent is a \emph{total} (non-recursive or \emph{hypercomputable}) function.
	\end{Bremarknote}
	
	\begin{Bremarknote}
		Note that, from algorithmic information theory (AIT), we know that the algorithmic complexity\footnote{ See Definition \ref{BdefAlgComp}. }$ A(\mathbf{U'}(w)) $ only differs from $ A(\mathbf{U}(w)) $ by a constant, if $\mathbf{U}$ halts on $w$. This constant is always limited by the length of the shortest program that always performs the operation $ \text{``}+1\text{''} $ (or ``subtracts'' $ 1 $, whichever is larger) to any other halting computation. Therefore, both machines belong to an algorithmic complexity equivalence class (the modulus of the subtraction upper bounded by a constant\footnote{ See also the invariance theorem in~\cite{Li1997}. }) everytime $w$ is a halting program. 
		This is the reason why the algorithmic complexity of the final outputs of nodes/programs in $ \mathfrak{N}_{BB} (N, f, t, \tau, j) $ only differ by a constant, should nodes be halting programs during its respective cycles. Also note that, since a non-halting program $ w' $ gives an output always equal to zero when running on machine $ \mathbf{U'} $,  the algorithmic complexity $ A(\mathbf{U'}(w')) $ --- which is defined for machine $ \mathbf{ U } $ in Definition~\ref{BdefAlgComp} --- of the output of $w'$ on $ \mathbf{U'} $ is always equal to a constant (see Lemma~\ref{BlemmaComplexityonBarHalt}). Then, these make Lemma~\ref{BlemmaComplexityp_i} and the Definition~\ref{BdefEAC} sound.
	\end{Bremarknote}

\end{Bdefinition}

%%%%%%%% New oracle machine %%%%%

\subsection{Definitions on the populations of algorithmic networks}
%%% Definitions on the populations

\begin{Bdefinition}\label{BdefRandompopulation}
	We say a population $ \mathfrak{ P } $ defined on language $ L \subseteq \mathbf{L_U} $ is \emph{randomly generated} \textit{iff} $ \mathfrak{ P } $ is a sequence of events generated by $ | \mathfrak{ P } | $  independent and identically distributed (i.i.d.) random trials accordingly to a \emph{program-size probability distribution} on the set $ L \subseteq \mathbf{L_U} $ such that $ X_{ \mathfrak{ P } } \subseteq L $, where $ X_{ \mathfrak{ P } } $ denotes the set (i.e., language) that is the support set of the multiset (i.e., population) $ \mathfrak{ P } $, and there is a fixed constant $ C \in \mathbb{R}$ and a probability measure $ \mathbf{P} \left[ \cdot \right] $ such that, for every $ p \in L $, 
	\[
	\mathbf{P} \left[ \text{ ``program } p \text{ occur''} \right] = C \frac{1}{2^{ | p | }}
	\]
	
	\begin{Bremarknote}
		Let $ X_{ \mathfrak{ P } } \subseteq \mathbf{L_U} $ denote the set (i.e., language) that is the support of the multiset\footnote{ If one is assuming a population as a sequence, just replace $ X_{ \mathfrak{ P } } $ with $ \mathfrak{ P } = \{ o_i \mid f_o( o_i ) = w \in \mathbf{L_U} $ and replace $p$ with $ f_o( o_i ) $ in $ \mathbf{P} $ and $ \mathbf{P'} $ below. } (i.e., population) $ \mathfrak{ P } $.
		In other words, a randomly generated population $ \mathfrak{ P } $ from language $ L $ is a single outcome of a finite discrete-time i.i.d. stochastic process (i.e., a sequence of i.i.d. random variables), where language $ L $ is the state space from which the random variables take values and 
		\[
		\myfunc{ \mathbf{P} }{ L }{ \left[ 0 , 1 \right] \subset \mathbb{R} }{ p }{ C \frac{1}{2^{ | p | }} }
		\]
		\noindent is the discrete \emph{probability measure}.
		Thus, since language $ \mathbf{L_U} $ is self-delimiting and it is a complete code, we will have that (in general, when $ C \neq 1 $) the function
		\[
		\myfunc{ \mathbf{P'} }{ X_{ \mathfrak{ P } } }{ \left[ 0 , 1 \right] \subset \mathbb{R} }{ p }{ \frac{1}{2^{ | p | }} }
		\]
		\noindent is a discrete \emph{probability semimeasure}.
	\end{Bremarknote}
	
	\begin{Bremarknote}
		The constant $C$ is important for us because it allows us to characterise population $ \mathfrak{P}_{BB}(N) $ in Definition \ref{BdefP_BB} as randomly generated, taking into account that there are information-sharing protocols that could not be previously determined. 
		Otherwise, if one is not considering any information-sharing protocol, one could just obtain a discrete probability measure from
		\[
		\myfunc{ \mathbf{P} }{ L= \mathbf{L_U} }{ \left[ 0 , 1 \right] \subset \mathbb{R} }{ p }{ \frac{1}{2^{ | p | }} } \text{ ,}
		\]
		\noindent which is equivalent to a classical \emph{Lebesgue measure} in algorithmic information theory \cite{Downey2010,Calude2002,Li1997}.
		\noindent However, our forthcoming proofs stems from the idea that only the suffixes---note that the elements of the population $ \mathfrak{P}_{BB}(N) $ are in the form ``$ P_{prot} \circ p $''---were randomly generated, and that the global information-sharing protocol in Definition~\ref{BdefIFP} was previously fixed (i.e., determined) as an assumption in our Lemmas, Theorems and Corollaries. 
		Hence, this constant $C$ is not taken into account in the present work. 
		Therefore, since language $ \mathbf{L_U} $ is a complete binary code, we assume
		\[
		\myfunc{ \mathbf{P} }{  \mathbf{L}_{BB}  = \{  P_{prot} \circ p \mid p \in \mathbf{L_U} \} \subseteq \mathbf{L_U} }{ \left[ 0 , 1 \right] \subset \mathbb{R} }{ P_{prot} \circ p }{ \frac{1}{2^{ | p | }} }
		\]
		\noindent which is a discrete probability measure.
	\end{Bremarknote}
	
	\begin{Bremarknote}
		Nevertheless, even if language $ \mathbf{L_U} $ is \emph{not} a complete code such that strict inequality holds in Definition~\ref{BdefConcatenation} where
		\[
		\sum\limits_{ p \in \mathbf{L_U} } \frac{1}{ 2^{ | p | } } < 1 \text{ ,}
		\]
		\noindent our main results also hold for the probability measure:
		\[
		\myfunc{ \mathbf{P} }{  \mathbf{L}_{BB} = \{  P_{prot} \circ p \mid p \in \mathbf{L_U} \} \subseteq \mathbf{L_U} }{ \left[ 0 , 1 \right] \subset \mathbb{R} }{ P_{prot} \circ p }{ C \, \frac{1}{2^{ | p | }} } \text{ .}
		\]
		To this end, the reader is invited to note that this constant $C$ would only affect Lemma~\ref{BlemmaSLLNandAIT} as an additive constant and the other lemmas, theorems and corollaries as a multiplicative constant wherever $ \Omega( w, c(x) ) $ already appears as multiplicative constant. Therefore, since we are investigating asymptotic behaviors as the population size grows toward infinity, our final results hold in the case $ C \neq 1 $.
	\end{Bremarknote}

\end{Bdefinition}

\begin{Bdefinition}\label{Bdefsensitivetooracles}
	We say a population $\mathfrak{P}$ is \emph{sensitive to oracles} \textit{iff} whenever an oracle is triggered during any cycle in order to return a partial output the final output of the respective node/program is also\footnote{ Since we are assuming $0$ as the assigned non-halting output for $\mathbf{U'}$ in relation to the machine $\mathbf{U}$.} $0$. That is, more formally, a population $\mathfrak{P}$ is \emph{sensitive to oracles} \textit{iff}:
	
	\begin{enumerate}[label=\upshape(\Roman*),ref=\theBdefinition (\Roman*)]
		 \item Let $ p_{net_{ \mathbf{U} } } $ be a program such that $ p_{net_{ \mathbf{U} } } \circ o_i \circ c $ computes on machine $ \mathbf{U'} $ cycle-by-cycle what a node/program $o_i \in \mathfrak{P} $ does on machine $ \mathbf{U} $ until cycle $c$ when networked. Let $ p_{iso_{ \mathbf{U} } } $ be a program such that $ p_{iso_{ \mathbf{U} } } \circ o_i \circ c $ computes on machine $ \mathbf{U'} $ cycle-by-cycle what a node/program $o_i \in \mathfrak{P} $ does on machine $ \mathbf{U} $ until cycle $c$ when isolated. Let $p_{o_i,c}$ be the \emph{partial output} sent by node/program $o_i$ at the end of cycle $c$. Also, $p_{o_i,max \{ c \mid c \in \mathfrak{C} \}}$ denotes the final output of the node/program $o_i$. Then, for every $ o_i \in \mathfrak{P} $, if there is $ c \in \mathfrak{C} $ such that $ \mathbf{U'}\left( p_{net_{ \mathbf{U} } } \circ o_i \circ c \right) = 0 $ (or $ \mathbf{U'}\left( p_{iso_{ \mathbf{U} } } \circ o_i \circ c  \right) = 0 $), then the networked (or, respectively, the isolated) final output $ p_{o_i,max \{ c \mid c \in \mathfrak{C} \}} = 0 $. \label{itemSensitivetooracles:1}  
	\end{enumerate}

	\begin{Bremarknote}\label{BnoteMachineU''}
		Another alternative is to embed the procedure that guarantees the property of being sensitive to oracles in an underlying oracle machine $ \mathbf{U''} $ that is simulating the entire algorithmic network $ \mathfrak{N}_{BB} (N, f, t, \tau, j)   $. This way, the property of being sensitive to oracles may be also understood as the extension of machine $ \mathbf{U'} $. However, in this article, since we are studying an average phenomenon in the population $ \mathfrak{P}_{BB} (N)  $ (whether networked or isolated), simulating the entire algorithmic network $ \mathfrak{N}_{BB} (N, f, t, \tau, j)   $ is not of relevancy to our results. Therefore, we choose to omit the definition of machine $ \mathbf{U''} $, which would follow directly from the procedure described in Definition~\ref{itemSensitivetooracles:1}).
	\end{Bremarknote}

	\begin{Bremarknote}
		Here, a straightforward interpretation is that nodes that eventually do not halt in a cycle are ``killed'', so that their final outputs have the ``worst'' fitness---see also Section~\ref{subsectionDiscussiononPPBBIG}. 
		Thus, these nodes are programs that need to be ultimately run on an (first-order) oracle Turing machine $ \mathbf{U'} $.
		This requirement is also analogous to the one in \cite{Hernandez-Orozco2018,Chaitin2012,Chaitin2013,Chaitin2014}, which deal with a sole program at the time under an evolutionary perspective---so, note that those are different from the studied model in this article, which deals with population of programs under a distributed-system non-evolutionary emergent perspective (see also Section~\ref{subsectionOE}). 
		
		However, the oracle is only necessary to deal with the non-halting computations.  That is, $ \mathbf{U'} $ in Definition~\ref{BdefU'} behaves like an universal Turing machine $ \mathbf{U} $ except that it returns zero (which was the chosen fixed output label for non-halting computation) whenever a non-halting computation occur. 
		From algorithmic information theory (AIT), we know that the algorithmic complexity $ A(\mathbf{U'}(w)) $ only differs from $ A(\mathbf{U}(w)) $ by a constant, if $\mathbf{U}$ halts on $w$. 
			This is because, from Definition~\ref{BdefU'}, this constant is always limited by the length of the shortest program that always performs the operation $ \text{``}+1\text{''} $ (or ``subtracts'' $ 1$, whichever is larger) to any other halting computation. 
			As a consequence, both machines belong to an algorithmic complexity equivalence class\footnote{ See also the invariance theorem in~\cite{Li1997}. } determined by
			\[
			\left| A(\mathbf{U'}(w)) - A(\mathbf{U}(w))  \right| = \mathbf{O}(1)
			\]
			%(the modulus of the ``addition'' or ``subtraction'' upper bounded by a constant) 
			\noindent everytime $w$ is a halting program. 
			This is the reason why the algorithmic complexity of the final outputs of nodes/programs in $ \mathfrak{N}_{BB} (N, f, t, \tau, j) $ only differ by a constant, should nodes be halting programs during its respective cycles. 
			On the other hand, since a non-halting program $ w' \in \mathbf{L_U} $ gives an output always equal to zero when running on machine $ \mathbf{U'} $,  the algorithmic complexity $ A(\mathbf{U'}(w')) $---which is defined for machine $ \mathbf{ U } $ in Definition~\ref{BdefAlgComp}---of the output of $w'$ on $ \mathbf{U'} $ is always equal to a constant (see Lemma~\ref{BlemmaComplexityonBarHalt}). 
		
		Therefore, \emph{oracle-sensitiveness} is not only sound within the context of populations of Turing machines with an associated fitness function which strictly depends on their final outputs, but also preserves an algorithmic complexity measure for nodes/programs' behaviors through an equivalence class of algorithmic complexities in the form:
		\[
		\mathbf{O}(1)
		=
		\begin{cases}
		\left| A(\mathbf{U'}(w)) - A(\mathbf{U}(w))  \right|  \quad \text{if $\mathbf{U}$ halts on $w$ } \\
		\left| A(\mathbf{U'}(w)) - A(\text{``label for non-halting computation''})  \right|  \quad \text{otherwise}
		\end{cases}
		\] 
		This way, Lemmas~\ref{BlemmaComplexityp_i} and~\ref{BlemmaComplexityonBarHalt} and Definition~\ref{BdefEAC} also become sound, and our results in this article hold for taking both machines $\mathbf{U}$ or $\mathbf{U'}$ as references for a complexity measure.
	\end{Bremarknote}

\end{Bdefinition}

\begin{Bdefinition}\label{BdefSynchronous}
	In a \emph{synchronous} population of a dynamic algorithmic network each node is 
	\emph{only} 
	allowed to receive inputs from its incoming neighbors and to send information to its 
	outgoing neighbors at the end of each node cycle (or communication round). Each node 
	cycle always begins and ends at the same time even if the computation time of the 
	nodes/programs is arbitrarily different. More formally:

\begin{enumerate}[label=\upshape(\Roman*),ref=\theBdefinition (\Roman*)]
	\item If the population $ \mathfrak{P} $ is \emph{networked} in an algorithmic network $ 
	\mathfrak{N} = \left( G_t, \mathfrak{ P }, b \right) $, then there is a \emph{partial} 
	function 
	$f$ such that, for every $ c \in \mathfrak{C} $, there is a constant $ t \in \mathrm{T}(G_t) 
	$ such that, for every $ o_i \in \mathfrak{P} $ with $ c(o_i) = c $,
	\[ 
	b( v , t, \mathbf{ \overline{ x } } ) = ( o_i , c , b_{ dim(Y) - 2 }( \mathbf{ \overline{ x } } ) )  
	\iff 
	\myfunc{f}{ \mathfrak{C}(o_i) } { 
		\mathrm{T}(G_t) } { c } { f(c)=t }  
	\text{ ,}
	\]
	\noindent where $\mathfrak{C}(o_i)$ is the set of node cycles of node $o_i$ as in 
	Definition~\ref{BdefCycle}.\footnote{ In other words, function $b$ is injective with 
	respect 
		to the second term $t$ for which there is $c$ with $ b( v , t, \mathbf{ \overline{ x } } ) = ( 
		o_i , c , b_{ dim(Y) - 2 }( \mathbf{ \overline{ x } } ) )   $, and this value 
		of $c$ does not depend on the choice of $v$. } \label{itemBdefSynchronous:1}
	
	\item If the population $ \mathfrak{ P } $ is \emph{isolated}, the same property in 
	Definition~\ref{itemBdefSynchronous:1} applies, except for taking another algorithmic 
	network $ \mathfrak{N'} = \left( G'_t, \mathfrak{ P }, b \right) $ in which $ \mathscr{E}( 
	G'_t ) $ only contains one-step vertex self-loops (see \cite{Wehmuth2017}).\footnote{ 
	Note that, in this case, function $b$ is irrelevant. } \label{itemBdefSynchronous:2}
\end{enumerate}
	
	\begin{Bremarknote}
		If the population is \emph{isolated}, then only each partial output is taken into account in the respective self-loop.
	\end{Bremarknote}

	\begin{Bremarknote}
		 Note that the procedure responsible for performing the synchronization may be abstract-hypothetical or defined, as in Note~\ref{BnoteMachineU''}, on an underlying oracle machine $ \mathbf{U''} $ that makes each individual node cycle start at the same time (or after every node/program returns its partial output in the respective cycle). Since machine $ \mathbf{U'} $ computes a total function, the reader is also invited to note that any algorithm of synchronization in distributed computing \cite{Barbosa1996,Kshemkalyani2008,Merideth2007a} could be embedded as subroutine into this machine $ \mathbf{U''} $. However, as in Note~\ref{BnoteMachineU''}, since we are studying an average phenomenon in the population $ \mathfrak{P}_{BB} (N)  $ (whether networked or isolated), simulating the entire algorithmic network $ \mathfrak{N}_{BB} (N, f, t, \tau, j)   $ is not of relevancy to our results. Therefore, we choose to omit the definition of machine $ \mathbf{U''} $.  
	\end{Bremarknote}
	
\end{Bdefinition}

%%%%% Old IFP

\begin{Bdefinition}\label{BdefIFP}
	We say a population $ \mathfrak{P} $ follows an \emph{imitation-of-the-fittest protocol} (IFP) \textit{iff} each node/program always obeys protocols defined in \ref{BdefBBcontagion} , \ref{BdefMaxCoop} and \ref{BdefITFOprot} \emph{when networked}. Or more formally: 
	
	\begin{enumerate}[label=\upshape(\Roman*),ref=\theBdefinition (\Roman*)] 
		\item \label{procedureIFP} Let $\mathbf{X}_{neighbors}(o_j,c)$ be the set of incoming neighbors of node/program $o_j$ that have sent partial outputs to it at the end of the cycle $c$.
		Let $ \{ p_{o_i,c} \mid o_i \in \mathbf{X}_{neighbors}(o_j, c) \land i \in \mathbb{N} \land c \in \mathfrak{C} \} $ be the set of partial outputs relative to $\mathbf{X}_{neighbors}(o_j,c)$.
		Let $ w $ be the network input (as in Definition \ref{BdefNetworkinput}). Let $\circ$ denote a recursively determined concatenation of finite strings as in Definition~\ref{BdefConcatenation}.
		Then, for every $ o_j,o_i \in \mathfrak{ P } $ and $ c,c-1 \in \mathfrak{C} $,
		\begin{enumerate}
			\item  if $ max \{ c \mid c \in \mathfrak{C} \} = 1 $, then
			\[
			p_{o_j,c} = \mathbf{U'}( o_j \circ w )
			\]
			
			\item if $c=1$ and $ c \neq max \{ c \mid c \in \mathfrak{C} \} $, then
			\[
			p_{o_j,c} = w \circ o_j \circ \mathbf{U'}( o_j \circ w )
			\]
			
			\item \label{clauseIFPmain} if $ c \neq 1 $ and $ c \neq max \{ c \mid c \in \mathfrak{C} \} $, then 
			
			\noindent $ p_{o_j,c} = w \circ o_i \circ max \{ x \mid p_{o_j,c-1}= w \circ o_i \circ x \, \lor \, w \circ o_i \circ x \in  \{ p_{o_i,c-1} \mid o_i \in \mathbf{X}_{neighbors}(o_j, c-1) \land i \in \mathbb{N} \land c-1 \in \mathfrak{C} \}   \} $
			
			\item if $ c = max \{ c \mid c \in \mathfrak{C} \} $ and $ p_{o_j,c-1} = w \circ o_i \circ x $, then
			\begin{center}
				\noindent $ p_{o_j,c} = x $ 
			\end{center}
		\end{enumerate}
	\end{enumerate}
\end{Bdefinition}

\begin{Bremarknote}
	Since we will be dealing only with synchronous algorithmic networks in this article (see Section~\ref{sectionBusyBeaverGame}), these global sharing protocols will apply at the end of each cycle (or communication round) --- see Definition \ref{BdefCycle} . So, after the first cycle the diffusion of the biggest partial output will work like a \emph{spreading} in time-varying networks \cite{Guimaraes2013,Costa2015a}. And the last cycle (or more cycles\footnote{ See Definition \ref{BdefN_BB}. }) is spent in order to cause each node/program to return a number --- from which we measure the complexity of the respective node/program as discussed in Section~\ref{subsectionDiscussiononBBIM...}.	
\end{Bremarknote} 

\begin{Bremarknote}
	In order to simplify our notation, we let $ w \circ \mathbf{U'}(x ) $ denote the prefix preserving concatenation $\circ$ (see Notation \ref{BdefConcatenation}) of the string $ w \in \mathbf{L_U} $ with the string $ y \in \mathbf{L_U} $ such that $y$ represents the number $\mathbf{U'}(x)$ in the language $\mathbf{L_U}$.
\end{Bremarknote}

\begin{Bremarknote}
	Note that, if one enforces that the number of cycles needs to be informed to the this global information-sharing protocol, then the expected algorithmic complexity of a node/program in the networked population will be even larger in respect to the expected algorithmic complexity in the isolated population in Corollary~\ref{BcorMain}. That is, the additional input of the number of cycles in the networked case may ``cancel'' the one in the isolated case. Therefore, our final results on the lower bound for the expected emergent algorithmic complexity of a node (EEAC) can even be increased. This is expected to happen for example in the case it was possible to simulate the entire algorithmic network. In this way, it will be important for migrating our results to resource-bounded algorithmic networks for future research.\footnote{ As done in~\cite{Abrahao2015,Abrahao2016,Abrahao2016c} for metabiology~\cite{Chaitin2012,Chaitin2014,Chaitin2013}. } 
\end{Bremarknote}

\begin{Bsubdefinition}\label{BdefBBcontagion}
	We call a \emph{Busy Beaver contagion protocol} as a global information-sharing protocol in which every node/program runs the node/program of the neighbor that have output --- a partial output --- the largest integer instead of its own program \textit{iff} the partial output of this neighbor is bigger than the receiver's own partial output.

\end{Bsubdefinition}

\begin{Bsubremarknote}
	Note that a node/program only needs to take into account the biggest partial output that any of its neighbors have sent. If more than one sends the largest integer as partial output, the receiver node/program choose one of these respective neighbors accordingly to an arbitrary rule. Then, this partial output is the one that will be compared to the partial output from the receiver node/program.
\end{Bsubremarknote}

\begin{Bsubdefinition}\label{BdefMaxCoop}
	In a \emph{maximally cooperative protocol} every node/program shares its own program and its latest partial output with all its neighbors at the end of each cycle and before the next cycle begins.
	
	\begin{Bsubremarknote}
		In the model defined in \ref{BdefP_BB}, sharing only the last partial output turns out to be equivalent\footnote{ However, at the expense of using more computation time.} to the definition of maximally cooperative protocol.	
	\end{Bsubremarknote} 
	
	\begin{Bsubremarknote}
		Remember that $ \mathfrak{N}_{BB} (N, f, t, \tau, j)  $ only lets its nodes/programs perform computation in the first cycle (see Definition \ref{BdefITFOprot}). This is the reason why only the network input $w$ matters in its respective maximally cooperative protocol.
	\end{Bsubremarknote}
	
	\begin{Bsubremarknote}
		As in \cite{Costa2015a}, this diffusion process may be interpreted as following a Breadth-First Search (BFS), in which each node starts a diffusion by sending the specified information in Definition \ref{BdefIFP} to all of its adjacent
		nodes. Then, these adjacent nodes relay information for their own adjacent nodes in the next time instant, and so on.
		\begin{Bsubremarknote2}
			Typically, one can set a collection of time instants in which the diffusion of information is limited to 1 step. However, the presented final results in Theorems \ref{BthmMain} and  \ref{BthmMainCentralTime} and Corollary \ref{BcorMain} also hold if more than one step per time instant of diffusion (in our case, the application of the global sharing protocols) is allowed at the end of each cycle. In fact, it can only make $ { \tau_{\mathbf{E}(max)}( N,f,t_i,\tau ) }|_{t}^{t'} $ larger --- see Lemma \ref{BlemmaMinComplexityonDiffusion}.
		\end{Bsubremarknote2}
	\end{Bsubremarknote}
	
\end{Bsubdefinition}

\begin{Bsubdefinition}\label{BdefITFOprot}
	We call a \emph{contagion-only protocol} as a global information-sharing protocol in which every node/program only plays the Busy Beaver contagion and does not perform any other computation after the first cycle \emph{when networked} in some $ \mathfrak{N} = (G, \mathfrak{P}, b)$.

	\begin{Bsubremarknote}
		This is the condition that allows us to investigate the ``worst'' (see discussion \ref{subsectionDiscussiononBBIM...}) case in which no node/program spends computational resources other than playing its global sharing protocols.\footnote{ In our case, generating expected emergent algorithmic complexity of a node.} In other words, it forces the algorithmic network to rely on a diffusion process only.
	\end{Bsubremarknote} 
	
	\begin{Bsubremarknote}
		In the main model presented in this article, the first time instant occurs after the first cycle --- see \ref{BdefN_BB} . 
	\end{Bsubremarknote}
\end{Bsubdefinition}
\noindent \\

%%%%%%%% New SIS IFP

%%%%%%%

\begin{Bdefinition}\label{BdefL_BB}
	Let $ {\textbf{L}}_{BB} \subset \mathrm{\textbf{L}}_{\mathbf{U}} $  be a language of programs 
	in the form $P_{prot} \circ p$ where $p \in \textbf{L}_{\textbf{U}}$. The prefix $P_{prot}$ is any 
	program that always ensures that, if the node/program $ P_{prot} \circ p $ \emph{is 
	networked} and running on $ \mathbf{U'} $, then $P_{prot} \circ p$ obeys the 
	\emph{imitation-of-the-fittest protocol}\footnote{ See Definition~\ref{BdefIFP} in 
	Section~\ref{sectionBusyBeaverGame} }. Otherwise, if  the node/program $ P_{prot} \circ p $ is 
	\emph{isolated} and running on $ \mathbf{U'} $, then, for every $ w \in \mathbf{L_U} $, 
	$\mathbf{U'}(P_{prot} \circ p \circ w ) = \mathbf{U'}(p \circ w)$ and every subsequent node cycle 
	works like a reiteration of partial outputs as immediate next input for the same program $p$. 
	\footnote{ Thus, the isolated case may be represented (and is equivalent to) by an algorithmic 
	network built on a population in language $ \mathbf{L_U} $ that does not follow any 
	information-sharing protocol and the topology of the MultiAspect Graph (MAG) is composed 
	by one-step self-loops on each node/program only.} 
\end{Bdefinition}

\begin{Bdefinition}\label{BdefP_BB}
	Let $\mathfrak{P}_{BB}(N) \text{``} \subseteq \text{''} \mathbf{L}_{BB} $ be a randomly generated\footnote{ As in Definition~\ref{BdefRandompopulation}.} population of $N$ elements from\footnote{ That is, the support set of $ \mathfrak{P}_{BB}(N)  $ is contained in $ \mathbf{L}_{BB} $.} $ \mathbf{L}_{BB} $ that is \emph{synchronous}\footnote{ The procedure responsible for performing the synchronization may be abstract-hypothetical or defined on an underlying oracle machine that makes each individual cycle start at the same time (or after every node/program returns its partial output in the respective cycle). See Definition~\ref{BdefSynchronous} and Note~\ref{BnoteMachineU''}. }, \emph{sensitive to oracles}\footnote{ See Definition~\ref{Bdefsensitivetooracles}. } and with \emph{randomly generated}\footnote{ Note that we are dealing with self-delimiting languages, so that one can always define probability measures. See Definition~\ref{BdefRandompopulation}. } $ p \in \mathbf{L_U} $ such that\footnote{ Hence, $ \mathbf{L}_{ \mathfrak{P}_{BB}(N) } $ is defined as the population of suffix nodes/programs that were randomly generated in order to constitute $ \mathfrak{P}_{BB}(N) $. Therefore, $ \mathbf{L}_{ \mathfrak{P}_{BB}(N) } $ is a population and not a language (see note \ref{notePopulationnotlanguage} and Definition~\ref{BdefPopulation}), so that there may be repetitions within $ \mathbf{L}_{ \mathfrak{P}_{BB}(N) } $. It is important to note this since the letter $L$ is used to denote languages in other parts in this paper.} 
	
	\[
	p \in \mathbf{L}_{ \mathfrak{P}_{BB}(N) }
	\]
	\[
	iff 
	\]
	\[
	P_{prot} \circ p \in \mathfrak{P}_{BB}(N) 
	\]
	
	\noindent where $ \mathbf{L}_{ \mathfrak{P}_{BB}(N) } $ is a population (of suffixes\footnote{ See Definition~\ref{BdefConcatenation}. } ).
	\begin{Bremarknote}
		Note that all conditions and protocols in Definition~\ref{BdefP_BB} define the set of properties $Pr(\mathfrak{P}_{BB}(N))$ of the population $\mathfrak{P}_{BB}(N)$.
	\end{Bremarknote}
	
	\begin{Bremarknote}\label{notePopulationnotlanguage}
		There may be a misplaced usage of the operator ``$ \subseteq $'' in $ \mathfrak{P}_{BB}(N) \subseteq \mathrm{\textbf{L}}_{BB} $ here. Since $ \mathfrak{P}_{BB}(N) $ is a population, it may contain repeated elements of $ \textbf{L}_{BB} $. However, for the sake of simplicity, we assume a population as an ordered set of labels defined upon a sequence (see Definition~\ref{BdefPopulationasaset} and Note~\ref{BnoteFunctionBinAN}) and we say a population $ \mathfrak{ P } $ is contained in a language $ L $ \textit{iff} \[  \forall o_i, 1 \leq i \leq | \mathfrak{ P } | \, \big( \, o_i \in \mathfrak{ P }  \implies f_o( o_i ) = w \in L \, \big) \]
		\noindent Otherwise, in the interpretation of a population as a multiset, if $ Support( \mathfrak{ P } ) $ denotes the support set of the multiset/population $ \mathfrak{ P } $, then we let $ \mathfrak{P} \subseteq L $ in fact denote $ Support( \mathfrak{ P } ) \subseteq L  $.
	\end{Bremarknote}
	
\end{Bdefinition}

\begin{Bdefinition}\label{BdefL_U}
	For the sake of simplifying our notation, we denote the language\footnote{ Note that, since it is a language and not a population, no repetitions are allowed in $\mathbf{L_U}(N) $. } of the size-ordered\footnote{ That is, an ordering from the smallest to the largest size. Also note that ordering members with the same size may follow a chosen arbitrary rule (e.g., a lexicographical one).} shortest
	\[ p \in  \mathbf{L}_{ \mathfrak{P}_{BB}(N) }	\]
	\noindent as
	\[ 
	\mathbf{L_U}(N) 
	\]
	
	\begin{Bremarknote}
		Note that 
		\[
		\left| \mathbf{L_U}(N)  \right| \leq \left|  \mathbf{L}_{ \mathfrak{P}_{BB}(N) } \right| = N \in \mathbb{N}
		\]
		\noindent This will be important in Lemma~\ref{BlemmaGibbsandalgorithmicentropy}.
	\end{Bremarknote}
\end{Bdefinition}	

\noindent \\

% New Definitions L_P'_BB

\subsection{Definitions on the studied algorithmic network model}

%%%%%% Old N_BB

\begin{Bdefinition}\label{BdefN_BB}
	Let
	\[
	\mathfrak{N}_{BB} (N, f, t, \tau, j)=(G_t, \mathfrak{P}_{BB} (N),b_j)
	\]
	\noindent be an algorithmic network where $f$ is an arbitrary well-defined function such that 
	\[
	\myfunc{f}{ \mathbb{N^*} \times X \subseteq \mathrm{T}(G_t) \times Y \subseteq ]0,1] } { \mathbb{N} } { (x,t,\tau) } { y }
	\]
	\noindent and $ G_t \in \mathbb{G}(f, t, \tau)$, $ | \mathrm{V}(G_t) | = N $, $ | \mathrm{T}(G_t) | > 0 $, and there are arbitrarily chosen\footnote{ Since they are arbitrarily chosen, one may choose to take them as minimum as possible in order to minimize the number of cycles for example. That is, $ c_0 = 0 $ and $ n = | \mathrm{T}(G_t) | + 1 $ for example. } $ c_0, n \in \mathbb{N} $ where $ c_0 + | \mathrm{T}(G_t) | + 1 \leq n \in \mathbb{N} $ such that $b_j$ is a function
	\[
	\myfunc{b_j} { \mathrm{V}(G_t) \times \mathrm{T}(G_t) } { Support\left( \mathfrak{P}_{BB}(N) \right) \times \mathbb{N}|_1^{ n }} { (v,t_{c-1}) } { b_j(v,t_{c-1})=( p, c_0 + c ) }  
	\]
	\noindent such that\footnote{ See Definition \ref{BdefFunctionbinAN} .} , since one has fixed the values of $c_0$ and $n$,
	\[
	\left| \left\{ 
	b_j \, \Bigg| \, 
	\myfunc{b_j} { \mathrm{V}(G_t) \times \mathrm{T}(G_t) } { Support\left( \mathfrak{P}_{BB}(N) \right) \times \mathbb{N}|_1^{ n }} { (v,t_{c-1}) } { b_j(v,t_{c-1})=( p, c_0 + c ) } 
	\right\} \right| \leq N^{ N }
	\] 
	
	\begin{Bremarknote}
		In the case population $ \mathfrak{P}_{BB}(N) $ is an ordered set of labels from a sequence, we have function $b_j$ as an injective function, where
		\[
		\myfunc{b_j} { \mathrm{V}(G_t) \times \mathrm{T}(G_t) } { \mathfrak{P}_{BB}(N) \times \mathbb{N}|_1^{ n }} { (v,t_{c-1}) } { b_j(v,t_{c-1})=( o_i , c_0 + c ) }  
		\]
		\noindent such that\footnote{ See Definition \ref{BdefFunctionbinAN} .} , since one has fixed the values of $c_0$ and $n$,
		\[
		\left| \left\{ 
		b_j \, \Bigg| \, 
		\myfunc{b_j} { \mathrm{V}(G_t) \times \mathrm{T}(G_t) } { \mathfrak{P}_{BB}(N)  \times \mathbb{N}|_1^{ n }} { (v,t_{c-1}) } { b_j(v,t_{c-1})=( o_i , c_0 + c ) } 
		\right\} \right| \leq N^{ N }
		\] 
	\end{Bremarknote}
	
	\begin{Bremarknote}
		Since the way time instants are mapped into cycles is fixed given values of $c_0$ and $n$, we may equivalently denote function $ b_j $ as
		\[
		\myfunc{b_j} { \mathrm{V}(G_t) } { \mathfrak{P}_{BB}(N) } { v } { b_j( v )=( o_i ) }  
		\]
	\end{Bremarknote}
	
	\begin{Bremarknote}
		In summary, $ \mathfrak{N}_{BB} (N, f, t, \tau, j) $ is an algorithmic network populated by $N$ nodes/programs from $ \mathfrak{P}_{BB} (N) $ such that, after the first (or $c_0$ cycles) cycle, it starts a diffusion\footnote{ There are other diffusions too. However, only the one from the biggest partial output is independent of neighbor's partial outputs.} process of the biggest partial output (given at the end of the first cycle) determined by a time-varying graph $G_t$ that belongs to a family of graphs $ \mathbb{G}( f, t, \tau ) $. Then, at the last time instant, diffusion stops and one cycle (or more) is spent\footnote{ This condition is necessary to make this algorithmic network defined even when $|\mathrm{T}(G_t)|=1$.} in order to make each node return a final output. Each node returns as final output its previous partial output determined at the last time instant --- see Definition \ref{BdefITFOprot} . 
	\end{Bremarknote}
	
	\begin{Bremarknote}
		Note that, in this model, the aspects of the graphs in the family $\mathbb{G}$ that are mapped into the properties of the population $ \mathfrak{P}_{BB}(N) $ by functions $b_j$ are nodes and time instants. 
	\end{Bremarknote}
	
	\begin{Bremarknote}\label{BnoteRestrictedfamilyGinN_BB}
		The reader is invited to note that the main results presented in this paper also hold for only one function $b_j$ per graph in the family $ \mathbb{G} $.
	\end{Bremarknote}
	
	\begin{Bsubdefinition}\label{BdefC_BB}
		We denote $ \mathbb{N}|_1^{ n } = \{ 1 , \dots , n \mid n \in \mathbb{N} \}  $ as $ \mathfrak{C_{BB}} $. 
	\end{Bsubdefinition}

\end{Bdefinition}

%%%%%%%%%
\noindent \\

%%%%%%% New N'_BB

\section{Average Emergent Open-Endedness (AEOE)}\label{sectionAEOE} 

\subsection{Discussion on emergence and complexity}\label{Bdiscussionemergenceandcomplexity}

In this section, we will introduce a formal definition of average emergent open-endedness (AEOE). Some of the current discussion is already presented in Section~\ref{subsectionDiscussiononBBIM...} \footnote{ And we recommend its reading before the current subsection.} . Thus, we will abstain from addressing the problem of how AEOE relates to the BBIG. Instead, we will discuss general aspects and the idea that the former mathematically grasps.

As discussed in Section \ref{sectionAN}, one can understand the theory of algorithmic networks as a mathematical framework for studying interacting complex systems. However, we are limiting ourselves to deal with computable complex systems that send and receive information from neighbors (during cycles) in order to return final outputs. This way, as mentioned in discussion \ref{subsectionDiscussiononBBIM...}, we can apply the tools from theoretical computer science and information theory --- specially, algorithmic information theory --- to the study of how a systemic feature like complexity ``behaves'' as this system of interacting subsystems runs (see discussion \ref{subsectionDiscussiononBBIM...}). So, our central inquiry falls under investigating how much complexity emerges when subsystems are interacting compared with the case in which they are running completely isolated from each other.

We follow a consensual abstract notion of \emph{emergence}~\cite{DOttaviano2004,Nicolis2009,Prokopenko2014,Hernandez-Orozco2016,Prokopenko2009,Boschetti2005,Fernandez2013,Gershenson2007,Bar-Yam1997,Boschetti2005} as a systemic feature or property that appears only when the system is analyzed (theoretically or empirically) as a ``whole''. Thus, the algorithmic complexity (i.e., an irreducible number of bits of information) of a node/program's final output when networked\footnote{ That is, interacting with other parts of the system.} minus the algorithmic complexity of a node/program's final output when isolated formally defines an irreducible quantity\footnote{ Note that this quantity of bits may be $0$ or negative. Therefore, this measure of emergent algorithmic complexity may also be suitable for measuring the cases where algorithmic complexity was ``lost'' when the system is networked. We leave for future research the study of such degenerate cases. } of information that \emph{emerges} in respect to a node/program\footnote{ More precisely, in respect to a node's final output, i.e., the complexity of its fitness/payoff as discussed in Section~\ref{subsectionDiscussiononPPBBIG}.} that belongs to an algorithmic network. We call it as \emph{emergent algorithmic complexity (EAC)} of a node/program. Consequentially, note that if a system is analyzed as a separated\footnote{ The subparts do not need to be necessarily apart from each other, but each part in this case would be taken as an object of investigation where no information enters or exits anyway.} collection of ``subparts'', the EAC of a node/program will be always $0$.

An important distinction is crucial: the EAC of a node/program must be not confused with EAC of the \emph{entire} algorithmic network. Measuring the emergent algorithmic complexity of the algorithmic network taking into account every node/program ``at the same time'' is --- as our intuition demands to be --- mathematically different from looking at each individual final output's algorithmic complexity. For example, one may consider the algorithmic information of each node/program combined (in a non-trivial way) with the algorithmic information of the network's topology \cite{Buhrman1999, Zenil2014,Mowshowitz2012}. Within the framework of algorithmic networks, this ``whole'' emergent algorithmic complexity can be formally captured by the \emph{joint} algorithmic complexity of each node/program's final output when networked minus the \emph{joint} algorithmic complexity of each node/program's final output when isolated. That is, the algorithmic complexity of the networked population's output as whole minus the algorithmic complexity of the isolated population's output as a whole. However, analyzing this systemic property is not part of the scope of the present article and it will be a necessary future research. Specially in investigating different emergent complexity phase transitions as discussed in Section~\ref{subsectionOE}.

Therefore, instead of investigating the \emph{joint} or \emph{global} EAC of an algorihtmic network, one may look for a mean value of EAC for all nodes/programs. That is, we are focusing the \emph{local} EAC. The \emph{average} (local) emergent algorithmic complexity of a node/program (AEAC) is defined by the mean on all nodes/programs' (and possible network's topologies) EAC (see Definition \ref{BdefAEAC} ). It gives the average complexity of the nodes/programs' respective fitnesses (or, in a game-theoretical interpretation, payoffs) in a networked population, once there is a fitness function that evaluates final outputs. For further discussions, see Section~\ref{subsectionDiscussiononBBIM...} . 

As the main object of mathematical investigation of this article is an algorithmic network $ \mathfrak{N}_{BB} (N, f, t, \tau, j) $ in which the population is randomly generated from a stochastic process of independent and identically distributed random variables under a self-delimiting program-size probability distribution, we refer to the average EAC in $ \mathfrak{N}_{BB} (N, f, t, \tau, j) $ as \emph{expected} emergent algorithmic complexity (EEAC), denoted here (see Definition~\ref{BdefEEACN_BB}) by
\[
\mathbf{E}_{ \mathfrak{N}_{BB}(N,f,t,\tau) } 
\left(
{ {\displaystyle{\myDelta_{iso}^{net}} A} (o_i,c)} 
\right)
\] 
\noindent Therefore, both terms, \emph{average or expected}, can be used interchangeably in the present article.

\subsection{Discussion on open-endedness}\label{subsectionOE}

Definition \ref{BdefAEAC} will tell us how much emergent algorithmic complexity emerges on the average in an algorithmic network $\mathfrak{N}$. The question is: which algorithmic networks produces more AEAC? And an immediate question would be: how does AEAC increase as the population grows in size? More AEAC means that a node/program needs more irreducible information on the average than it already contains, should it try to compute isolated what it does networked. A system with a larger AEAC ``informs'' or ``adds'' more information to its parts on the average. Thus, the investigation of these questions are directly related to important topics in complex systems, like integration \cite{Oizumi2014}, synergy (see Section \ref{sectionRelatedandFuture} ), and emergence (see Section~\ref{Bdiscussionemergenceandcomplexity}).  

Another important concept mentioned in discussion \ref{subsectionDiscussiononBBIM...} is open-endedness. It is commonly defined in evolutionary computation and evolutionary biology as the inherent potential of a evolutionary process to trigger an endless increase in complexity \cite{Abrahao2015,Hernandez-Orozco2016,Adams2017,Bedau1998,Channon2001,Maley1999,Standish2003,Ruiz-Mirazo2004}. That means that, in the long run, it will eventually appear an organism that is as complex as one may want.\footnote{ As a consequence, an infinite number of different organisms tends to appear in the evolutionary line after an infinite time, or after an infinite amount of successive mutations.} Given a certain complexity value as target, one would just need to wait a while in order to appear an organism with a larger complexity than (or equal to) the target value --- no matter how big this value is. In fact, within the framework of algorithmic information theory, as shown in \cite{Chaitin2012,Chaitin2013,Chaitin2014,Abrahao2015}, a cumulative\footnote{ Which allows organisms to recall its predecessors. } evolution reaches $N$ bits of algorithmic complexity after --- realistic fast --- $ \mathbf{ O }( N^2 ( \log(N) )^2 ) $ successive algorithmic mutations on one organism at the time --- whether your organisms are computable, sub-computable or hyper-computable (see \cite{Abrahao2015,Abrahao2016,Abrahao2016c}).  

What we have found is that, within the theory of algorithmic networks, open-endedness is a mathematical phenomenon closely related to the core questions in the first paragraph of this subsection. However, it emerges as an akin --- but different --- phenomenon to evolutionary open-endedness: instead of achieving an unbounded quantity of algorithmic complexity over time (or successive mutations), an unbounded quantity of emergent algorithmic complexity is achieved as the population size increases indefinitely. Since it is a property that emerges depending on the amount of parts of a system only when these parts are interacting somehow (e.g., exchanging information), it arises as an emergent property. So, we refer to it as \emph{emergent open-endedness}. As discussed in Section~\ref{Bdiscussionemergenceandcomplexity}, since we are dealing only with the local EAC and not with the global (or joint) EAC, then a more accurate term would be \emph{local emergent open-endedness}. For the sake of simplifying our nomenclature, we choose to omit the term ``local'' in this article.

The main results of this paper (see Theorems \ref{BthmMain} and \ref{BthmMainCentralTime} and Corollaries \ref{BcorMain} and \ref{BcorDiameter}) prove that there are conditions on the network and communication protocols of the population that trigger a type of emergent open-endedness as the randomly generated populations grows toward infinity. They cause the AEAC to increase as one may want, should the population size increases sufficiently.   
Since it is an increase in average complexity (see discussion \ref{Bdiscussionemergenceandcomplexity}), we refer to it as \emph{average (local) emergent open-endedness} (AEOE) (see Definition \ref{BdefAEOE}). Moreover, since the population is randomly genarated in a way such that AEAC turns to EEAC in the case of algorithmic networks $ \mathfrak{N}_{BB} (N, f, t, \tau, j) $, one may refer to AEOE as \emph{expected (local) emergent open-endedness} (EEOE) (see Definition \ref{BdefEEOE}). For example, even if the AEAC is $0$ or negative for some algorithmic networks with finite-size populations, the average emergent open-endedness phenomenon tells us that, for large enough population sizes, the probability that these algorithmic networks have a larger AEAC tends to $1$ --- see Note \ref{noteSLLN}. In other words, we will show that there are conditions for algorithmic networks such that a phase transition in AEAC eventually occurs, should the AEAC be non-positive for small populations. Thus, one may call this transition as \emph{average}\footnote{ Or \emph{expected} in the case the population is randomly generated. } \emph{emergent complexity phase transition}.

\subsection{Definitions on emergent algorithmic information}

\begin{Bdefinition}\label{BdefEAC}
	The \emph{emergent algorithmic complexity (EAC)} of a node/program $o_i$ in $c$ cycles is given in an algorithmic network that always produces partial and final outputs by

	\[
	{\displaystyle{\myDelta_{iso}^{net(b)}}} A (o_i,c) = 
	A (\mathbf{U}(p_{net}^{b} ( o_i , c )) - 
	A (\mathbf{U}(p_{iso} ( o_i , c) )
	\]
	
	\noindent where:
	\begin{enumerate}
		\item $ f_o( o_i ) \in L $;\footnote{ See Definition~\ref{BdefPopulationasaset}.}
		
		\item $ p_{net}^{b} $ is the program that computes cycle-per-cycle the partial outputs of $o_i$ when networked assuming the position $ v $, where $ b(v,\mathbf{\bar{x}}) = (o_i, b_{ dim( Y ) - 1 }( \mathbf{ \overline{x} } ) ) $, in the graph $G$ in the specified number of cycles $c$ with network input $w$. Thus, $ p_{net}^{b} ( o_i , c ) $ represents the program that returns the final output of $o_i$ when networked assuming the position $ v $, where $ b(v,\mathbf{\bar{x}}) = (o_i, b_{ dim( Y ) - 1 }( \mathbf{ \overline{x} } ) ) $, in the graph $G$ in the specified number of cycles $c$ with network input $w$; 
		
		\item $ p_{iso} $ is the program that computes cycle-per-cycle the partial outputs of $o_i$ when isolated in the specified number of cycles $ c $ with network input $w$. Thus, $ p_{iso} ( o_i , c)  $ represents the program that returns the final output of $o_i$ when isolated in the specified number of cycles $ c $ with network input $w$;
	\end{enumerate}
	
	\begin{Bremarknote}
		For the sake of simplifying notation, since the network input is fixed in our forthcoming results, we choose to omit $w$ in $ p_{net}^{b} ( o_i , c ) $ and $ p_{iso} ( o_i , c)  $;
	\end{Bremarknote}
	
	\begin{Bremarknote}
		Note that:
		\begin{enumerate}
			\item $ A (\mathbf{U}(p_{net}^{b} ( o_i , c) )) $ is the algorithmic complexity of what the node/program $o_i$ returns in the end of all cycles \emph{when networked};
			
			\item $ A (\mathbf{U}(p_{iso} ( o_i , c) ) $ is the algorithmic complexity of what the node/program $o_i$ returns in the end of all cycles \emph{when isolated};
		\end{enumerate}
	\end{Bremarknote}

	\begin{Bremarknote}
		Whereas program $p_{iso}$ may be very simple, since it is basically a program that reiterates partial outputs of $o_i$ as inputs to itself at the beginning of the next cycle (up to $c$ times), program  $p_{net}^{b}$ may also comprise giving the sent partial outputs (which in turn may contain arbitrary information) from node $o_i$'s incoming neighbors at the end of each cycle as inputs to $o_i$ at the beginning of the respective next cycle. Thus, $p_{net}^{b}$  may be only encoded by a much more complex program.  
	\end{Bremarknote}
	
	\begin{Bremarknote}
		Note that the algorithmic complexity of $p_{iso}$ or $p_{net}^{b}$ may be not directly linked to $ A (\mathbf{U}(p_{iso} ( o_i , c) ) $ or $ A (\mathbf{U}(p_{net}^{b} ( o_i , c) )) $ respectively, since $ A (\mathbf{U}(p_{iso} ( o_i , c) ) $ and $ A (\mathbf{U}(p_{net}^{b} ( o_i , c) )) $ are only related to the final outputs (if any) of each node's computation. 
	\end{Bremarknote}
	
	\begin{Bremarknote}
		Remember Definitions \ref{Bdefsensitivetooracles} and \ref{BdefU'} which states that even when a node/program does not halt in some cycle, machine $ \mathbf{U'} $ was defined in order to assure that there is always a partial output for every node/program and for every cycle. If population $ \mathfrak{ P } $ is defined in a way that eventually a partial or final output is not a defined value, when running on the respective theoretical machine (see Definition \ref{BdefPopulation} ), then Definition \ref{BdefEAC} would be inconsistent. This is the reason we stated in its formulation that there always is partial and final outputs. However, this is not necessary in the case of $ \mathfrak{N}_{BB} (N, f, t, \tau, j) $ (see Definition \ref{BdefP_BB} and \ref{BdefEAConN_BB} ).
	\end{Bremarknote}
	\begin{Bremarknote}
		If one defines the \emph{emergent algorithmic creativity} of a node/program $ o_i $ in $ c $ cycles as
		\[
		A ( \textbf{U}(p_{net}^{b} ( o_i , c )) | \textbf{U}(p_{iso} ( o_i , c) ) \text{ ,}
		\]
		\noindent then our results also hold for replacing the expected emergent algorithmic complexity (EEAC) with \emph{expected emergent algorithmic creativity} (EEACr). Since we are estimating lower bounds and $ A( y | x ) $ denotes the conditional prefix algorithmic complexity of string $y$ given string $x$ as input, note that from AIT we have that
		\[
		A ( \textbf{U}(p_{net}^{b} ( o_i , c )) | \textbf{U}(p_{iso} ( o_i , c) )
		\geq
		A (\mathbf{U}(p_{net}^{b} ( o_i , c )) - 
		A (\mathbf{U}(p_{iso} ( o_i , c) )
		-
		\mathbf{O}\big( 1 \big)
		\]
	\end{Bremarknote}

	% Old Definition
	\begin{Bsubdefinition} \label{BdefEAConN_BB}
		More specifically, one can denote the \emph{emergent algorithmic complexity} of a node/program $o_i$ \emph{in an algorithmic network} $ \mathfrak{N}_{BB} (N, f, t, \tau, j) $ ($ \mathbf{ EAC_{BB} } $) during $c$ cycles as

		\[
		{\displaystyle{\myDelta_{iso}^{net(b_j)}}} A (o_i,c) = 
		A (\mathbf{U}(p_{net}^{b_j} ( o_i ,  c ) )) - 
		A (\mathbf{U}(p_{iso} ( p_i ,  c )))
		\]
		
		\noindent where:
		
		\begin{enumerate}
			\item $ f_o( o_i ) = P_{prot} \circ p_i \in \mathbf{L}_{BB} $ ;
			
			\item $ p_{net}^{b_j} $ is the program that computes cycle-per-cycle the partial outputs of $o_i$ \emph{when networked} assuming the position $ v $, where $ b_j(v)=(o_i) $, in the graph $G_t$ in $c$ cycles with network input $w$. Thus, $ p_{net}^{b_j} ( o_i ,  c ) $ represents the program that returns the final output of $o_i$ \emph{when networked} assuming the position $ v $, where $ b_j(v)=(o_i) $, in the graph $G_t$ in $c$ cycles with network input $w$; 
			
			\item $ p_{iso} $ is the program that computes cycle-per-cycle the partial outputs of $p_i$ \emph{when isolated}\footnote{ Which is a redundancy since we are refering to $p_i$ instead of $o_i$ here.} in $c$ cycles with network input $w$. Thus, $ p_{iso} ( p_i ,  c ) $ represents the program that returns the final output of $p_i$ \emph{when isolated}\footnote{ Which is a redundancy since we are refering to $p_i$ instead of $o_i$ here.} in $c$ cycles with network input $w$; 
		\end{enumerate}

		\begin{Bsubremarknote}
			In order to check if Definition \ref{BdefEAConN_BB} is well-defined, remember Definitions \ref{BdefU'} and \ref{BdefL_BB} . 
		\end{Bsubremarknote}
		
	\end{Bsubdefinition}

	% New definition

\end{Bdefinition}

\begin{Bdefinition}[Notation]
	For the sake of simplifying our notation, let $ \{ b_j \} $ denote\footnote{ The same holds for $ \{ b \} $. } 
	\[ 
	\left\{ b_j \, \Bigg| \, \myfunc{b_j} { \mathrm{V}(G_t) \times \mathrm{T}(G_t) } { Support\left( \mathfrak{P}_{BB}(N) \right) \times \mathbb{N}|_1^{ n }} { (v,t_{c-1}) } { b_j(v,t_{c-1})=( p, x + c ) } \right\}
	\]
	and let	 $ \sum\limits_{ b_j } $ denote 
	\[ 
	\sum\limits_{
	\myfunc{b_j} { \mathrm{V}(G_t) \times \mathrm{T}(G_t) } { Support\left( \mathfrak{P}_{BB}(N) \right) \times \mathbb{N}|_1^{ n }} { (v,t_{c-1}) } { b_j(v,t_{c-1})=( p, x + c ) }
	}
	\] 
	
	\begin{Bremarknote}
		Or, analogously, let $ \{ b_j \} $ denote\footnote{ The same holds for $ \{ b \} $. } 
		\[ 
		\left\{ b_j \, \Bigg| \, \myfunc{b_j} { \mathrm{V}(G_t) \times \mathrm{T}(G_t) } {  \mathfrak{P}_{BB}(N)\times \mathbb{N}|_1^{ n }} { (v,t_{c-1}) } { b_j(v,t_{c-1})=( o_i , x + c ) } \right\}
		\]
		and let	 $ \sum\limits_{ b_j } $ denote 
		\[ 
		\sum\limits_{
			\myfunc{b_j} { \mathrm{V}(G_t) \times \mathrm{T}(G_t) } { \mathfrak{P}_{BB}(N)  \times \mathbb{N}|_1^{ n }} { (v,t_{c-1}) } { b_j(v,t_{c-1})=( o_ i , x + c ) }
		}
		\] 
		as in Definition~\ref{BdefPopulationasaset} and~Note~\ref{BnoteFunctionBinAN}, if the population is an ordered set of labels from a sequence.
	\end{Bremarknote}
	
\end{Bdefinition}

\begin{Bdefinition}\label{BdefAEAC}
	
	We denote the \emph{average emergent algorithmic complexity of a node/program (AEAC)} in an algorithmic network $ \mathfrak{N} = (G, \mathfrak{ P }, b) $ with network input $w$ and $ \left| \mathfrak{ P } \right| = N $, as 
	\begin{align*}
	&\mathbf{E}_{ \mathfrak{N} } 
	\left(
	{ {\displaystyle{\myDelta_{iso}^{net}} A} (o_i,c)} 
	\right)
	= \\
	&=
	{\tiny 
		\sum\limits_{ b }
	}
	\frac{
		\frac{
			\sum\limits_{ o_i \in \mathfrak{P} } { {\displaystyle{\myDelta_{iso}^{net(b)}} A} (o_i,c)}
		}
		{N} 
	}
	{ | \{ b \} | } 
	\end{align*}
	\begin{Bremarknote}\label{BnoteRestrictedfamilyGinAEAC}
		As in Note \ref{BnoteRestrictedfamilyGinN_BB}, if only one function $b$ exists per population, then there is only one possible network's topology linking each node/program in the population. So, in this case, 
		\[
		\mathbf{E}_{ \mathfrak{N} } 
		\left(
		{ {\displaystyle{\myDelta_{iso}^{net}} A} (o_i,c)} 
		\right)
		=
		\frac{
			\sum\limits_{ o_i \in \mathfrak{P} } { {\displaystyle{\myDelta_{iso}^{net(b)}} A} (o_i,c)}
		}
		{ N } 
		\]

	\end{Bremarknote}
	
\end{Bdefinition}

\begin{Bdefinition}\label{BdefEEACN_BB}
	
	We denote the \emph{expected emergent algorithmic complexity of a node/program} for algorithmic networks $ \mathfrak{N}_{BB}(N,f,t,\tau,j) $ ($ \mathbf{ EEAC_{BB} } $) with network input $w$, where $0<j \leq |\{ b_j \}|$, as 
	
	\begin{align*}
	&\mathbf{E}_{ \mathfrak{N}_{BB}(N,f,t,\tau) } 
	\left(
	{ {\displaystyle{\myDelta_{iso}^{net}} A} (o_i,c)} 
	\right)
	= \\
	&=
	{\tiny 
		\sum\limits_{ b_j  }
	}
	\frac{
		\frac{
			\sum\limits_{ o_i \in \mathfrak{P}_{BB}(N) } { {\displaystyle{\myDelta_{iso}^{net(b_j)}} A} (o_i,c)}
		}
		{N} 
	}
	{ | \{ b_j \} | }
	= \\ 
	&= 
	\frac{
		{
			\sum\limits_{ b_j  }
		}
		\frac{
			\sum\limits_{ o_i \in \mathfrak{P}_{BB}(N) } { A (\mathbf{U}(p_{net}^{b_j} ( o_i ,  c ) )) - 
				A (\mathbf{U}(p_{iso} ( p_i ,  c ))) }
		}
		{N}
	}
	{ | \{ b_j \} | } 
	\end{align*}
	
	\begin{Bremarknote}\label{BnoteRestrictedfamilyGinEEAC_BB}
		As in Notes \ref{BnoteRestrictedfamilyGinN_BB} and \ref{BnoteRestrictedfamilyGinAEAC}, if only one function $b_j$ exists per population, then only one possible network's topology will be linking each node/program in the population. So, in this case, 
		\[
		\mathbf{E}_{ \mathfrak{N}_{BB}(N,f,t,\tau) } 
		\left(
		{ {\displaystyle{\myDelta_{iso}^{net}} A} (o_i,c)} 
		\right)
		=
		\frac{
			\sum\limits_{ o_i \in \mathfrak{P}_{BB}(N) } { {\displaystyle{\myDelta_{iso}^{net(b_j)}} A} (o_i,c)}
		}
		{N} 
		\]

	\end{Bremarknote}
	
\end{Bdefinition}

% New Definition 

\begin{Bdefinition}\label{BdefAEOE}
	We say that a class (or family) of algorithmic networks $ \mathfrak{N} $, where $N$ is the population size, has the property of \emph{average (local) emergent open-endedness (AEOE)} for a given network input $w$ in $c$ cycles \textit{iff} 
	\[
	\lim_{ N \to \infty } \mathbf{E}_{ \mathfrak{N} } 
	\left(
	{ {\displaystyle{\myDelta_{iso}^{net}} A} (o_i,c)} 
	\right)
	=
	\infty
	\]  
\end{Bdefinition}

\begin{Bsubdefinition}\label{BdefEEOE}
	We say algorithmic networks $ \mathfrak{N}_{BB}(N,f,t,\tau) $ have the property of \emph{expected (local) emergent open-endedness (EEOE)} for a given network input $w$ in $c$ cycles \textit{iff} 
	\[
	\lim_{ N \to \infty } \mathbf{E}_{ \mathfrak{N}_{BB}(N,f,t,\tau) } 
	\left(
	{ {\displaystyle{\myDelta_{iso}^{net}} A} (o_i,c)} 
	\right)
	=
	\infty
	\] 
\end{Bsubdefinition}

% New definition

%%%%% Section: cycle-bounded conditional Halting probability

\section{Cycle-bounded conditional halting probability}\label{sectionOmega}

\subsection{Discussion on halting probabilities}

As presented in \cite{Chaitin1975,Li1997,Calude2002,Chaitin2004,Barmpalias2016,Downey2010,Calude2001,Calude2009}, the number $ \Omega $ is the probability of randomly picking a halting program given a program-size probability distribution for a self-delimiting language (i.e., an algorithmic-informational Lebesgue measure). So, a conditional probability $ \Omega(w) $ is the probability of randomly picking a halting program with $w$ as its input. A variation of halting probabilities on self-delimiting universal programming languages is the \emph{resource-bounded} halting probability: the probability of randomly picking a halting program on prefix universal machine with limited computation time or memory \cite{Li1997,Allender2011,Abrahao2015}. However, for isolated populations of arbitrary self-delimiting programs, another variation is needed. The ``scarce'' resource is not time or memory, but cycles (or, in our current model, communication rounds). Since isolated nodes/programs are computing during cycles (see Definition~\ref{BdefCycle} ) that determine the maximum number of times a node/program may reuse (through successive halting computations) its partial output before returning a final output, the \emph{cycle-bounded conditional halting probability} $ \Omega( w,c ) $ is the probability of randomly picking a program that halts for every cycle until $c$ with $w$ as its initial input.  

In Section~\ref{subsectionDiscussiononBBIM...} we have discussed that we want to compare the average algorithmic complexity of an algorithmic network playing the BBIG with its population when nodes/programs are completely isolated from each other. Thus, in order to prove the central Theorem \ref{BthmMain} one will need to calculate an upper bound for the expected algorithmic complexity of what a node/program does when isolated during the cycles. The cycle-bounded conditional halting probability $ \Omega( w,c ) $ is used with the purpose of normalizing the probabilities of halting nodes/programs in Lemmas \ref{BlemmaGibbsandalgorithmicentropy} and \ref{BlemmaComplexityonBarHalt}. Since we need to calculate probability in each case and the population is randomly generated, $ \Omega( w,c ) $ necessarily arises as an important constant in Theorem \ref{BthmMain}. 

\subsection{Definitions on cycle-bounded halting probability}

\begin{Bdefinition}\label{BdefHalt}
	We denote the population composed of $ p_i \in \mathfrak{P} $, where $ 1 \leq i \leq N $ and $ P_{prot} \circ p_i \in \mathbf{L}_{BB} $, such that $p_i$ always halts on network input $w$ in \emph{every} cycle until $c$ when $ P_{prot} \circ p_i $ is \emph{isolated} as
	\[
	Halt_{iso} ( \mathfrak{P}, w, c )
	\]
	
	\begin{Bsubdefinition}\label{BdefBarHalt}
		Analogously, we denote the population of $ p_i \in \mathfrak{P} $, where $ 1 \leq i \leq N $ and $ P_{prot} \circ p_i \in \mathbf{L}_{BB} $, such that $p_i$ does \emph{not} halt on network input $w$ in at least one cycle until $c$ when $ P_{prot} \circ p_i $ is isolated as
		
		\[
		\overline{ Halt }_{iso}( \mathfrak{P}, w, c ) 
		\] 
	\end{Bsubdefinition}
	
	\begin{Bremarknote}
		Note that, since a set is a special case of a multiset in which each element has multiplicity $1$, then both Definitions~\ref{BdefHalt} and~\ref{BdefBarHalt} are well-defined for an arbitrary language $L$ instead of a population $ \mathfrak{P} $, so that one can denote it as $ Halt_{iso}( L, w, c ) $ and $ \overline{ Halt }_{iso}( L, w, c ) $ respectively.
	\end{Bremarknote}
\end{Bdefinition}

\begin{Bdefinition}\label{BdefOmegawc}
	We denote the \emph{cycle-bounded conditional halting probability}\footnote{ That is, a conditional Chaitin's Omega number for isolated programs in a population of prefix Turing machines. } of an isolated node/program in a language $\mathbf{L_U}$ that always halts for an initial input $w$ in $c$ cycles as 
	\[
	\Omega( w,c )
	=
	\sum\limits_{ p_i \in Halt_{iso}( \mathbf{L_U}, w, c )  } 
	{ \frac {1} { 2^{ |p_i| } } }
	=
	\lim\limits_{ N \to \infty }
	\sum\limits_{ p_i \in Halt_{iso}( \mathbf{L_U}(N), w, c )  } 
	{ \frac {1} { 2^{ |p_i| } } }
	\]
	\begin{Bremarknote}
		Since $\mathbf{L_U}$ is self-delimiting and it is a complete code (see Definition~\ref{BdefConcatenation}), the probability of each program is well-defined (i.e., by classical Lebesgue measure in algorithmic information theory \cite{Li1997,Downey2010,Calude2002}). Hence, one can define the halting probability $\Omega$ \footnote{ Note that the Greek letter $\Omega$ here does not stand for an asymptotic notation opposed to the big O notation. } for $\mathbf{L_U}$. Further, the same holds for conditional halting probability $\Omega(w)$, i.e., the probability that a program halts when $w$ is given as input. Then, the set $ Halt_{iso}( \mathbf{L_U}, w, c ) $ of isolated nodes/programs that always halt on initial input $w$ in $ c \geq 1 $ cycles is a proper subset of the set of programs that halt. Therefore, 
		\[
		\Omega(w,c) \leq \Omega(w) = \Omega(w,1) < 1
		\]
		\noindent In fact, one can prove that
		\[
		\Omega(w,c') \leq \Omega(w,c)
		\]
		\noindent when $ c' \geq c > 0 $. On the other hand, one can also build programs that always halt for every input and for every number of cycles. So, for every $ c > 0 $
		\[
		0 < \Omega(w,c) < 1
		\] 
	\end{Bremarknote}

\end{Bdefinition}

\section{Average diffusion density}\label{sectionTau}

\subsection{Discussion on diffusion in the BBIG}
On the ``networked side of the equation'' in Theorem \ref{BthmMain}, as discussed in Section~\ref{subsectionDiscussiononBBIM...}, we are looking for a lower bound for the expected algorithmic complexity of what a node/program does when networked playing the Busy Beaver imitation game (BBIG). This is the value from which will be subtracted the upper bound for the expected algorithmic complexity of what a node/program does when isolated.

Remember that, in the BBIG, once nodes/programs start sharing their partial outputs, the only relevant feature that matters from then on is who has the largest integer. In other words, the neighbor with a bigger partial output will ``infect'' a node/program with a smaller partial output. Note that it may be analogous to a multi-``disease'' spreading, since there are several different graph arrangements of randomly generated nodes/programs in which there might be separated clusters with different respective ``diseases'' (in our case, different integers) spreading at the same time for a while. However, the BBIG rules cause one type of partial output to spread over any other partial output: the biggest partial output of the randomly generated population. As we will stress in proof of Theorem \ref{BthmMain}, this is one of the mathematical shortcuts to reach a desired proof of a lower bound for the expected emergent algorithmic complexity of what a node/program does when networked playing the BBIG. It is given by the average density of nodes/programs infected by a single source in Definition \ref{BdefAveragePartitionmaxN_BB} --- in the case, the source is the randomly generated node/program that returns the largest integer. Thus, we call it as \emph{average singleton diffusion density}. Note that, as there is a probability of two or more nodes/programs being generated which return the largest integer, the average singleton diffusion density is actually a lower bound for the \emph{average diffusion density} (which might be multisource). Therefore, although Lemma \ref{BlemmaMinComplexityonDiffusion} is proved using the average singleton diffusion density, it also holds for the average diffusion density --- since it works like a lower bound. The average diffusion density is related to prevalence in \cite{Pastor-Satorras2001a,Pastor-Satorras2002}, although there is no ``curing'' process in our case. And it might be also related to flooding time in \cite{Clementi2010} --- see also Section \ref{sectionRelatedandFuture}.

Indeed, this analogy with diffusion or spreading in complex networks is not only a coincidence but is also responsible for bringing the aspects and proprieties studied in complex networks to our current results in algorithmic networks. The EEAC in the BBIG is lower bounded by a value that depends on how powerful the diffusion is on a network. We address more specifically some of theses relations with complex networks in Section \ref{sectionRelatedandFuture}.

\subsection{Definitions on the average diffusion density}

\begin{Bdefinition}
	Let $ A_{max}( \mathfrak{N} , c ) $ denote the algorithmic complexity of the biggest final output returned by a member of the population $\mathfrak{P}$ during a maximum number of $c$ cycles, where $\mathfrak{N}=(G, \mathfrak{P},b)$.
	
	\begin{Bsubdefinition}\label{BdefA_max}
		In the case of $ \mathfrak{N}_{BB}(N,f,t,\tau, j) $ and $c=1$, for the sake of simplifying our notation, we will just denote it as $ A_{max} $. 
	\end{Bsubdefinition}

\end{Bdefinition}

\begin{Bdefinition}\label{BdefPartitionmax}
	Let $ t_i, t, t' \in \mathrm{T}(G_t) $ with $ t_i \leq t \leq t' $.
	We denote the fraction of nodes/programs to which the node/program with the ``best'' partial output at time instant $ t_i $ in the algorithmic network $ \mathfrak{N}=(G,\mathfrak{P},b) $ diffuses its partial output during time instants $ t $ until $ t' $, given that this diffusion started at time instant $t_i$, as
	\[ 
	{ \tau_{max}( \mathfrak{N}, t_i ) }|_{t}^{t'} 
	=
	\frac{ \left| \mathbf{X}_{ { \tau_{max}( \mathfrak{N}, t_i ) }|_{t}^{t'} } \right| }{ \left| \mathrm{V}( G )  \right| }
	\] 
	\noindent or
	\[
	{ \tau_{max}( G,\mathfrak{P},b, t_i ) }|_{t}^{t'}
	=
	\frac{ \left| \mathbf{X}_{ { \tau_{max}( \mathfrak{N}, t_i ) }|_{t}^{t'} } \right| }{ \left| \mathrm{V}( G )  \right| }
	\]
	
	\noindent and we call it as \emph{singleton diffusion density}.
	\begin{Bremarknote}\label{Bbestpartialoutput}
		The notion of what is the ``best'' partial output may vary on how the algorithmic network $\mathfrak{N}$ is defined --- in some algorithmic networks the notion of ``best'' partial output may even be not defined. We consider the ``best'' partial output as being the one that always affects the neighbors to which it is shared by making them to return a final result that is at least as ``good'' as the one that the node with the ``best'' partial output --- that is, the one that started this diffusion --- initially had. How good is a final result also depends on how $ \mathfrak{N} $ is defined and on how one defines what makes a result better than another (e.g., a fitness function). Whereas this general definition is not formally stated, these matters become formal and precise in Definition \ref{BdefPartitionmaxN_BB} in the particular case of the present paper --- see also~\cite{Chaitin2012,Chaitin2014,Hernandez-Orozco2016,Abrahao2015,Abrahao2016,Abrahao2016c,Hernandez-Orozco2018} for the Busy Beaver function as our complexity measure of fitness. 
	\end{Bremarknote}
	
	\begin{Bremarknote}
		Note that a node/program does only belong to $ \mathbf{X}_{ { \tau_{max}( \mathfrak{N}, t_i ) }|_{t}^{t'} } $ \textit{iff} its partial outputs between time instants $t$ and $t'$ historically depends on the information diffused by the node/program with the ``best'' partial output on time instant $ t_i $. 
	\end{Bremarknote}
	
	\begin{Bremarknote}
		\[
		0 \leq { \tau_{max}( \mathfrak{N}, t_i ) }|_{t}^{t'} \leq 1
		\]
	\end{Bremarknote}

	\begin{Bsubdefinition}\label{BdefPartitionmaxN_BB}
		We denote the fraction of nodes/programs to which the node/program with the biggest partial output at the first cycle in the algorithmic network $ \mathfrak{N}_{BB}(N,f,t_i,\tau, j) $ diffuses this partial output during time instants $t$ until\footnote{ Note that the nodes that got ``infected'' at the end of time interval between time instants $ t' - 1 $ and $ t' $ are not being taken into account in this fraction $ \tau_{max} $.} $t'$, given that this diffusion started at time instant $t_i$, as
		
		\[
		{ \tau_{max}( N,f,t_i,\tau, j ) }|_{t}^{t'}
		=
		\frac{ \left| \mathbf{X}_{ { \tau_{max}( N,f,t_i,\tau, j ) }|_{t}^{t'} } \right| }
		{ N }
		\] 
	\end{Bsubdefinition}
	
	\begin{Bsubsubdefinition}\label{BdefAveragePartitionmaxN_BB}
		In the average case for all possible respective node mappings into the population, we denote the \emph{average (or expected) singleton diffusion density} as
		
		\[
		{ \tau_{\mathbf{E}(max)}( N,f,t_i,\tau ) }|_{t}^{t'} 
		= 
		\sum_{\tiny 
			b_j
		} \frac{ { \tau_{max}( N,f,t_i,\tau, j ) }|_{t}^{t'}
		}
		{ | \{ b_j \} |
		}
		\] 
	\end{Bsubsubdefinition}
	
	\begin{Bsubsubremarknote}
		Note that this mean is being taken from a classical uniform distribution on the space of functions $ b_j $. An interesting future research will be to extend the results of this article to non-uniform cases on $ b_j $. 
	\end{Bsubsubremarknote}

\end{Bdefinition}

\section{Time centralities}\label{sectionTimecentralities}

\subsection{Discussion on how to trigger EEOE}\label{subsectionDiscussiontimecentralitiesandEEOE}

We have previously mentioned that our model of algorithmic networks playing the BBIG was built on time-varying graphs (see Section \ref{sectionBusyBeaverGame}), so that the static case become also covered as an instance of time-varying graphs --- see Definition~\ref{BdefStaticNetwork} and Note \ref{BnoteDiameterandstaticgraphs}. In fact, a preliminary result for static networks was presented in \cite{Abrahao2016b}. The reader is also invited to note that a model in which the population size varies during the cycles is a particular case of $ \mathfrak{N}_{BB}( N, f, t_i, \tau, j ) $ where $ N = \omega $ \footnote{ Note that $ \omega $ is the first infinite ordinal number and it is an ordered set with the same cardinality of the natural numbers $ \mathbb{N} $.} and each new edge is ``added'' in $ G_t $ in the exact cycle (or time instant) that an edge links the older network's population to a newborn node/program \footnote{ And analogously a similar procedure applies when nodes are being ``killed'' during the cycles.} --- and we leave for future research proving the present results for graphs with $ |\mathrm{V}(G_t)| = \omega $. Within the framework of algorithmic dynamic networks in which one is interested in investigating conditions that trigger EEOE, a natural question is when it is the best time instant (or cycle) to allow communication between nodes/programs with the purpose of generating more EEAC. Note that there may be cases where there is a cost in using a communication channel (i.e., an edge) for example. Since the information sharing relies only on diffusion processes in the BBIG played by algorithmic networks $ \mathfrak{N}_{BB}( N, f, t_i, \tau, j ) $, we tackle this problem following the approach and metrics developed in \cite{Guimaraes2013,Costa2015a}. The more powerful the diffusion, the more $ \mathfrak{N}_{BB}( N, f, t_i, \tau, j ) $ can be optimized (see discussion \ref{subsectionDiscussiononBBIM...}). So, in the present case, finding the central time to start a diffusion process is directly related to finding the central time to trigger expected emergent open-endedness.

We will present two related types of time centralities on $ \mathfrak{N}_{BB}( N, f, t_i, \tau, j ) $ (see \ref{BnoteunderdefRelatingtwotimecentralities}). Both relies on the chosen diffusion measure as discussed in Section~\ref{subsectionDiscussiononDGMD} and both depends on how time instants correspond to cycles (see function $c(x)$ in Definitions~\ref{BdefTimecentrality1} and \ref{BdefTimecentrality2}) . First, we define the central time to trigger expected emergent open-endedness in Definition~\ref{BdefTimecentrality1}. Second, we define the central time to trigger maximum expected emergent algorithmic complexity in Definition~\ref{BdefTimecentrality2} . One can think of these time centralities as the minimum value t for which the function $ f(N,t,\tau) $ minimizes the number of cycles, which depends on $ t + f(N,t,\tau) $, in order to generate increasing expected emergent algorithmic complexity of a node when $N$ tends to $\infty$, given a fraction $\tau$ of the nodes/programs as a diffusion ``yardstick''.

\subsection{Definitions of Time Centralities}\label{BdefTimecentralities}

% Old definitions

\begin{Bdefinition}\label{BdefTimecentrality1}
	Let $ w \in \mathbf{L_U} $ be a network input.
	Let $ 0 < N \in \mathbb{N} $.
	Let $ c(x) $ be a non-decreasing total computable function where
	\[
	\myfunc{c}{ \mathbb{N} } { \mathbb{N}^{*} } { x } { c(x)=y }
	\]
	\noindent Let $ \mathfrak{N}_{BB}(N,f,t_z,\tau,j) = ( G_t, \mathfrak{ P }_{BB}(N), b_j ) $, where $ 0 \leq z \leq | \mathrm{T}(G_t) | -1 $, be well-defined, where there is $ t_{z_0} \in \mathrm{T}(G_t) $ such that\footnote{ This condition directly assures that this definition of time centrality is well-defined.} 
	\[
	\lim\limits_{ N \to \infty } 
	\mathbf{E}_{ \mathfrak{N}_{BB}(N,f,t_{z_0},\tau) } 
	\left(
	{ {\displaystyle{\myDelta_{iso}^{net}} A} (o_i, c( z_0 + f( N, t_{z_0}, \tau ) + 2 ) )} 
	\right)
	=
	\infty
	\]
	We define the central time $ t_{cen_1} $ in generating \emph{unlimited} expected emergent algorithmic complexity of a node on a diffusion process over a fraction $ \tau $ that minimizes the number of node cycles\footnote{ Or communication rounds.} as a function\footnote{ A computable function.} of the topological diffusion measure $ f( N, t, \tau ) $ as
	
	\begin{align*}
	& t_{cen_1}(c) = \min \Bigg\{
	t_j \, 
	\Bigg| \,
	j + f( N, t_{j}, \tau ) + 1 = \min \Big\{
	t_i + f( N, t_i, \tau ) \, 
	\Big| t_i + f( N, t_i, \tau ) \leq \\
	& \leq z_0 + f( N, t_{z_0}, \tau ) + 1 \, \land
	\lim\limits_{ N \to \infty } 
	\mathbf{E}_{ \mathfrak{N}_{BB}(N,f,t_{i},\tau) } 
	\Big(
	{\displaystyle{\myDelta_{iso}^{net}} A} (o_i, c( i + f( N, t_{i}, \tau ) + 2 ) ) 
	\Big)
	=
	\infty 
	\Big\} 
	\Bigg\} 
	\end{align*}

\end{Bdefinition}

\begin{Bdefinition}\label{BdefTimecentrality2}
	Let $ w \in \mathbf{L_U} $ be a network input.
	Let $ 0 < N \in \mathbb{N} $.
	Let $ c(x) $ be a non-decreasing total computable function where
	\[
	\myfunc{c}{ \mathbb{N} } { \mathbb{N}^{*} } { x } { c(x)=y }
	\]
	\noindent Let $ \mathfrak{N}_{BB}(N,f,t_z,\tau,j) = ( G_t, \mathfrak{ P }_{BB}(N), b_j ) $, where $ 0 \leq z \leq | \mathrm{T}(G_t) | -1 $, be well-defined, where there is $ t_{z_0} \in \mathrm{T}(G_t) $ such that\footnote{ This condition directly assures that this definition of time centrality is well-defined.} 
	
	\begin{align*}
	& \forall x \in { \mathbb{N} }|_{0}^{ |\mathrm{T}(G_t)| - 1 } \Bigg( \lim\limits_{ N \to \infty } 
	\mathbf{E}_{ \mathfrak{N}_{BB}(N,f,t_x,\tau) } 
	\left(
	{ {\displaystyle{\myDelta_{iso}^{net}} A} (o_i, c( x + f( N, t_x, \tau ) + 2 ) )} \right)
	\leq \\
	& \leq 
	\lim\limits_{ N \to \infty } 
	\mathbf{E}_{ \mathfrak{N}_{BB}(N,f,t_{z_0},\tau) } 
	\left(
	{ {\displaystyle{\myDelta_{iso}^{net}} A} (o_i, c( z_0 + f( N, t_{z_0}, \tau ) + 2 ) )} 
	\right)
	\Bigg)
	\end{align*}
	
	We define the central time $ t_{cen_2} $ in generating the \emph{maximum} expected emergent algorithmic complexity of a node on a diffusion process over a fraction $ \tau $ as the population grows that minimizes the number of cycles as a function of the topological diffusion measure $ f( N, t, \tau ) $ as
	
	\begin{align*}
	& t_{cen_2}(c) = \min \Bigg\{
	t_j \, 
	\Bigg| \,
	j + f( N, t_{j}, \tau ) + 1 = \min \Big\{
	t_i + f( N, t_i, \tau ) \, 
	\Big| \, 
	i + f( N, t_{i}, \tau ) + 1 \leq \\
	& \leq z_0 + f( N, t_{z_0}, \tau ) + 1 \, \land \forall x  \Big( 
	\lim\limits_{ N \to \infty } 
	\mathbf{E}_{ \mathfrak{N}_{BB}(N,f,t_x,\tau) } 
	\left(
	{ {\displaystyle{\myDelta_{iso}^{net}} A} (o_i, c( x + f( N, t_x, \tau ) + 2 ) )} \right)
	\leq \\
	& \leq 
	\lim\limits_{ N \to \infty } 
	\mathbf{E}_{ \mathfrak{N}_{BB}(N,f,t_{i},\tau) } 
	\left(
	{ {\displaystyle{\myDelta_{iso}^{net}} A} (o_i, c( i + f( N, t_{i}, \tau ) + 2 ) )} 
	\right)
	\Big)
	\Big\}
	\Bigg\}
	\end{align*}
	
	\begin{Bremarknote}\label{BnoteunderdefRelatingtwotimecentralities}
		Note that, by definition, since $ | \mathrm{T}(G_t) |  \leq \infty $, if $ t_{cen_1} $ is well-defined, then $ t_{cen_2} = t_{cen_1} $. 
	\end{Bremarknote}
	
\end{Bdefinition}

%%%%%% Section: Average emergent open-endedness %%%%

\section{An EEOE Phenomenon in the BBIG}\label{sectionProofs}

In this section we will prove lemmas, theorems and corollaries with the purpose of showing that $ \mathfrak{N}_{BB}( N, f, t, \tau, j ) = ( G_t, \mathfrak{P}_{BB}(N), b_j ) $ is an algorithmic network capable of EEOE (see Definition \ref{BdefEEOE} ) under certain topological conditions of the graph $G_t$ (see Corollary \ref{BcorMain} ). We will show that there is a trade-off between the average diffusion density and the number of cycles (see Theorem \ref{BthmMain}). Moreover, once these topological properties are met, the concept of central time to trigger EEAC within the minimum number of cycles becomes well-defined (see Theorem \ref{BthmMainCentralTime}).

We split the proof of our main Theorem \ref{BthmMain} in six Lemmas \ref{BlemmaSLLNandAIT}, \ref{BlemmaComplexityp_i}, \ref{BlemmaGibbsandalgorithmicentropy}, \ref{BlemmaComplexityonHaltp_i}, \ref{BlemmaComplexityonBarHalt} and \ref{BlemmaMinComplexityonDiffusion}. Corollary \ref{BcorMain} does not only enables one to apply Theorem \ref{BthmMain} to a network diffusion measure like cover time \ref{BdefCovertime} but also bridges EEAC to time centrality by linking Theorems \ref{BthmMain} and \ref{BthmMainCentralTime} . The main idea behind the construction of the proof of the Theorem \ref{BthmMain} comes from combining an estimation of a lower bound for the average algorithmic complexity of a \emph{networked} node/program and an estimation of an upper bound for the expected algorithmic complexity of an \emph{isolated} node/program. Whereas the estimation of the former comes from the very BBIG from the network dynamics, the estimation of the latter comes from the law of large numbers, Gibb's inequality and algorithmic information theory applied on the randomly generated population $ \mathfrak{P}_{BB}(N) $. Thus, calculating the former estimation minus the latter gives directly the EEAC (see Definition \ref{BdefEEACN_BB}). Corollary \ref{BcorMain} follows directly from Theorem \ref{BthmMain}. The proof of Theorem \ref{BthmMainCentralTime} is an algebraic adjustment of Definitions \ref{BdefTimecentrality1} and \ref{BdefTimecentrality2} into Corollary \ref{BcorMain}.

In appendix \ref{appendix} we present extended versions of these proofs, except for the Corollaries \ref{BcorMain} and \ref{BcorDiameter}.

\subsection{Lemmas}\label{subsectionLemmas}

\begin{Blemma}[or extended Lemma~\ref{lemmaSLLNandAIT}]\label{BlemmaSLLNandAIT}
	Let $ \mathfrak{N}_{BB}( N, f, t, \tau, j ) = ( G_t, \mathfrak{P}_{BB}(N), b_j ) $ be an algorithmic network.
	Let $N$ be the size of a randomly generated population $ \mathfrak{P}_{BB}(N) $. Then, on the average as $N$ grows, we will have that there is a constant $C_4$ such that\footnote{ This result --- on random sequences following algorithmic probabilities --- is inspired by a similar argument found in \cite{Chaitin2012} with the purpose of building a true open-ended evolutionary model for software --- see also \cite{Chaitin2014,Abrahao2015}.  }
	\[ A_{max} \geq \lg(N) - C_4 \]
	\begin{proof}[Short proof]
		Let $\phi_{N}(p)$ be the frequency that $p$ occurs in a random sequence of size $N$. From the \emph{strong law of large numbers} and \emph{algorithmic information theory (AIT)} we have that 
		\begin{equation}\label{BstepSLLN}
		\mathbf{P} \left[ \, \lim\limits_{ N \to \infty } \phi_{N}(p) = \frac{ 1 }{ 2^{ | p | } } \, \right]
		=
		1
		\end{equation}
		\noindent Let $BB(k)$ be the Busy Beaver value for an arbitrary $ k \in \mathbb{N} $ on machine $ \mathbf{U} $.
		From AIT, we know there are constants $C_{\Omega}, C_{BB} \geq 0 $ and a program $p_{BB(k)}$ such that, for every $ w \in \mathbf{L_U} $
		\[
		\mathbf{U}( p_{BB(k)} \circ w) = \mathbf{U}( p_{BB(k)} ) = BB(k)
		\]
		\noindent and
		\[
		k - C_{\Omega}  \leq A(BB(k)) \leq | p_{BB(k)} | \leq k + C_{BB} 
		\] 
		\noindent and
		\[
		\forall x \geq BB(k) \; \big( \, A(x) \geq k - C_{\Omega} \, \big)
		\]
		\noindent Since $k$ was arbitrary, let $ k = \lg(N) - C_{BB} $.
		From Step \eqref{BstepSLLN} we will have that, when $N$ is large enough, one should expect that $ p_{BB( \lg(N) - C_{BB} )} $ occurs 
		\[ 
		\frac{ N }{ 2^{ | p_{BB( \lg(N) - C_{BB} )} | } } 
		\geq
		\frac{ N }{ 2^{ \lg(N) } } 
		=
		1
		\]
		\noindent times within $N$ random tries.
		Let $ C_4 = 2 \, C_{\Omega} + 2 \, C_{BB} $.
		Thus, from Definitions \ref{BdefL_BB}, \ref{BdefA_max} and \ref{BdefN_BB}, since any node/program count as isolated from the network when $c=1$, we will have that, when $ N \to \infty $,
		\begin{align*}
		& A_{max} 
		\geq 
		A(\mathbf{U}( P_{prot} \circ p_{BB( \lg(N) - C_{BB} )} )) - C_{BB} - C_{\Omega}
		= \\
		&
		=
		A(\mathbf{U}(p_{BB( \lg(N) - C_{BB} )})) - C_{BB} - C_{\Omega}
		\geq \\ 
		& 
		\geq 
		\lg(N) - C_{BB} - C_{\Omega} - C_{BB} - C_{\Omega}
		= 
		\lg(N) - C_4  
		\end{align*}
	\end{proof}
\end{Blemma}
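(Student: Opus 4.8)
The plan is to combine the strong law of large numbers (SLLN), applied to the randomly generated population, with standard facts from algorithmic information theory (AIT) about the Busy Beaver function, exactly in the spirit of the construction in \cite{Chaitin2012}. The key preliminary observation is that $A_{max}$ (as in Definition~\ref{BdefA_max}, i.e., with $c=1$) measures the algorithmic complexity of the largest final output produced by the population when the maximum number of cycles is one; since no communication has happened yet, every node behaves as if isolated, so by Definition~\ref{BdefL_BB} the prefix $P_{prot}$ is transparent and the final output of a node $o_i$ with suffix $p_i$ is simply $\mathbf{U'}(p_i \circ w)$, whose complexity differs from $A(\mathbf{U}(p_i \circ w))$ by at most $\mathbf{O}(1)$ by Definition~\ref{BdefU'}.

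First I would recall that, by Definition~\ref{BdefRandompopulation}, the population $\mathfrak{P}_{BB}(N)$ is a sequence of $N$ i.i.d.\ draws under the program-size distribution, so that for each fixed suffix program $p$ its empirical frequency $\phi_N(p)$ converges almost surely to $1/2^{|p|}$. Second I would invoke the AIT facts about the Busy Beaver value $BB(k)$ on $\mathbf{U}$: for every $k$ there is a program $p_{BB(k)}$ of length at most $k + C_{BB}$ that outputs $BB(k)$ no matter what string is appended to it, with $A(BB(k)) \geq k - C_\Omega$, and moreover every integer $x \geq BB(k)$ satisfies $A(x) \geq k - C_\Omega$. Third, I would choose $k = \lg(N) - C_{BB}$, so that the expected multiplicity of $p_{BB(k)}$ among the $N$ draws is at least $N / 2^{|p_{BB(k)}|} \geq N/2^{\lg N} = 1$; hence, as $N \to \infty$, with probability tending to $1$ at least one node of the population carries the suffix $p_{BB(k)}$ and therefore outputs (at least) $BB(k)$ at cycle $1$. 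Finally, setting $C_4 = 2C_\Omega + 2C_{BB}$ and absorbing into it the $\mathbf{O}(1)$ gaps coming from the transparent $P_{prot}$ prefix and from the $\mathbf{U}$-versus-$\mathbf{U'}$ switch, I would conclude $A_{max} \geq A(BB(k)) - C_{BB} - C_\Omega \geq k - 2C_\Omega - C_{BB} = \lg(N) - C_4$.

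The hard part will be stating precisely what ``on the average as $N$ grows'' means: the inequality $A_{max} \geq \lg(N) - C_4$ is not a deterministic fact about every realization of the random population but an asymptotic one, so it has to be phrased as holding with probability converging to $1$ (equivalently, it holds for the limiting expectation), which is exactly what the SLLN step delivers; one must also check that the single constant $C_4$ is independent of $N$. A second, more routine technical point is to verify that appending the network input $w$ (and the cycle-bookkeeping performed by $b_j$) to $p_{BB(k)}$ does not alter its output — this is why $p_{BB(k)}$ must be taken as a program that ignores any extra input, a standard padding construction in AIT. The extended version in the appendix would spell out the measure-theoretic details of the SLLN step and the explicit construction of $p_{BB(k)}$.
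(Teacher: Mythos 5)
Your proposal follows essentially the same route as the paper's own proof: the same SLLN step for the empirical frequencies $\phi_N(p)$, the same AIT facts about $BB(k)$ and the padding program $p_{BB(k)}$ that ignores appended input, the same choice $k = \lg(N) - C_{BB}$ yielding expected multiplicity at least one, and the same constant $C_4 = 2C_\Omega + 2C_{BB}$; your remarks on the probabilistic reading of ``on the average'' and on the transparency of $P_{prot}$ correspond to the paper's Note following the extended Lemma~\ref{lemmaSLLNandAIT} and its use of Definition~\ref{BdefL_BB}. No substantive difference.
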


\begin{Blemma}[or extended Lemma~\ref{lemmaComplexityp_i}] \label{BlemmaComplexityp_i}
	Given a population $ \mathfrak{P}_{BB}(N) $ defined in \ref{BdefP_BB} , where $ p_i \in Halt_{iso}( \mathbf{L}_{ \mathfrak{P}_{BB}(N) }, w, c ) $ and $ P_{prot} \circ p_i \in \mathfrak{P}_{BB}(N) $ and $ N \in \mathbb{N^*} $ is arbitrary, there is a constant $C_1$ such  that 
	\begin{align*}
	A (\mathbf{U}(p_{iso} ( p_i , c )))
	\leq
	C_1 + |p_i| + A(w) + A(c)
	\end{align*} 
	
	\begin{proof}[Short proof]
		We have that there is at least one program $p'$ such that, for every $p_i \in Halt_{iso}( \mathbf{L}_{ \mathfrak{P}_{BB}(N) }, w, c ) $, 
		\[
		\mathbf{U}(p_{iso} ( p_i , c ))
		=			 \mathbf{U} (p' \circ p_i \circ w \circ c)
		\] 
		Take the shortest such program $p'$. Then, from AIT and Definition \ref{BdefU'} we will have that there is constant $ C_{1} $ such that
		\[
		A ( \mathbf{U}(p_{iso} ( p_i , c )) )			 =
		A ( \mathbf{U} (p' \circ p_i \circ w \circ c) )
		\leq
		C_1 + |p_i| + A(w) + A(c)
		\]
	\end{proof}
\end{Blemma}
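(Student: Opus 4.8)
The plan is to exhibit a single fixed program on $\mathbf{U}$ that, given $p_i$ together with shortest descriptions of $w$ and $c$ as input, reproduces the entire isolated $c$-cycle computation of $P_{prot}\circ p_i$ and halts with its final output; counting the length of this program against Definition~\ref{BdefAlgComp} then yields the stated bound. First I would unwind Definition~\ref{BdefL_BB}: when the node $P_{prot}\circ p_i$ runs \emph{isolated}, its behaviour coincides with that of $p_i$ alone on the network input (so the prefix $P_{prot}$ is effectively stripped), and every later cycle merely reiterates the previous partial output as the next input to $p_i$. Hence the isolated $c$-cycle computation is the iterated composition $v_1=\mathbf{U'}(p_i\circ w)$ and $v_{k+1}=\mathbf{U'}(p_i\circ v_k)$ for $1\le k<c$, with $\mathbf{U}(p_{iso}(p_i,c))=v_c$ the final output in the notation of Definition~\ref{BdefEAConN_BB}. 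Because the hypothesis places $p_i$ in the halting class $Halt_{iso}(\mathbf{L}_{\mathfrak{P}_{BB}(N)},w,c)$ of Definition~\ref{BdefHalt}, each $v_k$ is defined and is produced by a genuinely halting computation, so no oracle call is ever triggered on these inputs; a plain universal simulation suffices, and the only gap between $\mathbf{U}$ and $\mathbf{U'}$ is the fixed successor operator of Definition~\ref{BdefU'}, which costs $\mathbf{O}(1)$.

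Next I would assemble the short program. Fix once and for all a program $p'$ on $\mathbf{U}$ that reads a self-delimiting tuple, parses its first component as a program $q\in\mathbf{L_U}$, runs its remaining two components to recover the strings it will use as $w$ and $c$, and then carries out the iteration above with $q$ in place of $p_i$, halting with $v_c$. Feeding $p'$ the concatenation $p_i\circ w^*\circ c^*$, where $w^*$ and $c^*$ are shortest $\mathbf{U}$-programs for $w$ and $c$ (so $|w^*|=A(w)$ and $|c^*|=A(c)$), produces a $\mathbf{U}$-program whose output is exactly $\mathbf{U}(p_{iso}(p_i,c))$. By the self-delimiting-preserving concatenation of Notation~\ref{BdefConcatenation}, joining a constant number of strings (here $p'$, $p_i$, $w^*$, $c^*$) adds only $\mathbf{O}(1)$ bits, so the total length is $|p'|+|p_i|+A(w)+A(c)+\mathbf{O}(1)$. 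Setting $C_1:=|p'|+\mathbf{O}(1)$ and invoking Definition~\ref{BdefAlgComp} gives $A(\mathbf{U}(p_{iso}(p_i,c)))\le C_1+|p_i|+A(w)+A(c)$, which is the claim. This also makes precise the step left implicit in the short proof above, where the passage from $|w|,|c|$ to $A(w),A(c)$ is exactly the replacement of $w$ and $c$ by their shortest descriptions.

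The step I expect to need the most care is the uniformity of $p'$: that a single program handles every $p_i$ in the class and has length a true constant, independent of $N$, of $p_i$, and of $w$ and $c$. This rests on three points I would make explicit: (i) $p_i$ is a self-delimiting program of $\mathbf{L_U}$, so $p'$ can locate where $p_i$ ends and the encodings of $w$ and $c$ begin; (ii) the isolated dynamics is a fixed computable driver, namely the reiteration prescribed by Definition~\ref{BdefL_BB}, not something that depends on the population $\mathfrak{P}_{BB}(N)$; and (iii) the membership $p_i\in Halt_{iso}(\mathbf{L}_{\mathfrak{P}_{BB}(N)},w,c)$ guarantees termination, so that $p'$ is a bona fide halting program on $\mathbf{U}$ rather than on $\mathbf{U'}$. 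Everything else --- additivity of lengths under the concatenation $\circ$, and the $\mathbf{O}(1)$ absorption of the successor shifts of Definition~\ref{BdefU'} and of the decoding wrapper for $w^*$ and $c^*$ --- is routine algorithmic-information-theoretic bookkeeping.
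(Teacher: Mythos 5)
Your proposal is correct and follows essentially the same route as the paper's proof: fix a single simulator program $p'$ that drives the isolated reiteration dynamics, concatenate it with $p_i$ and descriptions of $w$ and $c$, and invoke subadditivity of prefix complexity to absorb everything but $|p_i|+A(w)+A(c)$ into the constant $C_1$. Your explicit use of shortest descriptions $w^*$ and $c^*$, and your observation that the $Halt_{iso}$ hypothesis makes the simulation oracle-free on $\mathbf{U}$, merely spell out steps the paper leaves to its citation of AIT and Definition~\ref{BdefU'}.
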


\begin{Blemma}[or extended Lemma~\ref{lemmaGibbsandalgorithmicentropy}] \label{BlemmaGibbsandalgorithmicentropy}
	Given a population $ \mathfrak{P}_{BB}(N) $ defined in \ref{BdefP_BB} , where $ p_i \in \mathbf{L_U} $ and $ P_{prot} \circ p_i \in \mathfrak{P}_{BB}(N) $, we will have that
	\[
	\frac
	{1}
	{ \Omega(w,c) }
	\left( \lim_{ N \to \infty } \sum_{ p_i \in Halt_{iso}( \mathbf{L_U}(N), w, c ) } \frac
	{ |p_i| }
	{ 2^{ |p_i| } } \right)
	+ 
	\lg( \Omega(w,c) )
	\leq
	\lim_{ N \to \infty }
	\lg(\Omega(w,c) N) 
	\]
	
	\begin{proof}[Short proof]
		Since $\mathbf{L_U}$ is self-delimited, from AIT, Definitions~\ref{BdefOmegawc} and~\ref{BdefL_U}, Gibb's (or Jensen's) inequality \cite{Cover2005,MacKay2005} and Definition~\ref{BdefOmegawc} and from the law of large numbers, we will have that
		
		\begin{equation}\label{BstepNormalizedalgorithmicentropy}
		\lim_{ N \to \infty }
		\sum_{ p_i \in Halt_{iso}( \mathbf{L_U}(N), w, c ) } \frac
		{ |p_i| + \lg(\Omega(w,c)) }
		{ 2^{ |p_i| + \lg(\Omega(w,c)) } }
		= 
		\end{equation}
		\[
		= \frac
		{1}
		{ \Omega(w,c) }
		\left( \lim_{N \to \infty} \sum_{ p_i \in Halt_{iso}( \mathbf{L_U}(N), w, c ) } \frac
		{ |p_i| }
		{ 2^{ |p_i| } } \right)
		+ 
		\lg( \Omega(w,c) )
		\leq
		\]
		\begin{align*}\label{BstepGibbsinequality}
		& \leq
		\lim_{ N \to \infty }
		\lg( | Halt_{iso}( \mathbf{L_U}(N), w, c ) | )
		\leq
		\lim_{ N \to \infty }
		\lg( \Omega(w,c) N )
		\end{align*}
		
	\end{proof}
\end{Blemma}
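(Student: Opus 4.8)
The plan is to recognise the left-hand side as the Shannon entropy of an explicit probability distribution on the set of isolated, $c$-cycle-halting programs on input $w$, to bound that entropy from above by the logarithm of the size of the set via Gibbs' inequality, and then to control that size through the strong law of large numbers on the randomly generated population.

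First I would simplify the summand. Since $2^{|p_i| + \lg(\Omega(w,c))} = \Omega(w,c)\,2^{|p_i|}$, setting $q_i := \frac{1}{\Omega(w,c)\,2^{|p_i|}}$ for $p_i \in Halt_{iso}(\mathbf{L_U}, w, c)$ makes $\{q_i\}$ a genuine probability distribution on that (countably infinite) set, because $\sum_{p_i \in Halt_{iso}(\mathbf{L_U}, w, c)} q_i = 1$ by Definition~\ref{BdefOmegawc}; moreover each $q_i < 1$, since $1/2^{|p_i|}$ is one of the infinitely many strictly positive terms whose sum is $\Omega(w,c)$. A one-line computation then gives $-\sum_{p_i} q_i \lg q_i = \frac{1}{\Omega(w,c)}\sum_{p_i}\frac{|p_i|}{2^{|p_i|}} + \lg(\Omega(w,c))$, which is exactly the middle expression in the statement; the first equality then follows because the partial sums over $Halt_{iso}(\mathbf{L_U}(N), w, c)$ are monotone in $N$ and converge to the full series, while $\sum_{p_i \in Halt_{iso}(\mathbf{L_U}(N), w, c)} 1/2^{|p_i|} \to \Omega(w,c)$ (Definition~\ref{BdefOmegawc}) absorbs the $\lg(\Omega(w,c))$ cross-term in the limit.

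For the middle inequality I would work at finite $N$ and then pass to the limit. Writing $m_N = |Halt_{iso}(\mathbf{L_U}(N), w, c)|$ and $Q_N = \sum_{p_i \in Halt_{iso}(\mathbf{L_U}(N), w, c)} q_i \leq 1$, I renormalise $q$ on this finite set and apply Gibbs' inequality (equivalently, Jensen's inequality for the concave $\lg$): the entropy of any distribution on $m_N$ points is at most $\lg m_N$. Unwinding the renormalisation gives $\sum_{p_i \in Halt_{iso}(\mathbf{L_U}(N), w, c)} q_i \lg(1/q_i) \leq \lg m_N - Q_N \lg Q_N$; since $Q_N \to 1$ by Definition~\ref{BdefOmegawc}, the defect $-Q_N \lg Q_N \to 0$, so letting $N \to \infty$ yields $-\sum_{p_i} q_i \lg q_i \leq \lim_{N\to\infty}\lg(|Halt_{iso}(\mathbf{L_U}(N), w, c)|)$. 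Finally, for the last inequality $\lg(|Halt_{iso}(\mathbf{L_U}(N), w, c)|) \leq \lg(\Omega(w,c) N)$ I would invoke the strong law of large numbers exactly as in Lemma~\ref{BlemmaSLLNandAIT}: each of the $N$ i.i.d. draws constituting $\mathfrak{P}_{BB}(N)$ has suffix that is an isolated $c$-cycle-halting program on $w$ with probability $\sum_{p \in Halt_{iso}(\mathbf{L_U}, w, c)} 1/2^{|p|} = \Omega(w,c)$, so the number of such population elements counted with multiplicity is $\Omega(w,c) N (1 + o(1))$ almost surely, and $|Halt_{iso}(\mathbf{L_U}(N), w, c)|$, which only counts the distinct suffixes among these, is at most that many.

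I expect the main obstacle to be the bookkeeping around the limits rather than any single inequality: one must verify that the entropy series converges (its terms are nonnegative since each $q_i < 1$), that its partial sums are literally the quantities on the left, and that the normalisation defect $Q_N$ decays fast enough not to spoil the Gibbs bound — which it does, as $-Q_N\lg Q_N \to 0$ whenever $Q_N \to 1$. A secondary point to flag is that both of the rightmost quantities diverge with $N$, so what is actually used downstream in Theorem~\ref{BthmMain} are the finite-$N$ forms $\sum_{p_i \in Halt_{iso}(\mathbf{L_U}(N), w, c)} q_i\lg(1/q_i) \leq \lg m_N \leq \lg(\Omega(w,c)N) + o(1)$; I would phrase the proof so that these remain transparent, and the extended appendix version would spell out the renormalisation identity and the law-of-large-numbers estimate in full detail.
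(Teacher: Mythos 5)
Your proposal is correct and follows essentially the same route as the paper's proof: you identify the middle expression as the Shannon entropy of the distribution $q_i = \tfrac{1}{\Omega(w,c)\,2^{|p_i|}}$ normalized by the cycle-bounded halting probability, bound it via Gibbs'/Jensen's inequality by $\lg(|Halt_{iso}(\mathbf{L_U}(N),w,c)|)$, and then bound that cardinality by $\lg(\Omega(w,c)N)$ using the strong law of large numbers exactly as in Lemma~\ref{BlemmaSLLNandAIT}. Your extra care with the finite-$N$ renormalisation defect $Q_N$ and your remark that the downstream use in Theorem~\ref{BthmMain} really relies on the finite-$N$ forms are refinements of, not departures from, the paper's argument.
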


\begin{Blemma}[or extended Lemma~\ref{lemmaComplexityonHaltp_i}]\label{BlemmaComplexityonHaltp_i}
	Given a population $ \mathfrak{P}_{BB}(N) $ defined in Definition~\ref{BdefP_BB} , where $ p_i \in \mathbf{L_U} $ and $ P_{prot} \circ p_i \in \mathfrak{P}_{BB}(N) $, we will have that
	
	\[
	\lim_{ N \to \infty }
	\frac{
		\sum\limits_{ b_j } 
		\frac{
			\sum\limits_{ p_i \in Halt_{iso}( \mathbf{L}_{\mathfrak{P}_{BB}(N)} , w, c )  } 
			{ | p_i | }
		}
		{N}
	}
	{ | \{ b_j \} | }
	\leq
	\lim_{ N \to \infty }
	\Omega(w,c) \lg(N)
	\] 
	
	\begin{proof}[Short proof]
		Note that, from the definition of language $ \mathbf{L}_{BB} $ in \ref{BdefL_BB}, we have that the choice of $ p_i $ is independent of topology. Therefore, from Definitions~\ref{BdefL_U} and~\ref{BdefHalt}, from the law of large numbers (as in Lemma~\ref{BlemmaSLLNandAIT}) and from Lemma~\ref{BlemmaGibbsandalgorithmicentropy}, we will have that 
		\begin{align*} \label{BstepTopologicalindependenceofp_i}
		& \lim_{ N \to \infty }
		\frac
		{
			\sum\limits_{\tiny b_j } 
			\frac{
				\sum\limits_{ p_i \in Halt_{iso}( \mathbf{L}_{\mathfrak{P}_{BB}(N)}, w, c )  } 
				{ | p_i | }
			}
			{N}
		}
		{ | \{ b_j \} | }
		= \\
		& =
		\lim_{ N \to \infty }
		\sum\limits_{ p_i \in Halt_{iso}( \mathbf{L_U}(N), w, c )  } 
		\frac
		{ |p_i| }
		{ 2^{ |p_i| }  } 
		\leq \\
		& \leq
		\lim_{ N \to \infty }
		\Omega(w,c) \lg(N)
		\end{align*}
	\end{proof}
\end{Blemma}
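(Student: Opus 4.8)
The plan is to strip away the average over the placement maps $b_j$, rewrite the remaining population average as an algorithmic-probability-weighted sum over distinct suffixes, and then feed that sum into Lemma~\ref{BlemmaGibbsandalgorithmicentropy}, the only remaining ingredient being the elementary identity $\lg(\Omega(w,c)N)-\lg(\Omega(w,c))=\lg(N)$. First I would observe, as the short proof already hints, that by Definition~\ref{BdefL_BB} every node/program of $\mathfrak{P}_{BB}(N)$ has the shape $P_{prot}\circ p_i$ in which only the suffix $p_i$ is generated at random, while a map $b_j$ merely assigns the members of this population to vertices of $G_t$ without changing which suffixes occur. Consequently the inner quantity $\sum_{p_i\in Halt_{iso}(\mathbf{L}_{\mathfrak{P}_{BB}(N)},w,c)}|p_i|$ is the same for every $j$, so the outer average over $\{b_j\}$ is vacuous and the left-hand side equals $\frac{1}{N}\sum_{p_i\in Halt_{iso}(\mathbf{L}_{\mathfrak{P}_{BB}(N)},w,c)}|p_i|$.

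Next I would invoke the strong law of large numbers exactly as in Lemma~\ref{BlemmaSLLNandAIT}: with probability one the number of copies of any fixed program $p$ in the random population of size $N$ grows like $N/2^{|p|}$, and as $N\to\infty$ the set of distinct suffixes occurring (tracked by $\mathbf{L_U}(N)$ in Definition~\ref{BdefL_U}) exhausts $Halt_{iso}(\mathbf{L_U},w,c)$ among the always-halting isolated programs. Summing $|p_i|$ with multiplicity over $Halt_{iso}(\mathbf{L}_{\mathfrak{P}_{BB}(N)},w,c)$, dividing by $N$, and passing to the limit therefore yields
\[
\lim_{N\to\infty}\frac{1}{N}\sum_{p_i\in Halt_{iso}(\mathbf{L}_{\mathfrak{P}_{BB}(N)},w,c)}|p_i|
=
\lim_{N\to\infty}\sum_{p_i\in Halt_{iso}(\mathbf{L_U}(N),w,c)}\frac{|p_i|}{2^{|p_i|}}\text{ .}
\]

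Finally I would apply Lemma~\ref{BlemmaGibbsandalgorithmicentropy} to the right-hand side: multiplying that lemma's inequality through by $\Omega(w,c)>0$ and using $\lg(\Omega(w,c)N)-\lg(\Omega(w,c))=\lg(N)$ gives $\lim_{N\to\infty}\sum_{p_i\in Halt_{iso}(\mathbf{L_U}(N),w,c)}\frac{|p_i|}{2^{|p_i|}}\leq\lim_{N\to\infty}\Omega(w,c)\lg(N)$, which chains with the displayed equality to produce the claimed bound. The main obstacle is the middle step: turning the almost-sure convergence of the individual per-program frequencies into convergence of the \emph{length-weighted} population average requires controlling the contribution of long programs uniformly in $N$ (a dominated-convergence / uniform-integrability argument), using the completeness of the code $\mathbf{L_U}$ and $\Omega(w,c)<1$ so that the series in play behave; this is precisely the delicate part deferred to the extended proof of Lemma~\ref{lemmaComplexityonHaltp_i} in the appendix.
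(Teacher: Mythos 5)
Your proposal is correct and follows essentially the same route as the paper's own proof: you strip the vacuous average over $b_j$ via the topology-independence of the suffixes $p_i$, pass from the population average to the algorithmic-probability-weighted sum $\sum_{p_i}\frac{|p_i|}{2^{|p_i|}}$ via the strong law of large numbers as in Lemma~\ref{BlemmaSLLNandAIT}, and then apply Lemma~\ref{BlemmaGibbsandalgorithmicentropy} together with the identity $\Omega(w,c)\lg(\Omega(w,c)N)-\Omega(w,c)\lg(\Omega(w,c))=\Omega(w,c)\lg(N)$, exactly as the extended proof does. Your closing remark correctly flags that the frequency-to-expectation step for the length-weighted sum is the delicate point; the paper asserts it without the uniform-integrability justification you call for, so your attention to it is a slight strengthening rather than a deviation.
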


\begin{Blemma}[or extended Lemma~\ref{lemmaComplexityonBarHalt}]\label{BlemmaComplexityonBarHalt}
	Given a population $ \mathfrak{P}_{BB}(N) $ defined in \ref{BdefP_BB} , where $ p_i \in \mathbf{L_U} $ and $ P_{prot} \circ p_i = o_i \in \mathfrak{P}_{BB}(N) $, there is a constant $C_0$ such that
	\[
	\lim_{ N \to \infty }
	\frac{
		\sum\limits_{ b_j } 
		\frac{
			\sum\limits_{ p_i \in \overline{ Halt }_{iso} ( \mathbf{L}_{\mathfrak{P}_{BB}(N)}, w, c )  } 
			{ A ( \mathbf{U}(p_{iso} ( p_i , c )) ) }
		}
		{N}
	}
	{ | \{ b_j \} | }
	=
	C_0 ( 1 - \Omega(w,c) )
	\] 
	
	\begin{proof}[Short proof]
		Let $ A(0)=C_0 $. Remember Definitions~\ref{BdefP_BB}, \ref{BdefEAConN_BB}, \ref{BdefBarHalt}, and \ref{BdefOmegawc}. Since the population is sensitive to oracles, then
		\[
		A ( \mathbf{U}(p_{iso} ( p_i , c )) ) = A(0)=C_0
		\] 
		Therefore, we will have analogously to the proof of Lemma~\ref{BlemmaComplexityonHaltp_i} that
		\begin{align*} \label{BstepAlgorithmicentropyfromC_0}
		& 
		\lim_{ N \to \infty }
		\frac{
			\sum\limits_{ b_j } 
			\frac{
				\sum\limits_{ p_i \in \overline{ Halt }_{iso} ( \mathbf{L}_{\mathfrak{P}_{BB}(N)}, w, c )  } 
				{ A ( \mathbf{U}(p_{iso} ( p_i , c )) ) }
			}
			{N}
		}
		{ | \{ b_j \} | } 
		= \\
		& =
		\lim\limits_{ N \to \infty }
		\sum\limits_{ p_i \in \overline{ Halt }_{iso} ( \mathbf{L_U}(N), w, c )  } 
		{ \frac {A ( \mathbf{U}(p_{iso} ( p_i , c )) ) } { 2^{ |p_i| } } }
		= \\
		& =
		C_0 ( 1 - \Omega(w,c) )
		\end{align*}
	\end{proof}
\end{Blemma}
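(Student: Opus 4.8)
The plan is to use that $\mathfrak{P}_{BB}(N)$ is \emph{sensitive to oracles} (Definition~\ref{Bdefsensitivetooracles}) to reduce the numerator to a constant multiple of a cardinality, and then to evaluate the resulting average exactly as in the proofs of Lemmas~\ref{BlemmaSLLNandAIT} and~\ref{BlemmaComplexityonHaltp_i}, via the strong law of large numbers and the completeness of the code $\mathbf{L_U}$. Set $C_0 = A(0)$. By Definition~\ref{Bdefsensitivetooracles}, every node/program $o_i = P_{prot}\circ p_i$ with $p_i \in \overline{Halt}_{iso}(\mathbf{L}_{\mathfrak{P}_{BB}(N)}, w, c)$ — i.e., one that fails to halt in at least one of its first $c$ cycles when isolated — returns the final output $0$, so $A(\mathbf{U}(p_{iso}(p_i, c))) = A(0) = C_0$. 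Hence, for each fixed $N$ and each fixed $b_j$, the inner sum over $\overline{Halt}_{iso}$ collapses to $C_0\cdot\left|\overline{Halt}_{iso}(\mathbf{L}_{\mathfrak{P}_{BB}(N)}, w, c)\right|$, where the cardinality is counted with the multiplicities of the population (Definition~\ref{BdefP_BB}). Moreover, as in Lemma~\ref{BlemmaComplexityonHaltp_i}, membership of $p_i$ in $\overline{Halt}_{iso}$ depends only on the randomly generated suffix $p_i \in \mathbf{L_U}$ and not on $b_j$ — by Definition~\ref{BdefL_BB} neither the prefix $P_{prot}$ nor the topology affects the isolated computation — so the outer average $\frac{1}{\left|\{b_j\}\right|}\sum_{b_j}$ acts on a quantity independent of $b_j$ and may be dropped.

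It then remains to compute $\lim_{N\to\infty}\frac{C_0}{N}\left|\overline{Halt}_{iso}(\mathbf{L}_{\mathfrak{P}_{BB}(N)}, w, c)\right|$. Writing $\phi_N(p_i)$ for the frequency of $p_i$ in the random sequence of length $N$, as in Lemma~\ref{BlemmaSLLNandAIT}, this quantity equals $C_0\sum_{p_i\in\overline{Halt}_{iso}(\mathbf{L_U}(N), w, c)}\phi_N(p_i)$; by the strong law of large numbers $\phi_N(p_i)\to \frac{1}{2^{\left|p_i\right|}}$, and since $\mathbf{L_U}$ is a complete binary code (Notation~\ref{BdefConcatenation}) and $\overline{Halt}_{iso}(\mathbf{L_U}, w, c)$ is the complement of $Halt_{iso}(\mathbf{L_U}, w, c)$ in $\mathbf{L_U}$, Definition~\ref{BdefOmegawc} yields
\[
\sum_{p_i\in\overline{Halt}_{iso}(\mathbf{L_U}, w, c)}\frac{1}{2^{\left|p_i\right|}}
= 1 - \sum_{p_i\in Halt_{iso}(\mathbf{L_U}, w, c)}\frac{1}{2^{\left|p_i\right|}}
= 1 - \Omega(w,c).
\]
Therefore the limit equals $C_0\bigl(1-\Omega(w,c)\bigr)$, which is the desired identity.

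I expect the only genuine obstacle to be the interchange of $\lim_{N\to\infty}$ with the countably infinite sum over programs: term-by-term convergence $\phi_N(p_i)\to 2^{-\left|p_i\right|}$ does not by itself control the sum, so a uniform tail bound is needed. Since the expectation over the random generation of the population satisfies $\mathbb{E}\!\left[\phi_N(p)\right] = 2^{-\left|p\right|}$ exactly, we have $\mathbb{E}\!\left[\sum_{\left|p_i\right|>k}\phi_N(p_i)\right] = \sum_{\left|p_i\right|>k}2^{-\left|p_i\right|} \to 0$ as $k\to\infty$ uniformly in $N$; a standard truncation argument (splitting the sum at length $k$, passing to the limit on the finite head and bounding the tail) then gives convergence in expectation, which matches the ``on the average as $N$ grows'' phrasing of the statement, and the almost-sure version follows with an additional Borel--Cantelli step. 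The remaining bookkeeping — identifying $\mathbf{L_U}(N)$ with the support of the population and tracking multiplicities — is identical to that in Lemma~\ref{BlemmaComplexityonHaltp_i} and presents no new difficulty.
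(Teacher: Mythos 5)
Your proposal is correct and follows essentially the same route as the paper's proof: oracle-sensitivity collapses each term to $A(0)=C_0$, topological independence removes the average over $b_j$, and the strong law of large numbers together with the completeness of $\mathbf{L_U}$ turns the empirical average into $C_0\bigl(1-\Omega(w,c)\bigr)$. Your additional truncation argument justifying the interchange of the limit with the infinite sum is a welcome refinement of a step the paper leaves implicit, but it does not change the underlying approach.
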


\begin{Blemma}[or extended Lemma~\ref{lemmaMinComplexityonDiffusion}]\label{BlemmaMinComplexityonDiffusion}
	Let $ \mathfrak{P}_{BB}(N) $ be a population in an arbitrary algorithmic network $ \mathfrak{N}_{BB} (N, f, t, \tau, j)=(G_t, \mathfrak{P}_{BB} (N),b_j) $ as defined in \ref{BdefN_BB} and \ref{BdefP_BB} .
	Let $ t_0 \leq t \leq t' \leq t_{ |\mathrm{T}(G_t)|-1 } $. 
	Let $ c \in \mathfrak{C_{BB}}$ be an arbitrary number of cycles where $ c_0 + t' + 1 \leq c $. 
	Then, there is a constant $C_2$ such that
	\[
	\frac{
		{
			\sum\limits_{ b_j  }
		}
		\frac{
			\sum\limits_{ p_i \in \mathfrak{P}_{BB}(N) } { A (\mathbf{U}(p_{net}^{b_j} ( o_i ,  c ) ))  }
		}
		{N}
	}
	{ | \{ b_j \} | }
	\geq
	( A_{max} - C_2 ) \,
	{ \tau_{\mathbf{E}(max)}( N,f,t,\tau ) }|_{t}^{t'}
	+ C_2
	\]
	
	\begin{proof}[Short proof]
		Let $ C_2 = \min \{ A(w) \mid \exists x \in \mathbf{L_U} ( \mathbf{U}( x ) = w ) \} $. 
		From the Definitions \ref{BdefPartitionmaxN_BB}, \ref{BdefPartitionmax} , \ref{BdefL_BB} and \ref{BdefA_max} we will have that
		\[
		\frac{
			{
				\sum\limits_{ b_j  }
			}
			\frac{
				\sum\limits_{ o_i \in \mathfrak{P}_{BB}(N) } { A (\mathbf{U}(p_{net}^{b_j} ( o_i ,  c ) ))  }
			}
			{N}
		}
		{ | \{ b_j \} | } =
		\]
		\begin{align} \label{BstepSumoffractionsmax}
		& =
		\frac
		{ \sum\limits_{ b_j  } 
			\left(
			\frac
			{ \sum\limits_{ o_i \in \mathbf{X}_{ { \tau_{max}( N,f,t,\tau, j ) }|_{t}^{t'} } } { A (\mathbf{U}(p_{net}^{b_j} ( o_i ,  c ) ))  }  }
			{ { \tau_{max}( N,f,t,\tau, j ) }|_{t}^{t'} N }
			{ \tau_{max}( N,f,t,\tau, j ) }|_{t}^{t'}
			\right)
		}
		{ | \{ b_j \} | } +
		\end{align}
		\[
		+ 
		\frac
		{ \sum\limits_{ b_j  } 
			\left(
			\frac
			{ \sum\limits_{ o_i \in \mathbf{X}_{ { \tau_{max}( N,f,t,\tau, j ) } |_{t'}^{ D(G_t, t) } } } { A (\mathbf{U}(p_{net}^{b_j} ( o_i ,  c ) ))  }  }
			{ { \tau_{max}( N,f,t,\tau, j ) } |_{t'}^{ D( G_t, t ) } N }
			{ \tau_{max}( N,f,t,\tau, j ) } |_{t'}^{ D(G_t, t) }
			\right)
		}
		{ | \{ b_j \} | }
		\geq
		\] 
		\[
		=
		\frac
		{ \sum\limits_{ b_j  } 
			\left( 
			( A_{max} - C_2 )
			{ \tau_{max}( N,f,t,\tau, j ) }|_{t}^{t'}
			+ 
			C_2
			\right)
		}
		{ | \{ b_j \} | }
		\] 
	\end{proof}
\end{Blemma}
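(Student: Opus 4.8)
The plan is to bound the averaged networked complexity from below by decomposing, for each admissible mapping $b_j$, the population into the nodes the BBIG diffusion has already reached from the fittest first-cycle node/program by time $t'$ and the rest: the former contribute at least $A_{max}$ each, the latter at least a trivial minimal complexity, and a final average over $b_j$ turns the singleton diffusion density into its expected value.

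First I would set $C_2 = \min\{A(w) \mid \exists x \in \mathbf{L_U}\,(\mathbf{U}(x) = w)\}$, the least algorithmic complexity attained by any $\mathbf{U}$-output. Because $\mathfrak{P}_{BB}(N)$ is sensitive to oracles and runs on $\mathbf{U'}$ (Definitions~\ref{Bdefsensitivetooracles} and~\ref{BdefU'}), every node returns a well-defined final output in $\mathbf{L_U}$ --- possibly $0$, whose complexity $A(0)$ still satisfies $A(0) \geq C_2 \geq 0$ --- so $A(\mathbf{U}(p_{net}^{b_j}(o_i,c))) \geq C_2$ for all $o_i$ and all $b_j$.

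Next I would fix a mapping $b_j$ and split $\mathfrak{P}_{BB}(N)$ into the set $\mathbf{X}_{\tau_{max}(N,f,t,\tau,j)|_{t}^{t'}}$ of $\tau_{max}(N,f,t,\tau,j)|_{t}^{t'}\,N$ nodes whose partial outputs between $t$ and $t'$ historically depend on the node carrying the biggest partial output at the first cycle (Definition~\ref{BdefPartitionmaxN_BB}), and its complement of size $(1 - \tau_{max}(N,f,t,\tau,j)|_{t}^{t'})\,N$, which, by strong temporal connectivity (Note~\ref{BnoteCTconditions1}) and hence $D(G_t,t) < \infty$, coincides with the set $\mathbf{X}_{\tau_{max}(N,f,t,\tau,j)|_{t'}^{D(G_t,t)}}$ of nodes the single-source diffusion reaches only after $t'$ but within $D(G_t,t)$ steps. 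For a node in the first set, the contagion-only protocol (Definition~\ref{BdefITFOprot}) forbids any post-first-cycle computation, so the largest integer present anywhere in the network equals the biggest first-cycle partial output, and by clause~\ref{clauseIFPmain} of the imitation-of-the-fittest protocol the node relays that integer and in its last cycle outputs exactly it; hence $A(\mathbf{U}(p_{net}^{b_j}(o_i,c))) \geq A_{max}$. Grouping $\frac{1}{N}\sum_{o_i \in \mathfrak{P}_{BB}(N)} A(\mathbf{U}(p_{net}^{b_j}(o_i,c)))$ by this partition and using the two cardinality counts and $C_2 \geq 0$ gives the bound $A_{max}\,\tau_{max}(N,f,t,\tau,j)|_{t}^{t'} + C_2\,(1 - \tau_{max}(N,f,t,\tau,j)|_{t}^{t'}) = (A_{max} - C_2)\,\tau_{max}(N,f,t,\tau,j)|_{t}^{t'} + C_2$.

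Finally I would average over all $b_j$: since $A_{max}$ and $C_2$ do not depend on $b_j$, dividing by $|\{b_j\}|$ and summing replaces $\tau_{max}(N,f,t,\tau,j)|_{t}^{t'}$ by $\tau_{\mathbf{E}(max)}(N,f,t,\tau)|_{t}^{t'}$ (Definition~\ref{BdefAveragePartitionmaxN_BB}), which is exactly the claimed inequality. The step I expect to be the main obstacle is the decomposition: one has to verify that "historically depending on the biggest first-cycle output over $[t,t']$" is precisely the condition forcing a node to return that maximum integer --- the contagion-only protocol is indispensable here, since otherwise a node could extract a still-larger integer from a received partial output --- and one has to handle the off-by-one conventions baked into $\tau_{max}$ (nodes infected in the last interval $[t'-1,t']$ being excluded, per the footnote to Definition~\ref{BdefPartitionmaxN_BB}) and into $D(G_t,t)$ so that the two sets genuinely cover $\mathrm{V}(G_t)$ and the cardinality bookkeeping is consistent.
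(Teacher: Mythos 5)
Your proposal is correct and follows essentially the same route as the paper's proof: the same choice of $C_2$ as the minimal output complexity, the same partition of the population into the set reached by the diffusion from the fittest first-cycle node by time $t'$ and its complement, the same lower bounds $A_{max}$ and $C_2$ on the two parts, and the same final averaging over $b_j$ to convert $\tau_{max}$ into $\tau_{\mathbf{E}(max)}$. The coverage issue you flag is handled in the paper simply via the identity $\tau_{max}(N,f,t,\tau,j)|_{t}^{t'} + \tau_{max}(N,f,t,\tau,j)|_{t'}^{D(G_t,t)} = 1$ (with the noted replacement of the second fraction by $\left|\mathfrak{P}_{BB}(N) \setminus \mathbf{X}_{\tau_{max}(N,f,t,\tau,j)|_{t}^{t'}}\right|/N$ when $D(G_t,t)$ is undefined), which matches your bookkeeping.
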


\subsection{Theorem}
% Theorem 1

\begin{Bthm}[or extended Theorem~\ref{thmMain}]$\big($\textnormal{A lower bound for $ \mathrm{ EEAC_{BB} } $ from an arbitrary number of cycles}$\big)$\textbf{.} \label{BthmMain} 
	Let $ w \in \mathbf{L_U} $ be a network input. 	
	Let $ 0 < N \in \mathbb{N} $.
	Let $ \mathfrak{N}_{BB}(N,f,t,\tau,j) = ( G_t, \mathfrak{ P }_{BB}(N), b_j ) $ be well-defined.
	Let $ t_0 \leq t \leq t' \leq t_{ |\mathrm{T}(G_t)|-1 } $.
	Let 
	\[ \myfunc{c}{ \mathbb{N} } { \mathfrak{C_{BB}} } { x } { c(x)=y } \]
	\noindent be a total computable function where $ c(x) \geq c_0 + t'+1 $. 
	Then, we will have that:
	%s25p13l1
	\begin{align*}
	& \lim\limits_{ N \to \infty }
	\mathbf{E}_{ \mathfrak{N}_{BB}(N,f,t,\tau) } 
	\left(
	{ {\displaystyle{\myDelta_{iso}^{net}} A} (o_i,c(x))} 
	\right)
	\geq
	\lim\limits_{ N \to \infty }
	\left( 
	{ \tau_{\mathbf{E}(max)}( N,f,t,\tau ) }|_{t}^{t'}
	-
	\Omega(w,c(x))
	\right)
	\lg(N) - \\
	& - \Omega(w,c(x)) \lg(x) - 2 \, \Omega(w,c(x))\lg(\lg(x)) - A(w) - C_5
	\end{align*}

	\begin{proof}[Short proof]
		We have from our hypothesis on function $c$ and from AIT that there is $C_c \in \mathbb{N}$ such that, for every $ x \in \mathbb{N} $, 
		\begin{equation}\label{BAITonfunctionc}
		A( c(x) ) \leq C_c + A(x)
		\end{equation}
		\noindent Let $ C_5 = C_c + C_L + C_1 + C_4 - C_0 $. 
		Therefore, from Definitions~\ref{BdefEEACN_BB}, \ref{BdefHalt} , \ref{BdefBarHalt} , \ref{BdefOmegawc} , \ref{BdefAveragePartitionmaxN_BB} , \ref{BdefU'} and Lemmas \ref{BlemmaComplexityp_i} , \ref{BlemmaComplexityonBarHalt} , \ref{BlemmaComplexityonHaltp_i} , \ref{BlemmaMinComplexityonDiffusion} , \ref{BlemmaSLLNandAIT} and Step \eqref{BAITonfunctionc} we will have that
		\begin{align}\label{BstepDefEEAC}
		&\mathbf{E}_{ \mathfrak{N}_{BB}(N,f,t,\tau) } 
		\left(
		{ {\displaystyle{\myDelta_{iso}^{net}} A} (o_i,c(x))} 
		\right)
		=  
		\frac{
			{
				\sum\limits_{ b_j  }
			}
			\frac{
				\sum\limits_{ o_i \in \mathfrak{P}_{BB}(N) } { A (\mathbf{U}(p_{net}^{b_j} ( o_i ,  c(x) ) )) - 
					A (\mathbf{U}(p_{iso} ( p_i ,  c(x) ))) }
			}
			{N}
		}
		{ | \{ b_j \} | }
		\geq
		\end{align}
		%s1p1l1

		%s2p2l1
		
		%s3p2l2
		
		%s4p3l1
		
		%s5p3l2
		
		%s6p4l1
		\[
		\geq
		\lim\limits_{ N \to \infty }
		\frac{
			{
				\sum\limits_{ b_j  }
			}
			\frac{
				\sum\limits_{ o_i \in \mathfrak{P}_{BB}(N) } { A (\mathbf{U}(p_{net}^{b_j} ( o_i ,  c(x)) )) }
			}
			{N}
		}
		{ | \{ b_j \} | }
		-
		\]
		%s7p4l2
		\[
		-
		\left(
		\frac{ \sum\limits_{ b_j  }
			\left(
			\frac{
				\sum\limits_{p_i \in Halt_{iso}( L_{ \mathfrak{P}_{BB}(N) }, w, c(x)) } { C_1 + | p_i | + A(w) + A(c(x)) }
			}
			{N}
			+
			C_0 ( 1 - \Omega( w, c(x)) )
			\right)
		}
		{ | \{ b_j \} | }
		\right)	
		\geq
		\] 
		%s8p5l1
		
		%s9p5l2
		
		%s10p6l1
		
		%s11p6l2
		
		%s12p7l1
		\[
		\geq
		\lim\limits_{ N \to \infty }
		\left( A_{max} - C_2 \right) 
		{ \tau_{\mathbf{E}(max)}( N,f,t,\tau ) }|_{t}^{t'}
		+
		C_2
		-
		\]
		%s13p7l2
		\[
		-
		\left(
		\Omega(w,c(x)) \lg(N)
		+
		\Omega(w,c(x)) \big( C_1 + A(w) + A(c(x)) \big)
		+
		C_0 \big( 1 - \Omega( w, c(x)) \big)
		\right)
		\geq
		\]
		%s14p8l1
		\[
		\geq
		\lim\limits_{ N \to \infty }
		\left( \lg(N) - C_4 - C_2 \right) 
		{ \tau_{\mathbf{E}(max)}( N,f,t,\tau ) }|_{t}^{t'}
		+
		C_2
		-
		\]
		%s15p8l2
		\[
		-
		\left(
		\Omega(w,c(x)) \lg(N)
		+
		\Omega(w,c(x)) \big( C_1 + A(w) + A(c(x)) \big)
		+
		C_0 \big( 1 - \Omega( w, c(x)) \big)
		\right)
		\geq
		\]
		%s16p9l1
		
		%s17p9l2
		
		%s18p10l1
		
		%s19p10l2
		
		%s20p11l1

		%s21p11l2
		
		%s22p12l1
		
		%s23p12l2
		
		%s24p12l3
		
		%s25p13l1
		\[
		\geq
		\lim\limits_{ N \to \infty }
		\left( 
		{ \tau_{\mathbf{E}(max)}( N,f,t,\tau ) }|_{t}^{t'}
		-
		\Omega(w,c(x))
		\right)
		\lg(N)
		-
		\]
		%s26p13l2
		\[
		- \Omega(w,c(x)) \lg(x) - 2 \, \Omega(w,c(x))\lg(\lg(x)) - C_5 - A(w)
		\]
	\end{proof}
	
	\subsubsection{Corollary}
	% Corollary 1
	
	\begin{Bcorollary} $\big($\textnormal{A lower bound for $ \mathrm{ EEAC_{BB} } $ from a diffusion process as a function of cover time}$\big)$\textbf{.} \label{BcorMain} 
		Let $ w \in \mathbf{L_U} $ be a network input.
		Let $ 0 < N \in \mathbb{N} $.
		Let $ \mathfrak{N}_{BB}(N,f,t_z,\tau,j) = ( G_t, \mathfrak{ P }_{BB}(N), b_j ) $ be well-defined.
		Let $ t_z , \, t_{ z + f( N, t_z, \tau ) }  \in \mathrm{T}(G_t) $.
		Let $\tau \in \, ]0,1]$.
		Let $ \myfunc{c}{ \mathbb{N} } { \mathfrak{C_{BB}} } { x } { c(x)=y } $  be a total computable function where 
		\[ 
		c(z + f( N, t_z, \tau ) + 2) \geq c_0 + z + f( N, t_z, \tau ) + 2 
		\]
		Then, we will have that:
		%
		%s25p13l1
		\begin{align*}
		& \lim\limits_{ N \to \infty } 
		\mathbf{E}_{ \mathfrak{N}_{BB}(N,f,t_z,\tau) } 
		\left(
		{ {\displaystyle{\myDelta_{iso}^{net}} A} (o_i, c( z + f( N, t_z, \tau ) + 2 ) )} 
		\right)
		\geq \\
		& \geq
		\lim\limits_{ N \to \infty }
		\left( 
		{ \tau_{\mathbf{E}(max)}( N,f,t_z,\tau ) }|_{ t_z }^{ t_{ z + f( N, t_z, \tau ) }  }
		-
		\Omega(w, c( z + f( N, t_z, \tau ) + 2 ))
		\right)
		\lg(N) - \\
		& - \Omega(w, c( z + f( N, t_z, \tau ) + 2 )) \lg( z + f( N, t_z, \tau ) + 2) - \\
		& - 2 \, \Omega(w, c( z + f( N, t_z, \tau ) + 2 ))\lg(\lg( z + f( N, t_z, \tau ) + 2)) - A(w) - C_5
		\end{align*}
		
		\begin{proof}
			Remember the Definition \ref{BdefN_BB} . We have that $ t_z \leq t_{ z + f( N, t_z, \tau ) } $. Note that $ z + f( N, t_z, \tau ) $ goes as an index of $ t_{z + f( N, t_z, \tau )} $, so that one must add $1$ on the right side of the condition $ c(x) \geq c_0 + t' +1 $ in Theorem~\ref{BthmMain} in order to make $ c_0 + t_{z + f( N, t_z, \tau )} +1 = c_0 + z + f( N, t_z, \tau ) + 2 $. Therefore, the conditions from Theorem~\ref{BthmMain} are satisfied. Then, the proof of Collorary~\ref{BcorMain} follows directly from replacing $t$ with $t_z$, $t'$ with $ t_{ z + f( N, t_z, \tau ) } $ and $x$ with  $z + f( N, t_z, \tau ) +2$ in Theorem~\ref{BthmMain} .
		\end{proof}
		
	\end{Bcorollary}
	
\end{Bthm}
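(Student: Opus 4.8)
The plan is to obtain Corollary~\ref{BcorMain} as an immediate specialization of Theorem~\ref{BthmMain}: the corollary introduces no new estimate, it merely re-indexes the bound of the theorem so that the starting time instant $t$ is taken to be an arbitrary $t_z$ and the terminal time instant $t'$ is taken to be $t_{z+f(N,t_z,\tau)}$, i.e.\ the time instant reached after exactly $f(N,t_z,\tau)$ further steps of the diffusion. First I would fix $w \in \mathbf{L_U}$, $0 < N \in \mathbb{N}$, $\tau \in \,]0,1]$, a well-defined $\mathfrak{N}_{BB}(N,f,t_z,\tau,j) = (G_t,\mathfrak{P}_{BB}(N),b_j)$ with $t_z,\, t_{z+f(N,t_z,\tau)} \in \mathrm{T}(G_t)$, and a total computable $c\colon \mathbb{N} \to \mathfrak{C_{BB}}$ satisfying $c(z+f(N,t_z,\tau)+2) \geq c_0 + z + f(N,t_z,\tau) + 2$. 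Then, in the notation of Theorem~\ref{BthmMain}, I would put $t := t_z$, $t' := t_{z+f(N,t_z,\tau)}$ and $x := z+f(N,t_z,\tau)+2$, and verify that these choices satisfy the hypotheses of that theorem.

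The verification is purely a matter of index bookkeeping. By Definitions~\ref{BdefFamilyG} and~\ref{BdefCovertime} the cover-time value $f(N,t_z,\tau)$ is a nonnegative integer, so the index $z+f(N,t_z,\tau)$ is at least $z$; together with $t_{z+f(N,t_z,\tau)} \in \mathrm{T}(G_t)$ this gives $t_0 \leq t_z = t \leq t' = t_{z+f(N,t_z,\tau)} \leq t_{|\mathrm{T}(G_t)|-1}$, as Theorem~\ref{BthmMain} requires. For the condition $c(x) \geq c_0 + t' + 1$ one must keep in mind that $z+f(N,t_z,\tau)$ appears as an \emph{index} of $t_{z+f(N,t_z,\tau)}$, and that under the natural ordering of $\mathrm{T}(G_t)$ (Definition~\ref{BdefTVG}) one has $t_i = i+1$; hence $t' = z+f(N,t_z,\tau)+1$, so $c_0 + t' + 1 = c_0 + z + f(N,t_z,\tau) + 2$, which is exactly the assumed lower bound on $c(x)$. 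This $+1$ shift between ``index'' and ``time instant'' is the one place where an off-by-one slip could occur, and it is essentially the only nontrivial point of the argument.

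With all hypotheses checked, I would substitute $t = t_z$, $t' = t_{z+f(N,t_z,\tau)}$ and $x = z+f(N,t_z,\tau)+2$ into the conclusion of Theorem~\ref{BthmMain} and simply read off the inequality: $ { \tau_{\mathbf{E}(max)}( N,f,t,\tau ) }|_{t}^{t'}$ becomes $ { \tau_{\mathbf{E}(max)}( N,f,t_z,\tau ) }|_{ t_z }^{ t_{ z + f( N, t_z, \tau ) } }$, every $\Omega(w,c(x))$ becomes $\Omega(w,c(z+f(N,t_z,\tau)+2))$, the penalties $\lg(x)$ and $\lg(\lg(x))$ become $\lg(z+f(N,t_z,\tau)+2)$ and $\lg(\lg(z+f(N,t_z,\tau)+2))$, and $A(w)$ and $C_5$ are unchanged. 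This yields precisely the claimed bound, with no additional limit manipulation beyond what is already carried out inside Theorem~\ref{BthmMain}. I do not anticipate a genuine obstacle here; the purpose of the corollary is only to stage the bound of Theorem~\ref{BthmMain} in a form that depends explicitly on a diffusion metric — here the cover time $f$ — so that it can later be plugged into the time-centrality definitions (Definitions~\ref{BdefTimecentrality1} and~\ref{BdefTimecentrality2}) and used to establish Theorem~\ref{BthmMainCentralTime}.
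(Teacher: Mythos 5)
Your proposal is correct and follows essentially the same route as the paper: the corollary is obtained by substituting $t = t_z$, $t' = t_{z+f(N,t_z,\tau)}$, and $x = z+f(N,t_z,\tau)+2$ into Theorem~\ref{BthmMain}, with the only nontrivial point being the index-versus-time-instant shift $t_{z+f(N,t_z,\tau)} = z+f(N,t_z,\tau)+1$ that turns the hypothesis $c(x) \geq c_0 + t' + 1$ into the assumed bound $c(x) \geq c_0 + z + f(N,t_z,\tau) + 2$. This is exactly the off-by-one accounting the paper's own proof performs, so no gap remains.
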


\subsection{Theorem}
%%%% Central theorem on time centralities 

\begin{Bthm}[or extended Theorem~\ref{thmMainCentralTime}] $\big($\textnormal{Central time in reaching unlimited EEAC from the BBIG played by a randomly generated algorithmic population}$\big)$\textbf{.} \label{BthmMainCentralTime}
	Let $ w \in \mathbf{L_U} $ be a network input.
	Let $ 0 < N \in \mathbb{N} $.
	If there is $ 0 \leq z_0 \leq | \mathrm{T}(G_t) | -1 $ and $ \epsilon, \, \epsilon_2 > 0 $ such that 
	\[
	z_0 + f( N, t_{z_0}, \tau ) + 2 
	= 
	\mathbf{ O }
	\left( \frac
	{ N^{ C } }
	{ \lg(N) } 
	\right)
	\]
	\noindent where
	\[ 
	0
	<
	C = 
	\frac
	{
		{ \tau_{\mathbf{E}(max)}( N,f,t_{z_0},\tau ) }|_{ t_{z_0} }^{ t_{ z_0 + f( N, t_{z_0}, \tau ) }  }
		-
		\Omega(w, c_0 + z_0 + f( N, t_{z_0}, \tau ) + 2 )
		-
		\epsilon
	}
	{ \Omega(w,  c_0 + z_0 + f( N, t_{z_0}, \tau ) + 2  ) }
	\leq
	\frac{1}{ \epsilon_2 }
	\]
	Then, for every non-decreasing total computable function $ \myfunc{c}{ \mathbb{N} } { \mathfrak{C_{BB}} } { x } { c(x)=y } $, where $ t_{z_0}, \, t_{ z_0 + f( N, t_{z_0}, \tau ) } \in \mathrm{T}(G_t) $ and $ c(z_0 + f( N, t_{z_0}, \tau ) + 2) \geq c_0 + z_0 + f( N, t_{z_0}, \tau ) + 2 $ and $\mathfrak{N}_{BB}(N,f,t_{z_0},\tau,j) = ( G_t, \mathfrak{ P }_{BB}(N), b_j ) $ is well-defined, we will have that there are $ t_{cen_2}(c) $ and $ t_{cen_1}(c) $ such that
	\[
	t_{cen_2}(c) = t_{cen_1}(c) \leq t_{ z_0 }
	\]

	\begin{proof}[Short proof]
		Suppose that there is $ t_{z_0} \in \mathrm{T}(G_t) $, where $ 0 \leq {z_0} \leq | \mathrm{T}(G_t) | -1 $, and a small enough $\epsilon > 0$ such that
		\begin{equation}\label{BstepSuppositiononz_0}
		{z_0} + f( N, t_{z_0}, \tau ) + 2
		= 
		\mathbf{ O }
		\left( \frac
		{ N^{ C } }
		{ \lg(N) } 
		\right)
		\end{equation}
		\noindent where
		\[ 
		0
		<
		C = 
		\frac
		{
			{ \tau_{\mathbf{E}(max)}( N,f,t_{z_0},\tau ) }|_{ t_{z_0} }^{ t_{ z_0 + f( N, t_{z_0}, \tau ) }  }
			-
			\Omega(w,  c_0 + z_0 + f( N, t_{z_0}, \tau ) + 2 )
			-
			\epsilon
		}
		{ \Omega(w, c_0 + z_0 + f( N, t_{z_0}, \tau ) + 2 ) }
		\]
		\noindent Let \[ C' = \frac
		{
			{ \tau_{\mathbf{E}(max)}( N,f,t_{z_0},\tau ) }|_{ t_{z_0} }^{ t_{ z_0 + f( N, t_{z_0}, \tau ) }  }
			-
			\Omega(w, c( z_0 + f( N, t_{z_0}, \tau ) + 2 ))
			-
			\epsilon
		}
		{ \Omega(w, c( z_0 + f( N, t_{z_0}, \tau ) + 2 )) } \]
		\noindent Thus, since we are assuming $ c(z_0 + f( N, t_{z_0}, \tau ) + 2) \geq c_0 + z_0 + f( N, t_{z_0}, \tau ) + 2 $, we will have from Definitions~\ref{BdefOmegawc} and~\ref{BdefAveragePartitionmaxN_BB} that there is $ \epsilon_2 > 0  $ such that 
		\begin{equation} \label{BstepDominancebyC'}
		z_0 + f( N, t_{z_0}, \tau ) + 2
		= 
		\mathbf{ O }
		\left( \frac
		{ N^{ C' } }
		{ \lg(N) } 
		\right)
		\end{equation}
		\noindent where
		\[ 
		0
		\leq
		C' = 
		\frac
		{
			{ \tau_{\mathbf{E}(max)}( N,f,t_{z_0},\tau ) }|_{ t_{z_0} }^{ t_{ z_0 + f( N, t_{z_0}, \tau ) }  }
			-
			\Omega(w, c( z_0 + f( N, t_{z_0}, \tau ) + 2 ) )
			-
			\epsilon
		}
		{ \Omega(w, c( z_0 + f( N, t_{z_0}, \tau ) + 2 ) ) }
		\leq
		\frac{1}{ \epsilon_2 }
		\]		
		Therefore, from Corollary~\ref{BcorMain} and Step~\eqref{BstepDominancebyC'}, we will have that there is a constant $ C_6 $ such that
		\begin{equation}\label{BstepMainthmMainCentralTime}
		\lim\limits_{ N \to \infty } 
		\mathbf{E}_{ \mathfrak{N}_{BB}(N,f,t_{z_0},\tau) } 
		\left(
		{ {\displaystyle{\myDelta_{iso}^{net}} A} (o_i, c( z_0 + f( N, t_{z_0}, \tau ) + 2 ) ) } 
		\right)
		\geq
		\end{equation}
		\begin{align*}
		%p2
		& \geq
		\lim\limits_{ N \to \infty }
		\left( 
		{ \tau_{\mathbf{E}(max)}( N,f,t_{z_0},\tau ) }|_{ t_{z_0} }^{ t_{ z_0 + f( N, t_{z_0}, \tau ) }  }
		-
		\Omega(w, c( z_0 + f( N, t_{z_0}, \tau ) + 2 ))
		\right)
		\lg(N) - \\
		& - \Omega(w, c( z_0 + f( N, t_{z_0}, \tau ) + 2 )) \lg( C_6 \, \frac
		{ N^{ C' } }
		{ \lg(N) } ) - \\
		& - 2 \, \Omega(w, c( z_0 + f( N, t_{z_0}, \tau ) + 2 ))\lg( \lg( C_6 \, \frac
		{ N^{ C' } }
		{ \lg(N) } ) ) - A(w) - C_5 
		\geq \\
		%%%%%%%%%
		%%%%%%%%
		& \geq
		\lim\limits_{ N \to \infty }
		\left( 
		\epsilon
		\right)
		\lg(N) - 
		\lg( C_6 ) 
		- 2 \, \lg( \frac{1}{ \epsilon_2 } )
		- \lg(\lg( N ))
		- A(w) - C_5 
		=
		\infty
		\end{align*}	
		Then, directly from the Definitions \ref{BdefTimecentrality1} and \ref{BdefTimecentrality2} and Step \eqref{BstepMainthmMainCentralTime}, since $ t_{ z_0 } $ satisfies these definitions, we will have that
		\[
		t_{cen_2}(c) = t_{cen_1}(c) \leq t_{ z_0 }
		\] 
	\end{proof}
\end{Bthm}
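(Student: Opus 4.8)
The plan is to obtain the statement directly from Corollary~\ref{BcorMain}, the monotonicity of the cycle-bounded conditional halting probability recorded in Definition~\ref{BdefOmegawc}, and an unwinding of Definitions~\ref{BdefTimecentrality1} and~\ref{BdefTimecentrality2} together with Note~\ref{BnoteunderdefRelatingtwotimecentralities}. The hypothesis already isolates the critical exponent, so the whole argument reduces to: (i) transport that exponent from the form in which it is stated to the form that actually appears once Corollary~\ref{BcorMain} is applied at $z=z_0$; (ii) substitute the asymptotic bound on $z_0+f(N,t_{z_0},\tau)+2$ into the lower bound of Corollary~\ref{BcorMain} and watch the coefficient of $\lg(N)$ collapse to $\epsilon$; (iii) read off that $t_{z_0}$ satisfies the requirement in Definition~\ref{BdefTimecentrality1}, hence $t_{cen_1}(c)$ exists and is $\le t_{z_0}$, and invoke Note~\ref{BnoteunderdefRelatingtwotimecentralities} to get $t_{cen_2}(c)=t_{cen_1}(c)$.

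First I would replace the exponent $C$, written in terms of $\Omega\big(w,\,c_0+z_0+f(N,t_{z_0},\tau)+2\big)$, by
\[
C' = \frac{{\tau_{\mathbf{E}(max)}(N,f,t_{z_0},\tau)}|_{t_{z_0}}^{t_{z_0+f(N,t_{z_0},\tau)}} - \Omega\big(w,\,c(z_0+f(N,t_{z_0},\tau)+2)\big) - \epsilon}{\Omega\big(w,\,c(z_0+f(N,t_{z_0},\tau)+2)\big)},
\]
which is the exponent actually produced by Corollary~\ref{BcorMain}. Since $c$ is non-decreasing and $c(z_0+f(N,t_{z_0},\tau)+2)\ge c_0+z_0+f(N,t_{z_0},\tau)+2$, Definition~\ref{BdefOmegawc} gives $\Omega\big(w,\,c(z_0+f(N,t_{z_0},\tau)+2)\big)\le \Omega\big(w,\,c_0+z_0+f(N,t_{z_0},\tau)+2\big)$; feeding this into the quotient (a smaller subtracted $\Omega$ in the numerator, a smaller $\Omega$ in the denominator) yields $C'\ge C>0$, so the hypothesis $z_0+f(N,t_{z_0},\tau)+2=\mathbf{O}(N^{C}/\lg(N))$ upgrades to $z_0+f(N,t_{z_0},\tau)+2=\mathbf{O}(N^{C'}/\lg(N))$, and one may re-pick $\epsilon_2>0$ so that $0\le C'\le 1/\epsilon_2$ as well.

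Next I would apply Corollary~\ref{BcorMain} with $z=z_0$ and substitute $z_0+f(N,t_{z_0},\tau)+2\le C_6\,N^{C'}/\lg(N)$ for a suitable constant $C_6$. Writing $\lg\!\big(C_6 N^{C'}/\lg(N)\big)=\lg(C_6)+C'\lg(N)-\lg(\lg(N))$ and clearing the denominator of $C'$, i.e. using $\Omega\big(w,\,c(z_0+f(N,t_{z_0},\tau)+2)\big)\,C' = {\tau_{\mathbf{E}(max)}(N,f,t_{z_0},\tau)}|_{t_{z_0}}^{t_{z_0+f(N,t_{z_0},\tau)}} - \Omega\big(w,\,c(z_0+f(N,t_{z_0},\tau)+2)\big) - \epsilon$, the coefficient of $\lg(N)$ in the lower bound of Corollary~\ref{BcorMain} telescopes to exactly $\epsilon$, while all remaining contributions --- the $-2\,\Omega(\cdot)\lg\lg(\cdot)$ term, $\lg(C_6)$, $\lg(1/\epsilon_2)$, $A(w)$ and $C_5$ --- are $\mathbf{O}(\lg\lg(N))$ or smaller. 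Hence
\[
\lim_{N\to\infty}\mathbf{E}_{\mathfrak{N}_{BB}(N,f,t_{z_0},\tau)}\!\left({\displaystyle{\myDelta_{iso}^{net}} A}\big(o_i,\,c(z_0+f(N,t_{z_0},\tau)+2)\big)\right)=\infty ,
\]
which is precisely the clause that $z_0$ is required to satisfy in Definitions~\ref{BdefTimecentrality1} and~\ref{BdefTimecentrality2}.

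With this $\infty$-limit in hand, $t_{cen_1}(c)$ (and $t_{cen_2}(c)$) is well-defined; since $t_{z_0}$ itself lies in the set over which the defining minimum in Definition~\ref{BdefTimecentrality1} is taken --- it meets both the $t_i+f(N,t_i,\tau)\le z_0+f(N,t_{z_0},\tau)+1$ constraint and the $\infty$-limit constraint --- we conclude $t_{cen_1}(c)\le t_{z_0}$, and Note~\ref{BnoteunderdefRelatingtwotimecentralities} then gives $t_{cen_2}(c)=t_{cen_1}(c)$. I expect the only delicate point to be the bookkeeping in step (ii): making sure the passage from $C$ to $C'$ via the monotonicity of $\Omega(w,\cdot)$ in the cycle count goes in the direction that preserves the big-$\mathbf{O}$ bound, so that the residual $\lg(N)$-coefficient comes out to $+\epsilon$ and not to something that could be non-positive.
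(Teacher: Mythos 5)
Your proposal is correct and follows essentially the same route as the paper's own proof: pass from the exponent $C$ (stated with $\Omega(w,\,c_0+z_0+f(N,t_{z_0},\tau)+2)$) to the exponent $C'$ (with $\Omega(w,\,c(z_0+f(N,t_{z_0},\tau)+2))$) via the monotonicity of $\Omega(w,\cdot)$ in the cycle count, substitute the resulting $\mathbf{O}(N^{C'}/\lg(N))$ bound into Corollary~\ref{BcorMain} at $z=z_0$ so that the $\lg(N)$-coefficient telescopes to $\epsilon$, and then read the conclusion off Definitions~\ref{BdefTimecentrality1} and~\ref{BdefTimecentrality2} together with Note~\ref{BnoteunderdefRelatingtwotimecentralities}. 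The one point you flag as delicate --- that the direction of the inequality $\Omega(w,\,c(\cdot))\le\Omega(w,\,c_0+\cdot)$ yields $C'\ge C>0$ and hence preserves the big-$\mathbf{O}$ hypothesis --- is handled exactly as in the paper.
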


\section{EEOE from a small diameter}\label{sectionDiameter}

We have previously established definitions and theorems for an abstract toy model of a simple optimization of average fitness through diffusion performed by a randomly generated population of interacting systems. It is shown in Theorem \ref{BthmMain} that there are conditions (e.g., in Theorem \ref{BthmMainCentralTime}) on the average diffusion density\footnote{ See Section \ref{sectionTau}. } and the number of cycles\footnote{ See Definition \ref{BdefCycle}. } that are sufficient to make the lower bound for the EEAC increase indefinitely when the population size grows toward infinity, giving rise to a phenomenon we call expected emergent open-endedness (EEOE). 

Following this purpose of theoretical investigation, we present in this section an application of our results using an important property in complex networks that fulfills these conditions mentioned to the above. Characterizing a small-diameter network, in which either the average shortest path length (also called as average geodesic distance) or the diameter (i.e., the maximum shortest path length) is ``small'' compared to the network size (i.e., when it is dominated by $\mathbf{O}(\lg(N))$ where $N$ is the network size), is a relevant topic in the current literature of complex networks \cite{Barabasi2003,Bollobas2004,Gershenson2011,Barabasi2009a}. Therefore, in this section, we aim to exemplify and reinforce the relevance of the study of models like the BBIG within the theory of algorithmic networks in relation to small-diameter networks (see Sections \ref{sectionIntro} and \ref{sectionAN} for more discussions). In fact, we will show that a small maximum shortest path length (that is, a small \emph{diameter}) gives rise to an emergent phenomenon: it increases the very potential\footnote{ In fact, a lower bound on the EEAC.} of average (or, in our case, expected) emergent open-endedness, should the respective network link a population of systems that mathematically behave like $ \mathfrak{N}_{BB}(N,f,t,\tau,j) = ( G_t, \mathfrak{ P }_{BB}(N), b_j ) $ (see Definition \ref{BdefN_BB} ). Thus, we not only present a model in which the \emph{small-diameter} phenomenon (SD) plays an central role in achieving the EEOE, but also suggest in a general way that EEOE in algorithmic networks might be related to why the SD would appear in real networks --- in which emergent complexity might be somehow advantageous to some degree.  

A family $ \mathbb{G}_{sm} $ of graphs will be defined in the same way as $ \mathbb{G} $ in Definition \ref{BdefFamilyG}. However, with two additional properties: first, there is a time instant on which every node can reach all nodes in a finite number of time intervals; second, taking this time instant as the starting point of diffusion, the graphs have a temporal diffusion diameter (see Definition \ref{BdefTemporaldiameter}) small compared to the size of the graph. The main idea is to ensure that a small-diameter phenomenon for dynamic networks, which relies on diffusion processes, eventually appears as the population size grows. Then, one can directly define an algorithmic network $ {\mathfrak{N}_{BB}}_{sm}(N,f,t,1,j) $ that is totally analogous to $ \mathfrak{N}_{BB}(N,f,t,\tau,j) $ except for replacing family $ \mathbb{G}(f, t, \tau) $ with $  \mathbb{G}_{sm}(f, t, 1) $.\footnote{ In Definition~\ref{BdefN_BB_sm}, as it is straightforward, we choose to omit a formal definition of $ {\mathfrak{N}_{BB}}_{sm}(N,f,t,1,j) $. } Thus, we will prove in Corollary \ref{BcorDiameter}, using in Theorem \ref{BthmMainCentralTime} the flexibility of Theorem \ref{BthmMain} and Corollary \ref{BcorMain}, that these conditions on $ \mathbb{G}_{sm}( f, t, 1 ) $ are sufficient for\footnote{ In a computably larger number of cycles compared to the temporal diffusion diameter.} existing a central time to trigger EEOE, which is determined by the smaller time instant from which the temporal diffusion diameter is minimal\footnote{ See Section~\ref{sectionTimecentralities}.}. 

\subsection{Definitions}

\begin{Bdefinition}\label{BdefFamilyG_sm}
	Let
	\begin{align*}
	& \mathbb{G}_{sm}(f,t,1) = \Big\{ G_t \mid  \, i = |\mathrm{V}(G_t)| \in \mathbb{N} \, \land \, f(i,t,1) = D(G_t,t)=\mathbf{O}\big( \lg( i ) \big) \, \land \\ 
	& \land \, \forall i \in \mathbb{N^*} \exists!G_t \in \mathbb{G}( f,t,\tau )  (\, |\mathrm{V}(G_t)|=i \,) \, \land \, \forall u \in \mathrm{V}(G_t) \exists x \in \mathbb{N} \big( x = d_t( G_t, t, u, 1 ) \big) \, \Big\}
	\end{align*}
	\noindent where 
	\[
	\myfunc{f}{ \mathbb{N^*} \times X \subseteq \mathrm{T}(G_t) \times Y \subseteq ]0,1] } { \mathbb{N} } { (x,t,\tau) } { y }
	\]

	\noindent be a \emph{family} of unique sized time-varying graphs which shares $ f(i,t,1) = D(G_t,t) = \mathbf{O}\big( \lg( i ) \big)$, where $i$ is the number of nodes, as a common property.
	
	\begin{Bremarknote}
		Note that the diameter might be much smaller indeed. For example, $ D(G_t,t)=\mathbf{O}\big( \frac{ \lg( N ) }{ \lg(\lg( N )) } \big) $ or $ D(G_t,t)=\mathbf{O}\big( \lg(\lg( N )) \big) $ are covered by Definition \ref{BdefFamilyG_sm} .
	\end{Bremarknote}	
	\begin{Bremarknote}
		As pointed in Notes~\ref{BnoteWeakeningfamilyG} and \ref{noteSLLN}, Corollary \ref{BcorDiameter} also holds with a weaker assumption on family $ \mathbb{G}_{sm}(f,t,1) $ in which
		\[
		\mathbf{P} \left[ \, \lim\limits_{ N \to \infty }
		\big( N = |\mathrm{V}(G_t)| \implies D(G_t,t)=\mathbf{O}\big( \lg( N ) \big) \, \big) 
		\, \right]
		=
		1
		\] 
	\end{Bremarknote}
\end{Bdefinition}

\begin{Bdefinition}\label{BdefN_BB_sm}
	Let the algorithmic network $ {\mathfrak{N}_{BB}}_{sm}(N,f,t,1,j) $ denote the same algorithmic network $ \mathfrak{N}_{BB}(N,f,t,1,j) $ defined in Definition~\ref{BdefN_BB}, except for replacing family $ \mathbb{G}(f, t, \tau) $ with family $ \mathbb{G}_{sm}(f,t,1) $.
\end{Bdefinition}

\subsection{Corollary}

\begin{Bcorollaryundersection}\label{BcorDiameter}
	Let $ w \in \mathbf{L_U} $ be a network input.
	Let $ 0 < N \in \mathbb{N} $. \\
	Let $ {\mathfrak{N}_{BB}}_{sm}(N,f,t_{z_0},1,j) = ( G_t, \mathfrak{ P }_{BB}(N), b_j ) $ be well-defined. Then, for every non-decreasing total computable function $ \myfunc{c}{ \mathbb{N} } { \mathfrak{C_{BB}} } { x } { c(x)=y } $ where $ t_{z_0} \in \mathrm{T}(G_t) $ and $ c(z_0 + f( N, t_{z_0}, 1 ) + 2) \geq c_0 + z_0 + f( N, t_{z_0}, 1 ) + 2 $, we will have that there are $ t_{cen_2}(c) $ and $ t_{cen_1}(c) $ such that
	\[
	t_{cen_2}(c) = t_{cen_1}(c) \leq t_{ z_0 }
	\] 
\end{Bcorollaryundersection}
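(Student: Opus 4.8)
The plan is to obtain Corollary~\ref{BcorDiameter} as a direct specialization of Theorem~\ref{BthmMainCentralTime}, exploiting the fact stressed in the discussion preceding it: Theorem~\ref{BthmMain} and Corollary~\ref{BcorMain} never use any property of the diffusion metric $f$ beyond its being a well-defined function that fixes the length of the diffusion window. Concretely, ${\mathfrak{N}_{BB}}_{sm}(N,f,t_{z_0},1,j)$ is, by Definition~\ref{BdefN_BB_sm}, the network $\mathfrak{N}_{BB}(N,f,t_{z_0},1,j)$ with $\mathbb{G}(f,t,\tau)$ replaced throughout by $\mathbb{G}_{sm}(f,t,1)$, where now $f(N,t_{z_0},1)=D(G_t,t_{z_0})$; so Theorem~\ref{BthmMain}, Corollary~\ref{BcorMain} and Theorem~\ref{BthmMainCentralTime} carry over with this reading of $f$, and it only remains to verify the numerical hypothesis of Theorem~\ref{BthmMainCentralTime} at $\tau=1$ for the given index $z_0$.

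The decisive input is the value of the average singleton diffusion density at $\tau=1$. For $G_t\in\mathbb{G}_{sm}(f,t,1)$ every vertex reaches, starting from $t_{z_0}$, all the others in finitely many time instants, and $f(N,t_{z_0},1)=D(G_t,t_{z_0})$ is by Definition~\ref{BdefTemporaldiameter} the maximum over source vertices of that number; hence the diffusion initiated at the first cycle by the node/program holding the largest partial output reaches all $N$ nodes within $f(N,t_{z_0},1)$ time instants, so that
\[
{ \tau_{\mathbf{E}(max)}( N,f,t_{z_0},1 ) }|_{ t_{z_0} }^{ t_{ z_0 + f( N, t_{z_0}, 1 ) }  }=1
\]
for all large $N$. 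I expect this step to be the main obstacle: one must reconcile the boundary convention of Definition~\ref{BdefPartitionmaxN_BB} (which discards the nodes infected in the last time interval of the window) with the claim that the density is exactly $1$; the reconciliation uses the slack already built into the cycle count $c\big(z_0+f(N,t_{z_0},1)+2\big)$ --- i.e.\ the fact, anticipated in Definition~\ref{BdefFamilyG_sm}, that a computably larger number of cycles than the bare temporal diffusion diameter is allowed --- so that every node is ultimately counted and the density does not drop to $1-o(1)$ or lower.

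Granting $\tau_{\mathbf{E}(max)}=1$, the remaining hypotheses of Theorem~\ref{BthmMainCentralTime} follow by routine estimates. Since $\mathbf{L_U}$ contains a program halting on every input for every number of cycles, there is a fixed constant $c_\Omega$ with $0<c_\Omega\le\Omega(w,c)\le\Omega(w)<1$ for all $c$ (cf.\ the remarks after Definition~\ref{BdefOmegawc}); put $\epsilon=\tfrac12\big(1-\Omega(w)\big)>0$ and $\epsilon_2=c_\Omega>0$, and write $\Omega_N=\Omega\big(w,\,c_0+z_0+f(N,t_{z_0},1)+2\big)$. Then
\[
C=\frac{\,{ \tau_{\mathbf{E}(max)}( N,f,t_{z_0},1 ) }|_{ t_{z_0} }^{ t_{ z_0 + f( N, t_{z_0}, 1 ) }  }-\Omega_N-\epsilon\,}{\Omega_N}=\frac{1-\Omega_N-\epsilon}{\Omega_N}
\]
stays in a fixed interval $[C_{\min},1/\epsilon_2]$ with $C_{\min}=\tfrac{1-\Omega(w)}{2\,\Omega(w)}>0$, because $\Omega_N\in[c_\Omega,\Omega(w)]$ and $1-\Omega_N-\epsilon\ge 1-\Omega(w)-\epsilon>0$. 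Finally, $\mathbb{G}_{sm}$ gives $z_0+f(N,t_{z_0},1)+2=z_0+D(G_t,t_{z_0})+2=\mathbf{O}(\lg N)$, and since $(\lg N)^2=o\!\big(N^{C_{\min}}\big)$ we get $z_0+f(N,t_{z_0},1)+2=\mathbf{O}\!\big(N^{C}/\lg N\big)$. All hypotheses of Theorem~\ref{BthmMainCentralTime} are thus met, so for every admissible non-decreasing total computable $c$ there exist $t_{cen_1}(c),t_{cen_2}(c)$ with $t_{cen_2}(c)=t_{cen_1}(c)\le t_{z_0}$, as claimed. The argument is unchanged under the probability-one weakening of the small-diameter hypothesis noted in Definition~\ref{BdefFamilyG_sm}, by invoking the strong law of large numbers as in Lemma~\ref{BlemmaSLLNandAIT}.
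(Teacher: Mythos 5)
Your proof is correct and follows essentially the same route as the paper's: both reduce the corollary to Theorem~\ref{BthmMainCentralTime} by observing that Theorem~\ref{BthmMain} and Corollary~\ref{BcorMain} never use $f=CT$, setting the average singleton diffusion density over the window $[t_{z_0},\,t_{z_0+f(N,t_{z_0},1)}]$ equal to $1$, and checking that $z_0+D(G_t,t_{z_0})+2=\mathbf{O}(\lg N)=\mathbf{O}\big(N^{C}/\lg N\big)$ for a $C$ bounded away from $0$ and above by $1/\epsilon_2$. Your explicit choices of $\epsilon$ and $\epsilon_2$, the bound $(\lg N)^2=o\big(N^{C_{\min}}\big)$, and the remark on the boundary convention of Definition~\ref{BdefPartitionmaxN_BB} only make explicit what the paper's proof leaves implicit.
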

	\begin{proof}[Main idea of the proof]
	The proof is analogous to Theorem \ref{BthmMainCentralTime}. However, the average singleton diffusion density assumes its maximum value $1$. And the small-diameter condition on family $ \mathbb{G}_{sm}(f,t,1) $ will assure that the conditions of Theorem \ref{BthmMainCentralTime} are met. First, remember that Corollary \ref{BcorMain} from Theorem \ref{BthmMain} only depends on function $ f( i, t, \tau) $ and not on it being equal to $CT(G_t, t, \tau)$. Second, the same is also valid for Theorem \ref{BthmMainCentralTime} that only depends on Corollary \ref{BcorMain} and also not on $ f( i, t, \tau) = CT(G_t, t, \tau) $.
	Therefore, we just replace $\tau$ with $1$ in Theorem \ref{BthmMainCentralTime} and the same proposition will also holds for
	\begin{align}\label{BstepMaincentraltimewithdiameter}
	f(i,t,1) = D(G_t,t)=\mathbf{O}\big( \lg( i ) \big) \end{align} 
	
	\end{proof}
	
	\begin{proof}
	Since $ G_t \in \mathbb{G}_{sm}(f,t,1) $, we will have from Definition \ref{BdefFamilyG_sm} that 
	\begin{align}\label{BstepFunctionfanddiameter}
	z_0 + f( N, t_{z_0}, 1 ) + 2
	=
	z_0 + D(G_t,t_{z_0}) + 2
	=
	z_0 + \mathbf{ O }\left( \lg(N) 
	\right) + 2
	=
	\mathbf{ O }\left( \lg(N) 
	\right)
	\end{align}
	\noindent where $ 0 \leq z_0 \leq | \mathrm{T}(G_t) | -1 $.
	As in Steps \eqref{BstepSuppositiononz_0} and \eqref{BstepDominancebyC'}, we will have from Definitions \ref{BdefTemporaldiameter}, \ref{BdefOmegawc}, \ref{BdefAveragePartitionmaxN_BB} and \ref{BdefFamilyG_sm} and Step \eqref{BstepFunctionfanddiameter} that there are constants $ \epsilon, \epsilon_2 > 0 $ such that
	\begin{align}\label{BstepCfromdiameter}
	& 
	0 
	<
	C = 
	\frac
	{
		{ \tau_{\mathbf{E}(max)}( N,f,t_{z_0},1 ) }|_{ t_{z_0} }^{ t_{ z_0 + f( N, t_{z_0}, 1 ) }  }
		-
		\Omega(w, c_0 + z_0 + f( N, t_{z_0}, 1 ) + 2 )
		-
		\epsilon
	}
	{ \Omega(w,  c_0 + z_0 + f( N, t_{z_0}, 1 ) + 2  ) }
	=
	\end{align}
	\[
	=
	\frac
	{
		1
		-
		\Omega(w, c_0 + z_0 + f( N, t_{z_0}, 1 ) + 2 )
		-
		\epsilon
	}
	{ \Omega(w,  c_0 + z_0 + f( N, t_{z_0}, 1 ) + 2  ) }
	\leq
	\frac{1}{ \epsilon_2 }
	\]
	We also have from Step \eqref{BstepFunctionfanddiameter} that if $\frac{1}{\epsilon_2}\geq C > 0$, then
	\begin{align}\label{BstepDiameterandSW}
	D(G_t,t_{z_0} ) = \mathbf{ O }
	\left( \lg(N) 
	\right) \implies z_0 + f( N, t_{z_0}, 1 ) + 2
	= 
	\mathbf{ O }
	\left(
	\frac
	{ N^{ C } }
	{ \lg(N) } 
	\right)
	\end{align}
	Then, from Steps \eqref{BstepCfromdiameter} and \eqref{BstepDiameterandSW}, we will have that the conditions in Theorem~\ref{BthmMainCentralTime} are satisfied. Therefore, there are $ t_{cen_2}(c) $ and $ t_{cen_1}(c) $ such that
	\[
	t_{cen_2}(c) = t_{cen_1}(c) \leq t_{ z_0 }
	\] 
\end{proof}
\begin{Bnote}\label{BnoteDiameterandstaticgraphs}
	As mentioned in discussions \ref{subsectionDiscussiononDGMD} and \ref{subsectionDiscussiontimecentralitiesandEEOE}, note that Corollary~\ref{BcorDiameter} also holds for static networks as in Definition~\ref{BdefStaticNetwork}. In this case, for strongly connected static graphs $G_s$ the usual diameter of graphs $diam(G_s)$ \cite{Bollobas2001} becomes equivalent to $D(G_s,t_0)$. Also, the notion of time centrality will be vacuous, since the graph's topology is always the same. Hence, for static graphs, the Corollary \ref{BcorDiameter} states only that the respective algorithmic network has the property of EEOE. 
\end{Bnote}

\section{Related Work and Future Research}\label{sectionRelatedandFuture}

The results presented in this article link algorithmic-informational properties with network-topological ones. Thus, we show that there may be graph  aspects that trigger EEOE (see Theorems \ref{BthmMain} and \ref{BthmMainCentralTime} and Corollaries \ref{BcorMain} and \ref{BcorDiameter}). These are embedded in the average singleton diffusion density (Definition \ref{BdefAveragePartitionmaxN_BB}) and function $c$ that takes a certain amount of cycles as input (see Corollaries \ref{BcorMain} and \ref{BcorDiameter} and Theorem \ref{BthmMainCentralTime}). In fact, our work suggests that the EEOE phenomenon benefits from topologies capable of faster spreading or flooding \cite{Clementi2010}. Although building variations of our models to fit real complex networks measurements for scale-free networks in \cite{Pastor-Satorras2001a,Pastor-Satorras2002} (through a Susceptible-Infected-Susceptible (SIS) variation of the BBIG model and the correspondence between prevalence and average diffusion density) will be a fruitful application of our models in network science, there would be no necessary advantage in some particular heterogeneity on the network structure.  Note that our result in Corollary~\ref{BcorDiameter} only uses the temporal diffusion diameter. We suggest that our mathematical results might indeed corroborate with computational experimental approaches in complex networks in \cite{Sole2002,Sole2004,Souza2012}. If an average optimization through diffusion corresponds somehow to expected emergent complexity, then graph sparseness (also discussed in \cite{Clementi2010}) and the small-diameter property may play a central role in achieving this optimization with minimum cost. A more accurate theoretical analysis of this potential corroborations is paramount for future research. 

With the same focus on complex networks analysis, an statistical or probabilistic information-theoretic counterpart of our work might be developed with further investigation on \cite{Sole2004, Dehmer2011, Mowshowitz2012}. The case more specific of an experimental approach centered on the relation between living systems and graph properties using statistical information is presented in \cite{Rashevsky1955,Walker2016,Kim2015}. 

Establishing relations between algorithmic information and statistical information like in \cite{Grunwald2004, Cover2005, Teixeira2011, Zenil2016} in the context of algorithmic networks is crucial to approximate, or find equivalences,  between measures from algorithmic information and measures from statistical information. In the particular case of an approach centered on emergence and self-organization, we suggest further investigation from \cite{Testa2000, Prokopenko2009, Fernandez2013}. Another interesting topic is whether the average (and the joint/global) emergent algorithmic complexity is sufficient to deal with synergistic and integrated information \cite{Williams2010, Griffith2012, Griffith2014, Oizumi2014, Wibral2015, Quax2017} or not,   as it directly gives a measure of irreducible information \cite{Li1997, Calude2002, Chaitin2004,Calude2009,Grunwald2008}. If one assumes the definition of synergy as the general phenomenon in which the whole system is irreducibly better in solving a common problem than the sum (or the union) of its parts taken separately, as is widely used in statistical information-theoretic approaches cited in the previous sentence, then there may be an immediate extension of (global or average local) emergent algorithmic complexity to (global or average local) synergistic algorithmic information. In order to prove the existence of such synergistic phenomenon directly from the present results, note that one may consider for example that the common goal of the algorithmic network is to increase the average fitness of each node/program (see Section~\ref{subsectionDiscussiononBBIM...}).  

The present model requires a hypercomputable function to deal with eventual nodes/programs that do not halt (see Definition \ref{BdefU'}). With the purpose of bringing this Busy Beaver imitation game to computer simulations or just to a fully computable model, the recursive relative incompressibility (or sub-incompressibility) as in \cite{Abrahao2015,Abrahao2016,Abrahao2016c} can be used to define a measure for resource-bounded algorithmic complexity \cite{Li1997,Allender2011}. Or one can use a measure through asymptotic approximations to algorithmic complexity as in \cite{Delahaye2012, Soler-Toscano2014, Zenil2016} in order to develop further investigations on algorithmic network complexity as an extension of an approach for graphs presented in \cite{Buhrman1999,Zenil2014,Zenil2017}. This way, one can aim at making the relative halting problem tractable while keeping the fitness function still relatively uncomputable.

There are further necessary investigations from a game-theoretical perspective. For example, in \cite{Mustafa2004}, it is shown a relation for reaching global consensus between a specific type of connectiveness of graphs and a local strategy (in our notation, protocols) for nodes. A transposition of this problem into an algorithmic network model would help to mathematically  understand global certification of best solutions (see Section~\ref{subsectionDiscussiononBBIM...}) from local ones for example. If fitness or payoff \cite{Hammerstein1994} is somehow connected to the complexity of the player's strategy, algorithmic networks would enable one to investigate game-theoretical consequences of randomly generated strategies for players without interaction in comparison to networked players \cite{Axelrod1997,Shoham2008} with global strategies, like the one in the BBIG. 

Another variation of the BBIG model in which the problem of consensus may be investigated is from synchronous distributed systems of failure-prone processors connected by small-diameter graphs as presented in \cite{Chlebus2006}. Adding the possibility of failure or crash to the BBIG model will require modifications on the information-sharing protocols (see Section \ref{sectionBusyBeaverGame} and Definition \ref{BdefIFP}) in order to solve the respective problem of consensus \cite{Chlebus2006} of choosing the fittest solution. A possible approach to this problem may be analogous to a BBIG variation under a SIS contagion model, as mentioned before in this section. A crash-prone system would introduce new difficulties on the plain diffusion that we have presented in this article; And a limit to the maximum expected number of failures in relation to the population size seems to be necessary to keep the phenomenon of expected emergent open-endedness in such algorithmic networks with crash-prone node/programs.   

We suggest that other relations to the fields we have mentioned in this Section and in Section \ref{sectionIntro} may configure a long-term horizon for future research. For example, one may relate the problem of evolvability versus robustness in evolutionary biology \cite{Wagner2008, Bateson2011} with intentional attack vulnerability and resilience to random failure in complex networks \cite{Greenbury2010}. To this end, one can investigate a variation of the BBIG model (see Section \ref{sectionBusyBeaverGame}) in which the diffused best solution may correspond to an increase in the distributed redundancy \cite{Randles2011}. Following this direction in the context of algorithmic networks, one may study why a network topology and an information-sharing protocol become relevant from an emergent open-ended evolutionary point of view that takes into account complexity and computational power in solving problems \cite{Hernandez-Orozco2016,Hernandez-Orozco2017}. While it is conjectured in complex networks theory that different types of real-world networks are indeed related by graph properties (e.g., the small-world phenomenon \cite{Kleinberg2000}), the theoretical approach we are developing may suggest (e.g., see Corollary \ref{BcorDiameter}) that they may be indeed \cite{Abrahao2016b}, when one assumes that problem solving, complexity increasing and networks are deeply connected through the concept of computation.

Regarding the general model of algorithmic networks (see Sections~\ref{subsectionDiscussionundersectionAN} and~\ref{subsectionDiscussiononGM} and Definition~\ref{BdefAN}), it stands as a computational model for networked distributed systems. As such, an important theoretical research would be studying equivalences and differences between algorithmic networks and other models, like Petri nets, population protocols, artificial neural networks, and cellular automata. For example, if an arbitrary model is Turing computable, then, by the universality of Turing machines, one can easily show that there is an algorithmic network that simulates\footnote{ In fact, with just a single node/program.} completely this other model. Furthermore, if every node/agent/neuron/cell in this model is Turing computable and the links between them can be represented univocally by a MultiAspect Graph (MAG)\footnote{ Or its respective variations, e.g., MultiAspect weighted graph, MultiAspect multigraph, MultiAspect hypergraph etc.}, one can also show that there is an algorithmic network that completely simulates this model within the same scale, i.e., with the same number of nodes/programs as the number of nodes/agents/neurons/cells. This may give rise to a definition of universality for distributed computable systems in which algorithmic networks are representative candidates. On the other hand, one may also investigate the opposite direction. That is, if another model can completely simulate algorithmic networks and if it can do within the same scale. Thus, the investigation of such theoretical models for distributed systems will be crucial to define a class of ``universal distributed systems'', i.e., the class of theoretical models for distributed computable systems that can simulate every other such model within the same scale. In addition, it may give rise to an hierarchy for networked distributed automata analogous to hierarchies (e.g., Chomsky–Schützenberger hierarchy) in automata theory and formal languages.

\section{Conclusion}\label{sectionConclusion}

In this article, we have presented definitions and theorems in order to investigate the problem of emergence of algorithmic complexity (or information) when a population of computable systems is networked. In particular, we have investigated conditions that enable the triggering of emergent open-endedness. That is, the conditions that trigger an unlimited increase of emergent complexity as the population size grows toward infinity.

First, we have introduced the general definition of an algorithmic network $ \mathfrak{N} $. This definition relies on a population of systems that runs on an arbitrarily chosen theoretical machine and relies on a MultiAspect Graph (MAG). This causes the population of systems to correspond to vertices in the respective MAG such that edges are communication channels that nodes/systems can use to send and receive information. 

Second, we have defined a particular mathematical model of algorithmic networks $ \mathfrak{N}_{BB} $ based on the simple imitation of the fittest neighbor: A type of algorithmic network that plays the Busy Beaver Imitation Game (BBIG). In this model, the randomly generated population of nodes/programs of a universal Turing machine is synchronous and non-halting nodes/programs always return the ``worst'' fitness/payoff. This population is networked by a time-varying graph. Hence, topological measures like cover time and temporal diffusion diameter can be promptly defined. Moreover, we have defined a graph-topological measure of diffusion (the average diffusion density) and the cycle-bounded conditional halting probability. These algorithmic networks $ \mathfrak{N}_{BB} $ can be seen as playing an optimization procedure where the whole is ``trying'' to increase the average fitness/payoff by diffusing through the network the best randomly generated solution.

Furthermore, we have presented the average (or expected) emergent algorithmic complexity of a node as the average (or expected) value obtained from the algorithmic complexity of what a node can do when running networked minus the algorithmic complexity of what the same node/program can do when running isolated. Thus, when this difference goes to infinity as the population size goes to infinity, this configures a phenomenon we have called as the \emph{average (or expected) emergent open-endedness}. This is a phenomenon akin to evolutionary open-endedness, but in the latter unlimited complexity occurs as successive mutations and natural selection apply over time. 

Since our network playing the BBIG is linked by a time-varying graph, a centrality for dynamic networks is defined for the optimum time instant to trigger expected emergent open-endedness in a way that it minimizes the amount of node cycles (i.e., communication rounds) needed to do this triggering. Moreover, we presented our main theorem proving that there is a lower bound for the expected emergent algorithmic complexity in algorithmic networks $ \mathfrak{N}_{BB}$ such that it depends on how much larger is the average diffusion density (in a given time interval) compared to the cycle-bounded conditional halting probability. This lower bound also depends on the parameter for which one is calculating the number of node cycles. In fact, we have proved a corollary showing that this parameter can be calculated for example from the cover time, so that our results hold even in the case of spending a computably larger number of node cycles compared to the cover time. Furthermore, from this corollary, we have proved that there are asymptotic conditions on the increasing diffusion power of the cover time as a function of the population size such that they ensure that there is a central time to trigger expected emergent open-endedness.

Additionally, we have made a small modification on the family of time-varying graphs of $ \mathfrak{N}_{BB} $ with the purpose of investigating what would happen if the networks were small-diameter. We replaced the cover time with the temporal diffusion diameter (i.e., the minimum number of time intervals to reach every node from any node) --- or the classical diameter (i.e., the maximum shortest path length) in the static case --- in the definition of this family of graphs. Indeed, we have shown that, in this case, the smallest temporal diffusion diameter (dominated by $\mathbf{O}(\lg(N))$, where $N$ is the network size) is sufficient for determining the central time to trigger expected emergent open-endedness.

Finally, we have suggested future research and discussed that our results might be related to problems in network science, information theory, computability theory, distributed computing, game theory, evolutionary biology, distributed systems, and automata theory.

% \section*{Acknowledgement}

% Authors acknowledge the partial support from CNPq through their individual grants: F. S. Abrah\~{a}o (313.043/2016-7), K. Wehmuth (312599/2016-1), and A. Ziviani~(308.729/2015-3). Authors also acknowledge the INCT in Data Science -- INCT-CiD (CNPq 465.560/2014-8).

%%

\bibliographystyle{acm}
\bibliography{2.1-CompleteRefs-Felipe.bib}

%%%%%%%%%%%%%%%%%%%%%%%%%%%%%%%%%%%%%

\section{Appendix}\label{appendix}

%%%%%%%%%%%%%%%%%%%%%%%%%%%%%%%

\subsection{Lemma 1 Extended}
% First Lemma Extended Revised 1.1

\begin{amslemma}[\ref{BlemmaSLLNandAIT}]\label{lemmaSLLNandAIT}
	\noindent \\
	Let $ \mathfrak{N}_{BB}( N, f, t, \tau, j ) = ( G_t, \mathfrak{P}_{BB}(N), b_j ) $ be an algorithmic network.\\
	Let $ N = | \mathbf{L}_{ \mathfrak{P}_{BB}(N) } | $.\\
	Then, on the average as $N$ grows, we will have that there is a constant $C_4$ such that
	\[ A_{max} \geq \lg(N) - C_4 \] \\
	
	\begin{noteunderlemma}\label{noteSLLN}
		Let $ \mathbf{P} \left[ \, X = a \, \right] $ be the usual notation for the probability of a random variable $X$ assuming value $a$. Or $ \mathbf{P} \left[ \, \mathit{statement \; S} \, \right] $ denote the probability of a $ statement \; S $ be true.
		Therefore, this theorem is formaly given by the strong law of large numbers \cite{Billingsley2012} as: there is a constant $C_4$ such that 
		
		\[
		\mathbf{P} \left[ \, \lim\limits_{ N \to \infty } A_{max} - \lg(N) + C_4  \geq 0 \, \right] = 1
		\] 
		
		In fact, the main results \ref{thmMain} and \ref{thmMainCentralTime} presented in this article can be translated into such probabilistic form by putting their last statements into the square brackets. For example, the reader is invited to check that, for finite subsets of $ \mathbf{L_U} $, the strong law of large numbers on a re-normalized probability distribution in the form $ \frac{1}{C \, 2^{|p|}} $ straightforwardly holds. Hence, in the limit as the size of this subset tends to $\infty$, a multiplicative term that tends to $1$ or an additive term that tends to $0$ would appear in Lemmas~\ref{lemmaSLLNandAIT}, \ref{lemmaComplexityonHaltp_i} and \ref{lemmaComplexityonBarHalt} and Theorem~\ref{thmMain}. For the sake of simplifying the notation and shortening the formulas, we have chosen to state our results without using this probabilistic form. \\
	\end{noteunderlemma}
	
	\begin{proof}
		\noindent \\
		From AIT, we know that the probability of occurring a program $ p \in \mathbf{L_U} $ is
		
		\begin{equation}
		\frac
		{ 1 }
		{ 2^{ | p | } }
		\end{equation} \\
		
		Let $\phi_{N}(p)$ be the frequency that $p$ occurs in a random sequence of size $N$. In the case, this random sequence is the randomly generated population $ \mathfrak{P} $.\\
		
		Define a Bernoulli trial on a random variable $ \mathbf{X} $ that assumes value $1$ \textit{iff} program $p$ occurs and assumes value $0$ \textit{iff} otherwise. Since this random sequence is identically distributed and/or define a binomial distribution where $ \sum\limits_{n=1}^{\infty} \frac{ Var[ X_n ] }{n^2} < \infty $, we will have from the \emph{strong law of large numbers} that 
		
		\begin{equation}\label{stepSLLN}
		\mathbf{P} \left[ \, \lim\limits_{ N \to \infty } \phi_{N}(p) = \frac{ 1 }{ 2^{ | p | } } \, \right]
		=
		1
		\end{equation} \\
		
		Therefore, when $N$ is large enough, one expects that $p$ occurs $ \frac{ N }{ 2^{ | p | } } $ times within $N$ random tries. That is, since $p$ was arbitrary, the probability distribution in $N$ random tries tends to match the program-size probability distribution on $\mathbf{L_U}$ when $N$ goes to $\infty$. \\
		
		Let $BB(k)$ be the Busy Beaver value for an arbitrary large enough $ k \in \mathbb{N} $ defined on machine $ \mathbf{U} $. We choose, for example, the definition of the Busy Beaver function in which $ BB(k) $ gives the largest value that a program $ p \in \mathbf{L_U} $, where $ |p| \leq k $, returns when running on machine $ \mathbf{U} $. \\
		
		From AIT~\cite{Chaitin2012,Chaitin2004,Li1997}, we know there are constants $C_{\Omega}, C_{BB} \geq 0 $ and a program $p_{BB(k)} \in \mathbf{L_U}$ such that, for every $ w \in \mathbf{L_U} $,
		
		\[
		\mathbf{U}( p_{BB(k)} \circ w) = \mathbf{U}( p_{BB(k)} ) = BB(k)
		\]
		
		\begin{equation}\label{stepAITmagic}
		\text{and}
		\end{equation}
		
		\[
		k - C_{\Omega}  \leq A(BB(k)) \leq | p_{BB(k)} | \leq k + C_{BB} 
		\]

		\begin{equation*}
		\text{and}
		\end{equation*}
		
		\[
		\forall x \geq BB(k) \; \Big( \, A(x)  \geq k - C_{\Omega} \, \Big)
		\] \\
		
		Since $k$ was arbitrary, let $ k = \lg(N) - C_{BB} $. \\
		
		From Step \eqref{stepSLLN} we will have that, when $N$ is large enough, one should expect that $ p_{BB( \lg(N) - C_{BB} )} $ occurs at least $ \frac{ N }{ 2^{ | p_{BB( \lg(N) - C_{BB} )} | } }  $ times where
		
		\[ 
		\frac{ N }{ 2^{ | p_{BB( \lg(N) - C_{BB} )} | } } 
		\geq
		\frac{ N }{ 2^{ \lg(N) } } 
		=
		1
		\]

		\noindent That is, from conditional probabilities,
		
		\begin{equation}
		\mathbf{P} \left[ \, \lim\limits_{ N \to \infty } \phi_{N}( p_{BB( \lg(N) - C_{BB} )} ) N \geq \frac{ N }{ 2^{ \lg(N) } } = 1 \, \right]
		\geq
		\end{equation}
		\[
		\geq \mathbf{P} \left[ \, \lim\limits_{ N \to \infty } \phi_{N}(p_{BB( \lg(N) - C_{BB} )})N = \frac{ N }{ 2^{ | p_{BB( \lg(N) - C_{BB} )} | } } \, \right]
		=
		1
		\] \\
		
		Let $ C_4 =  2 \, C_{\Omega} + 2 \, C_{BB} $.
		%\footnote{ In fact, one can make $  C_4 =  C_{\Omega} + 2 \, C_{BB} + \max\{C_U, C_{\Omega} \} $  in order to achieve a better approximation in Step \eqref{stepInequalitieschain}. } \\
		
		From Definitions \ref{BdefL_BB}, \ref{BdefA_max} and \ref{BdefN_BB} and Step \eqref{stepAITmagic} , since any node/program count as isolated from the network when $c=1$, we will have that, for large enough $ N $, \\
		\begin{center}if\end{center}
		
		\[
		\phi_{N}( p_{BB( \lg(N) - C_{BB} )} ) N \geq  1
		\]
		
		\begin{equation} \label{stepInequalitieschain}
		\text{then}
		\end{equation}
		
		\begin{align*}
		& A_{max} 
		\geq \\
		& \geq 
		\lg(N) - C_{BB} - C_{\Omega} 
		\geq \\
		& \geq
		\big| p_{BB( \lg(N) - C_{BB} )} \big| - C_{BB} - C_{\Omega}
		\geq \\
		& \geq 
		A(BB( \lg(N) - C_{BB} )) - C_{BB} - C_{\Omega}   = \\
		& =
		A(\mathbf{U}( P_{prot} \circ p_{BB( \lg(N) - C_{BB} )} )) - C_{BB} - C_{\Omega} 
		= \\
		&=
		A(\mathbf{U}(p_{BB( \lg(N) - C_{BB} )})) - C_{BB} - C_{\Omega}  = \\
		& =
		A(BB( \lg(N) - C_{BB} )) - C_{BB} - C_{\Omega}  \geq \\
		& \geq 
		\lg(N) - C_{BB} - C_{\Omega} - C_{BB} - C_{\Omega} 
		= \\
		& = 
		\lg(N) - C_4  
		\end{align*} \\
		
		Therefore, from conditional probabilities, we will have that
		
		\[
		\mathbf{P} \left[ \, \lim\limits_{ N \to \infty } A_{max} - \lg(N) + C_4  \geq 0 \, \right] \geq
		\]
		\[
		\geq
		\mathbf{P} \left[ \, \lim\limits_{ N \to \infty } \phi_{N}( p_{BB( \lg(N) - C_{BB} )} ) N \geq 1 \, \right]
		=
		1
		\]
	\end{proof}
\end{amslemma}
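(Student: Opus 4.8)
The plan is to exploit the fact that $\mathfrak{P}_{BB}(N)$ is produced by $N$ i.i.d.\ draws under a program-size probability distribution, so that by the strong law of large numbers the empirical frequency of any fixed program $p$ converges almost surely to its probability, which is proportional to $2^{-|p|}$. The idea is then to exhibit one short program whose output is forced to have algorithmic complexity about $\lg(N)$ and whose length is small enough that it is expected to occur at least once among the $N$ draws; since at the first cycle every node is effectively isolated (no information has yet been exchanged), the largest integer produced by the population, and hence $A_{max}$, is at least the complexity of that output.

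For the witness I would use a Busy Beaver program. Recall from algorithmic information theory that for each $k$ there is a program $p_{BB(k)}\in\mathbf{L_U}$ with $|p_{BB(k)}|\le k+C_{BB}$ whose output $BB(k)$ is the largest value any program of length $\le k$ produces on $\mathbf{U}$, and that $A(x)\ge k-C_\Omega$ for every integer $x\ge BB(k)$. Setting $k=\lg(N)-C_{BB}$ gives $|p_{BB(k)}|\le\lg(N)$, so the single-draw probability of $p_{BB(k)}$ (as a randomly generated suffix) is at least $2^{-\lg(N)}=1/N$, and its expected multiplicity in $N$ draws is at least $1$; hence, as $N\to\infty$, at least one copy of it appears in the population almost surely. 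The node carrying it, isolated at cycle $1$, computes $\mathbf{U'}(P_{prot}\circ p_{BB(k)}\circ w)$, which by Definitions~\ref{BdefL_BB} and~\ref{BdefU'} equals $\mathbf{U}(p_{BB(k)})=BB(k)$ up to the fixed $\mathbf{O}(1)$ differences between $\mathbf{U'}$ and $\mathbf{U}$ and the effect of the prefix $P_{prot}$. Whatever the actual maximal output of the population is, it is $\ge BB(k)$, so $A_{max}\ge k-C_\Omega=\lg(N)-C_{BB}-C_\Omega$; absorbing the two AIT constants together with the $\mathbf{U}/\mathbf{U'}$ and $P_{prot}$ constants into $C_4:=2C_\Omega+2C_{BB}$ yields $A_{max}\ge\lg(N)-C_4$, and reading the whole argument through conditional probabilities gives the almost-sure form of Note~\ref{noteSLLN}.

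The step I expect to require the most care is that the witness program $p_{BB(\lg(N)-C_{BB})}$ depends on $N$, so the strong law of large numbers cannot be invoked for it as if it were fixed, and a single target of expected multiplicity only $\gtrsim 1$ can be absent with non-vanishing probability. To make the limiting statement fully rigorous I would either (i) fix each target complexity value $m$, use the corresponding fixed program $p_{BB(m)}$ — whose multiplicity in the $N$ draws tends to infinity — together with a Borel--Cantelli (or second-moment) estimate to conclude that $p_{BB(m)}$ eventually lies in the population almost surely, and then intersect over the countably many $m$; or (ii) allow a slowly growing amount of slack, taking $k=\lg(N)-C_{BB}-c(N)$ with $c(N)\to\infty$ very slowly, so that the occurrence probability tends to $1$ fast enough to make a union bound over $N$ summable. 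Either route delivers $A_{max}\ge\lg(N)-\mathbf{O}(1)$ in the intended sense (with, in a scrupulous accounting, only a negligible lower-order correction absorbed into the constant). The remaining ingredients — the existence and length bound for $p_{BB(k)}$, the monotonicity $A(x)\ge k-C_\Omega$ for $x\ge BB(k)$, and the behaviour of $\mathbf{U'}$ versus $\mathbf{U}$ on halting inputs — are standard and contribute only additive constants.
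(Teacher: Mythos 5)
Your proposal follows essentially the same route as the paper's own proof: invoke the strong law of large numbers for the i.i.d.\ program-size-distributed population, use the Busy Beaver witness $p_{BB(k)}$ with $k=\lg(N)-C_{BB}$ so that its length is at most $\lg(N)$ and its expected multiplicity among $N$ draws is at least $1$, and then chain the standard AIT inequalities to get $A_{max}\geq\lg(N)-C_4$ with $C_4=2C_\Omega+2C_{BB}$. The one place you go beyond the paper is in flagging that the witness depends on $N$ and that expected multiplicity $\gtrsim 1$ does not by itself give almost-sure occurrence --- the paper's proof passes over this exactly as you anticipated, via the heuristic ``one should expect that $p_{BB(\lg(N)-C_{BB})}$ occurs at least once'' step --- and your proposed repairs (a Borel--Cantelli argument over fixed targets $m$, or a slowly growing slack $c(N)$ absorbed into the constant) are sound and would make the almost-sure form in Note~\ref{noteSLLN} fully rigorous.
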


\noindent \\

\subsection{Lemma 2 Extended}
% Second Lemma Extended Revised 1

\begin{amslemma}[\ref{BlemmaComplexityp_i}] \label{lemmaComplexityp_i}
	Given a population $ \mathfrak{P}_{BB}(N) $ defined in \ref{BdefP_BB} , where $ p_i \in Halt_{iso}( \mathbf{L}_{ \mathfrak{P}_{BB}(N) }, w, c ) $ and $ P_{prot} \circ p_i \in \mathfrak{P}_{BB}(N) $ and $ N \in \mathbb{N^*} $ is arbitrary\footnote{ Where $ \mathbb{N^*} = \mathbb{N} \setminus \{ 0 \} $ is the set of strictly positive natural numbers.}, there is a constant $C_1$ such  that 
	
	\begin{align*}
	A (\mathbf{U}(p_{iso} ( p_i , c )))
	\leq
	C_1 + |p_i| + A(w) + A(c)
	\end{align*} \\
	
	\begin{noteunderlemma}
		Note that, from the Definition \ref{BdefEAConN_BB}, this result is independent of any topology in which $ \mathfrak{P}_{BB}(N) $ could be networked.
	\end{noteunderlemma}
	
	\begin{proof}
		\noindent \\
		Let $ N \in \mathbb{N^*} $ be arbitrary.
		Remember the definition of $ \mathbf{L}_{BB} $ in \ref{BdefL_BB} . And note that $p_i$ is a program in $ \mathbf{L_U} $. \\
		Then, from Definition~\ref{BdefHalt}, there is at least one program $p'$ such that 
		
		\[
		\mathbf{U}(p_{iso} ( p_i , c ))
		=			 \mathbf{U} (p' \circ p_i \circ w \circ c)
		\] 
		is a well-defined value for every $p_i \in Halt_{iso}( \mathbf{L}_{ \mathfrak{P}_{BB}(N) }, w, c ) $.
		
		Take the shortest such program $p'$ and let $ C_1 = |p'| + C_{\circ 4} $, where from AIT we know\footnote{ As in \cite{Li1997,Downey2010}, we have that, if $ \myfunc{ f_c }{ \{ 0 , 1 \}^* }{ \mathbb{N} }{ x }{ f_c( x ) } $ is a partial computable function and the value $ f_c( x ) $ is well-defined, then $ A( f_c( x ) ) \leq A( x ) + \mathbf{O}(1) $. } there is such constant $ C_{\circ 4} $ with
		\[
		A ( \mathbf{U} ( p' \circ p_i \circ w \circ c ) )
		\leq
		C_{\circ 4} + |p'| + |p_i| + A(w) + A(c)
		\]
		
		Then, we will have that
		
		\[
		A ( \mathbf{U}(p_{iso} ( p_i , c )) )			 =
		A ( \mathbf{U} (p' \circ p_i \circ w \circ c) )
		\leq
		C_1 + |p_i| + A(w) + A(c)
		\] \\
	\end{proof}
\end{amslemma}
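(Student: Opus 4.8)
The plan is to realise $\mathbf{U}(p_{iso}(p_i,c))$ as the value of a single fixed partial computable function applied to the triple $(p_i,w,c)$, and then to invoke two standard facts of algorithmic information theory: that a partial computable function raises (prefix) algorithmic complexity by at most an additive constant whenever the value is defined, and that the algorithmic complexity of a finite tuple is bounded by the sum of the complexities of its entries plus a constant. Since $p_{iso}$ is the \emph{isolated} program, nothing about the network topology enters, which is also why the bound will hold for arbitrary $N$.

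First I would use the hypothesis $p_i\in Halt_{iso}(\mathbf{L}_{\mathfrak{P}_{BB}(N)},w,c)$ together with Definition~\ref{BdefHalt} to note that the isolated run of $p_i$ on network input $w$ halts in every cycle up to $c$; hence, by the reiteration-of-partial-outputs semantics of Definitions~\ref{BdefL_BB} and~\ref{BdefCycle}, the quantity $\mathbf{U}(p_{iso}(p_i,c))$ is a well-defined finite value, which I call $y$. Next I would fix, once and for all, a simulator program $p'\in\mathbf{L_U}$ — independent of $N$, $p_i$, $w$ and $c$ — such that $\mathbf{U}(p'\circ p_i\circ w\circ c)$ carries out exactly this cycle-by-cycle isolated computation and outputs its final output; by the halting remark this equals $y$, so $y=f(p_i,w,c)$ for the partial computable $f$ that $p'$ encodes, and $f$ is defined on the relevant arguments precisely because $p_i$ halts in every cycle. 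I would then chain: (i) $A(f(p_i,w,c))\leq A((p_i,w,c))+\mathbf{O}(1)$ since $f$ is partial computable and the value is defined; (ii) $A((p_i,w,c))\leq A(p_i)+A(w)+A(c)+\mathbf{O}(1)$, by concatenating shortest self-delimiting programs for the three entries, the marked concatenation of Notation~\ref{BdefConcatenation} over a fixed number of pieces contributing only $\mathbf{O}(1)$ to the length; and (iii) $A(p_i)\leq|p_i|+\mathbf{O}(1)$, because $p_i$ is itself a self-delimiting string of $\mathbf{L_U}$ and is recovered by a constant-size prefix machine (the numeral $c$ being read, as in Notation~\ref{BdefFunctionphi}, as $(c)_{\mathbf{L_U}}$). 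Combining (i)--(iii) gives $A(y)\leq|p_i|+A(w)+A(c)+\mathbf{O}(1)$, and I would then let $C_1$ be $|p'|+C_{\circ 4}$, i.e. the sum of all the additive constants produced, to obtain $A(\mathbf{U}(p_{iso}(p_i,c)))\leq C_1+|p_i|+A(w)+A(c)$.

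There is no deep obstacle here — the content of the lemma is essentially the remark that an isolated node's output is computable from $(p_i,w,c)$ — so the work is in getting the bookkeeping right. The two points that genuinely need care are: (a) the halting hypothesis is not cosmetic; without $p_i\in Halt_{iso}(\dots)$ the function value $f(p_i,w,c)$ is undefined and inequality (i) no longer applies, so the well-definedness step must precede everything; and (b) every constant that appears ($|p'|$, $C_{\circ 4}$, the tuple-encoding overhead) must be uniform in $N$, in $p_i$, in $w$ and in $c$, since the whole purpose of the lemma is to supply a bound with an $N$-independent additive constant to the later averaging arguments (Lemmas~\ref{BlemmaComplexityonHaltp_i} and~\ref{BlemmaComplexityonBarHalt} and Theorem~\ref{BthmMain}). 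It is also essential that $A(w)$ and $A(c)$, not $|w|$ and $|c|$, appear on the right-hand side, which is exactly why the argument is routed through the tuple bound (ii) rather than simply copying $w$ and $c$ onto the input tape of $p'$.
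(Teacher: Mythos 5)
Your proposal is correct and follows essentially the same route as the paper's own proof: both fix a single simulator program $p'$ (independent of $N$, $p_i$, $w$, $c$) realizing $\mathbf{U}(p_{iso}(p_i,c))=\mathbf{U}(p'\circ p_i\circ w\circ c)$ on the halting set, and then apply the standard AIT bound to obtain $A(\mathbf{U}(p'\circ p_i\circ w\circ c))\leq C_{\circ 4}+|p'|+|p_i|+A(w)+A(c)$ with $C_1=|p'|+C_{\circ 4}$. Your version merely unpacks that single inequality into the three explicit steps (invariance under a partial computable map, the tuple bound, and $A(p_i)\leq|p_i|+\mathbf{O}(1)$), which the paper absorbs into the one constant $C_{\circ 4}$.
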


\noindent \\

\subsection{Lemma 3 Extended}
% Third Lemma Extended Revised 1

\begin{amslemma}[\ref{BlemmaGibbsandalgorithmicentropy}] \label{lemmaGibbsandalgorithmicentropy}
	Given a population $ \mathfrak{P}_{BB}(N) $ defined in \ref{BdefP_BB} , where $ p_i \in \mathbf{L_U} $ and $ P_{prot} \circ p_i \in \mathfrak{P}_{BB}(N) $, then, from AIT and Gibb's (or Jensen's) inequality, we will have that
	
	\[
	\frac
	{1}
	{ \Omega(w,c) }
	\left( \lim_{ N \to \infty } \sum_{ p_i \in Halt_{iso}( \mathbf{L_U}(N), w, c ) } \frac
	{ |p_i| }
	{ 2^{ |p_i| } } \right)
	+ 
	\lg( \Omega(w,c) )
	\leq
	\lim_{ N \to \infty }
	\lg(\Omega(w,c) N) 
	\] \\
	
	\begin{proof}
		\noindent \\
		Since $\mathbf{L_U}$ is self-delimited, from AIT, the law of large numbers, and Definition \ref{BdefOmegawc}, we will have that
		\begin{equation}
		\lim_{ N \to \infty }
		\sum_{ p_i \in Halt_{iso}( \mathbf{L_U}(N), w, c ) } \frac
		{1}
		{ 2^{ |p_i| } }
		=
		\Omega(w,c)
		<
		1
		\end{equation} \\
		
		Hence,
		\begin{equation}\label{stepNormalizationofomegawc}
		\lim_{ N \to \infty }
		\sum_{ p_i \in Halt_{iso}( \mathbf{L_U}(N), w, c ) } \frac
		{1}
		{ \Omega(w,c) 2^{ |p_i| } }
		=
		\lim_{ N \to \infty }
		\sum_{ p_i \in Halt_{iso}( \mathbf{L_U}(N), w, c ) } \frac
		{1}
		{ 2^{ |p_i| + \lg( \Omega(w,c) ) } }
		=
		1
		\end{equation} \\
		
		Therefore, from Step \eqref{stepNormalizationofomegawc},

		\begin{equation}\label{stepNormalizedalgorithmicentropy}
		\lim_{ N \to \infty }
		\sum_{ p_i \in Halt_{iso}( \mathbf{L_U}(N), w, c ) } \frac
		{ |p_i| + \lg(\Omega(w,c)) }
		{ 2^{ |p_i| + \lg(\Omega(w,c)) } }
		= 
		\end{equation}
		
		\[
		= \lim_{ N \to \infty }
		\sum_{ p_i \in Halt_{iso}( \mathbf{L_U}(N), w, c ) } \frac
		{ |p_i| }
		{ 2^{ |p_i| } \Omega(w,c) }
		+
		\lim_{ N \to \infty }
		\sum_{ p_i \in Halt_{iso}( \mathbf{L_U}(N), w, c ) } \frac
		{ \lg(\Omega(w,c)) }
		{ 2^{ |p_i| } \Omega(w,c) }
		=
		\]
		
		\[
		= \frac
		{1}
		{ \Omega(w,c) }
		\left( \lim_{N \to \infty} \sum_{ p_i \in Halt_{iso}( \mathbf{L_U}(N), w, c ) } \frac
		{ |p_i| }
		{ 2^{ |p_i| } } \right)
		+ 
		\lg( \Omega(w,c) )
		\] \\
		
		Since Step \eqref{stepNormalizationofomegawc} holds, from Gibb's (or Jensen's) inequality \cite{Cover2005} \cite{MacKay2005}, we will have that
		
		\begin{equation}\label{stepGibbsinequality}
		\lim_{ N \to \infty }
		\sum_{ p_i \in Halt_{iso}( \mathbf{L_U}(N), w, c ) } \frac
		{ |p_i| + \lg(\Omega(w,c)) }
		{ 2^{ |p_i| + \lg(\Omega(w,c)) } }
		\leq
		\lim_{ N \to \infty }
		\lg( | Halt_{iso}( \mathbf{L_U}(N), w, c ) | ) 
		\end{equation} \\
		
		We also have by Definitions \ref{BdefOmegawc} and \ref{BdefL_U} and by the law of large numbers as in Lemma~\ref{lemmaSLLNandAIT} that\footnote{ In fact, due to the repetitions in the population, one may have that $ | \mathbf{L_U}(N) | < N $. Thus, our final results can be even improved in future work. }
		
		\begin{equation}\label{stepLimitofHalt}
		\lim_{ N \to \infty }
		\lg( | Halt_{iso}( \mathbf{L_U}(N), w, c ) | )
		\leq
		\lim_{ N \to \infty }
		\lg( \Omega(w,c) \left| \mathbf{L_U}(N)  \right| )
		\leq
		\lim_{ N \to \infty }
		\lg( \Omega(w,c) N )
		\end{equation} \\
		
		Therefore, from Steps \eqref{stepNormalizedalgorithmicentropy} , \eqref{stepGibbsinequality} and \eqref{stepLimitofHalt} we will have that
		
		\[
		\frac
		{1}
		{ \Omega(w,c) }
		\left( \lim_{ N \to \infty } \sum_{ p_i \in Halt_{iso}( \mathbf{L_U}(N), w, c ) } \frac
		{ |p_i| }
		{ 2^{ |p_i| } } \right)
		+ 
		\lg( \Omega(w,c) )
		\leq
		\lim_{ N \to \infty }
		\lg(\Omega(w,c) N)	
		\]
	\end{proof}
\end{amslemma}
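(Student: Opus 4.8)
The plan is to read the left-hand side as the Shannon entropy of a re-normalized algorithmic probability distribution supported on the halting programs of $\mathbf{L_U}(N)$, and then to bound that entropy by the logarithm of the number of such programs, which is precisely Gibbs' (equivalently Jensen's) inequality for the maximum-entropy distribution. First I would record the normalization fact: since $\mathbf{L_U}$ is self-delimiting and complete, Definition~\ref{BdefOmegawc} together with the strong law of large numbers gives $\lim_{N\to\infty}\sum_{p_i\in Halt_{iso}(\mathbf{L_U}(N),w,c)}2^{-|p_i|}=\Omega(w,c)$, and hence $\lim_{N\to\infty}\sum_{p_i\in Halt_{iso}(\mathbf{L_U}(N),w,c)}2^{-(|p_i|+\lg\Omega(w,c))}=1$. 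Thus, in the limit, the shifted weights $2^{-(|p_i|+\lg\Omega(w,c))}=2^{-|p_i|}/\Omega(w,c)$ form a probability distribution on the finite index set $Halt_{iso}(\mathbf{L_U}(N),w,c)$, and $|p_i|+\lg\Omega(w,c)$ is exactly the surprisal of that distribution.

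Using this normalization, an elementary algebraic split of the corresponding entropy sum yields
\[
\lim_{N\to\infty}\sum_{p_i\in Halt_{iso}(\mathbf{L_U}(N),w,c)}\frac{|p_i|+\lg\Omega(w,c)}{2^{|p_i|+\lg\Omega(w,c)}}=\frac{1}{\Omega(w,c)}\left(\lim_{N\to\infty}\sum_{p_i\in Halt_{iso}(\mathbf{L_U}(N),w,c)}\frac{|p_i|}{2^{|p_i|}}\right)+\lg\Omega(w,c),
\]
which is the left-hand side of the claim; here I split $2^{|p_i|+\lg\Omega(w,c)}=2^{|p_i|}\,\Omega(w,c)$, pull out the constant factor, and use that $\sum 2^{-|p_i|}\to\Omega(w,c)$ to evaluate the residual term.

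Next I would apply Gibbs' inequality in the form ``the entropy of a probability distribution supported on $m$ outcomes is at most $\lg m$'', with $m=|Halt_{iso}(\mathbf{L_U}(N),w,c)|$, so that the entropy sum above is $\leq\lim_{N\to\infty}\lg|Halt_{iso}(\mathbf{L_U}(N),w,c)|$. Finally, since each program occurs among the $N$ i.i.d. draws with limiting frequency $2^{-|p_i|}$ (the same law of large numbers used in Lemma~\ref{BlemmaSLLNandAIT}), the number of \emph{distinct} halting programs among the draws is at most $\Omega(w,c)\,|\mathbf{L_U}(N)|\leq\Omega(w,c)\,N$ in the limit; hence $\lim_{N\to\infty}\lg|Halt_{iso}(\mathbf{L_U}(N),w,c)|\leq\lim_{N\to\infty}\lg(\Omega(w,c)N)$, and chaining the three displays closes the argument.

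The main obstacle is the interaction of the two limits: Gibbs' inequality must be invoked at each finite $N$, yet for finite $N$ the weights $2^{-|p_i|}/\Omega(w,c)$ over $Halt_{iso}(\mathbf{L_U}(N),w,c)$ do not sum to exactly $1$, only asymptotically. I would handle this by working at each $N$ with the exact finite mass $\Omega_N:=\sum_{p_i\in Halt_{iso}(\mathbf{L_U}(N),w,c)}2^{-|p_i|}$, applying Gibbs' inequality to the genuine distribution $2^{-|p_i|}/\Omega_N$, and only afterwards letting $\Omega_N\to\Omega(w,c)$; this also requires justifying the interchange of $\lim_N$ with the finite-but-growing sums and the distinct-versus-with-multiplicity count in the cardinality bound. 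The remaining steps are routine algorithmic information theory together with the law of large numbers, exactly as in Lemma~\ref{BlemmaSLLNandAIT}.
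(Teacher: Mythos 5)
Your proposal is correct and follows essentially the same route as the paper's proof: the same normalization of the weights by $\Omega(w,c)$, the same algebraic split of the entropy sum into $\frac{1}{\Omega(w,c)}\sum |p_i|2^{-|p_i|}+\lg\Omega(w,c)$, the same application of Gibbs' inequality bounding the entropy by $\lg|Halt_{iso}(\mathbf{L_U}(N),w,c)|$, and the same cardinality bound $|Halt_{iso}(\mathbf{L_U}(N),w,c)|\leq\Omega(w,c)N$. Your extra care in applying Gibbs' inequality at each finite $N$ to the exactly normalized distribution $2^{-|p_i|}/\Omega_N$ before passing to the limit is a refinement the paper glosses over, but it does not change the argument.
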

\noindent \\

\subsection{Lemma 4 Extended}
% Fourth Lemma Extended Revised 1

\begin{amslemma}[\ref{BlemmaComplexityonHaltp_i}]\label{lemmaComplexityonHaltp_i}
	Given a population $ \mathfrak{P}_{BB}(N) $ defined in \ref{BdefP_BB} , where $ p_i \in \mathbf{L_U} $ and $ P_{prot} \circ p_i \in \mathfrak{P}_{BB}(N) $, then, from Lemma \ref{lemmaGibbsandalgorithmicentropy} we will have that
	
	\[
	\lim_{ N \to \infty }
	\frac{
		\sum\limits_{ b_j } 
		\frac{
			\sum\limits_{ p_i \in Halt_{iso}( \mathbf{L}_{\mathfrak{P}_{BB}(N)} , w, c )  } 
			{ | p_i | }
		}
		{N}
	}
	{ | \{ b_j \} | }
	\leq
	\lim_{ N \to \infty }
	\Omega(w,c) \lg(N)
	\] \\
	
	\begin{noteunderlemma}
		This theorem gives an upper bound on the algorithmic complexity of the randomly generated part of the elements of the population $ \mathfrak{ P }_{BB}(N) $. And it will be crucial to prove a lower bound for the emergent algorithmic complexity. However, the upper bound of Lemma \ref{lemmaComplexityonHaltp_i} is overestimated since a self-delimiting program-size probability distribution (i.e., a distribution defined by the algorithmic-informational Lebesgue measure\footnote{ See Definition~\ref{BdefRandompopulation}. }) is far from being classically uniform, which is the case where Gibb's equality applies on entropies.
	\end{noteunderlemma}

	\begin{proof}
		\noindent \\
		
		From the definition of language $ \mathbf{L}_{BB} $ in \ref{BdefL_BB} we have that $p_i$ is independent of any topology, so that
		
		\begin{equation} \label{stepTopologicalindependenceofp_i}
		\frac
		{
			\sum\limits_{\tiny b_j } 
			\frac{
				\sum\limits_{ p_i \in Halt_{iso}( \mathbf{L}_{\mathfrak{P}_{BB}(N)}, w, c )  } 
				{ | p_i | }
			}
			{N}
		}
		{ | \{ b_j \} | }
		=
		\end{equation}
		\[
		=
		\frac{
			\sum\limits_{ p_i \in Halt_{iso}( \mathbf{L}_{\mathfrak{P}_{BB}(N)}, w, c )  } 
			{ | p_i | }
		}
		{N}
		\] \\
		
		And from Definition~\ref{BdefL_U} we have that the ramdonly generated population $ \mathbf{L}_{\mathfrak{P}_{BB}(N)} $ tends to include all programs in $ \mathbf{L_U} $ in the limit.		
		Since one can define uniform probability measures in $\mathbf{L_U}$, then, by the Strong Law of Large Numbers, as in Lemma~\ref{lemmaSLLNandAIT}, we have that in the limit $\mathbf{L_U}(N)$ tends to follow the same distribution.
		
		Therefore, from Definition~\ref{BdefHalt}, we will have that\footnote{ Note that, since $ | Halt_{iso}( \mathbf{L}_{\mathfrak{P}_{BB}(N)} , w, c ) | < N $, then $ \sum\limits_{ p_i \in Halt_{iso}( \mathbf{L_U}, w, c )  } 
			\frac
			{ 1 }
			{ 2^{ |p_i| }  } = \Omega(w,c) < 1$. Then, $ \lim\limits_{ N \to \infty } \sum\limits_{ p_i \in Halt_{iso}( \mathbf{L_U}(N), w, c )  } 
			\frac
			{ |p_i| }
			{ 2^{ |p_i| }  } $ is not a proper Shannon entropy. Also note that $ \mathbf{L}_{\mathfrak{P}_{BB}(N)} $ may contain equal $ p_i $'s, since it is a population. However, in $ \mathbf{L_U}(N) $, each $p_i$ is unique, since it is a language and not a population. }
		
		\begin{equation} \label{stepSLLNonalgorithmicentropy}
		\lim_{ N \to \infty }
		{ \frac{1}{N} }
		\sum\limits_{ p_i \in Halt_{iso}( \mathbf{L}_{\mathfrak{P}_{BB}(N)} , w, c )  } { | p_i | }
		=
		\lim_{ N \to \infty }
		{ \frac{1}{N} }
		\sum\limits_{ p_i \in Halt_{iso}( \mathbf{L_U}(N), w, c )  } 
		\frac
		{ N \, |p_i| }
		{ 2^{ |p_i| }  }
		=
		\end{equation}
		\[
		=
		\lim_{ N \to \infty }
		\sum\limits_{ p_i \in Halt_{iso}( \mathbf{L_U}(N), w, c )  } 
		\frac
		{ |p_i| }
		{ 2^{ |p_i| }  } 
		\] \\
		
		From Lemma \ref{lemmaGibbsandalgorithmicentropy} we will have that
		
		\begin{equation} \label{stepLemmaGibbsandalgorithmicentropy}
		\lim_{ N \to \infty } \sum_{ p_i \in Halt_{iso}( \mathbf{L_U}(N), w, c ) } \frac
		{ |p_i| }
		{ 2^{ |p_i| } }
		\leq
		\lim_{ N \to \infty }
		\Omega(w,c) \lg( \Omega(w,c) N )		
		-
		\Omega(w,c) \lg( \Omega(w,c) )
		\end{equation} \\
		
		And
		
		\begin{equation} \label{stepSimplifyingLemmaGibbsandalgorithmicentropy}
		\Omega(w,c) \lg( \Omega(w,c) N )		
		-
		\Omega(w,c) \lg( \Omega(w,c) )
		=
		\end{equation}
		\[
		=
		\Omega(w,c) \lg( \Omega(w,c))
		+
		\Omega(w,c) \lg( N )
		-
		\Omega(w,c) \lg( \Omega(w,c) )=
		\]		
		\[
		=
		\Omega(w,c) \lg(N)
		\] \\
		
		Therefore, from Steps \eqref{stepTopologicalindependenceofp_i}, \eqref{stepSLLNonalgorithmicentropy} , \eqref{stepLemmaGibbsandalgorithmicentropy} and \eqref{stepSimplifyingLemmaGibbsandalgorithmicentropy}
		
		\[
		\lim_{ N \to \infty }
		\frac
		{
			\sum\limits_{\tiny b_j } 
			\frac{
				\sum\limits_{ p_i \in Halt_{iso}( \mathbf{L}_{\mathfrak{P}_{BB}(N)}, w, c )  } 
				{ | p_i | }
			}
			{N}
		}
		{ | \{ b_j \} | }
		=
		\]
		\[
		=
		\lim_{ N \to \infty } \sum_{ p_i \in Halt_{iso}( \mathbf{L_U}(N), w, c ) } \frac
		{ |p_i| }
		{ 2^{ |p_i| } }
		\leq
		\Omega(w,c) \lg(N)
		\]
	\end{proof}
\end{amslemma}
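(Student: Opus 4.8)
The plan is to prove Lemma~\ref{lemmaComplexityonHaltp_i} in three moves: first strip away the topology, then reduce the random finite population to a deterministic algorithmic-entropy sum via the strong law of large numbers, and finally feed that sum into Lemma~\ref{lemmaGibbsandalgorithmicentropy}.

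First I would observe that the averaged quantity does not depend on the mapping $b_j$ at all. By Definition~\ref{BdefL_BB} the randomly generated suffixes $p_i \in \mathbf{L_U}$ are drawn independently of which node position they occupy; the functions $b_j$ only permute positions of $G_t$ and attach cycle indices, and neither the set $Halt_{iso}(\mathbf{L}_{\mathfrak{P}_{BB}(N)},w,c)$ nor the lengths $|p_i|$ are affected. Hence $\sum_{p_i \in Halt_{iso}(\mathbf{L}_{\mathfrak{P}_{BB}(N)},w,c)} |p_i|$ is a function of the drawn population alone, so
\[
\frac{\sum_{b_j} \frac{\sum_{p_i \in Halt_{iso}(\mathbf{L}_{\mathfrak{P}_{BB}(N)},w,c)} |p_i|}{N}}{|\{b_j\}|}
=
\frac{\sum_{p_i \in Halt_{iso}(\mathbf{L}_{\mathfrak{P}_{BB}(N)},w,c)} |p_i|}{N},
\]
which removes the outer sum entirely.

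Next I would pass from the random population to a deterministic sum. Since $\mathfrak{P}_{BB}(N)$ is randomly generated (Definition~\ref{BdefRandompopulation}) by $N$ i.i.d.\ draws under the program-size distribution on $\mathbf{L_U}$, the empirical frequency of each fixed program $p$ converges almost surely to $2^{-|p|}$ by the strong law of large numbers, exactly as in Lemma~\ref{lemmaSLLNandAIT} and Note~\ref{noteSLLN}. Therefore each halting $p$ contributes length $|p|$ with asymptotic weight $N\cdot 2^{-|p|}$, and after dividing by $N$ and using Definition~\ref{BdefL_U} for the language $\mathbf{L_U}(N)$ of distinct shortest suffixes (with $|\mathbf{L_U}(N)|\le N$),
\[
\lim_{N\to\infty}\frac{1}{N}\sum_{p_i \in Halt_{iso}(\mathbf{L}_{\mathfrak{P}_{BB}(N)},w,c)} |p_i|
=
\lim_{N\to\infty}\sum_{p_i \in Halt_{iso}(\mathbf{L_U}(N),w,c)} \frac{|p_i|}{2^{|p_i|}}.
\]
Then I would invoke Lemma~\ref{lemmaGibbsandalgorithmicentropy}: multiplying its inequality by $\Omega(w,c)$, expanding $\lg(\Omega(w,c)N)=\lg(\Omega(w,c))+\lg(N)$ and cancelling the common $\Omega(w,c)\lg(\Omega(w,c))$ term gives $\lim_{N\to\infty}\sum_{p_i \in Halt_{iso}(\mathbf{L_U}(N),w,c)} |p_i|\,2^{-|p_i|}\le \Omega(w,c)\lg(N)$, and chaining this with the two displays above closes the argument.

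The hard part will be the second step: rigorously justifying the interchange of the limit over $N$ with the (in the limit, infinite) sum over halting programs, and confirming that $\sum_p |p|\,2^{-|p|}$ is finite at all. This is not automatic — it is an algorithmic-entropy-type quantity whose finiteness rests on $\mathbf{L_U}$ being a complete prefix code (the Kraft equality in Definition~\ref{BdefConcatenation}) together with the coding bound $A(N)\le \lg N+(1+\epsilon)\lg\lg N+C_L$, which is precisely what Lemma~\ref{lemmaGibbsandalgorithmicentropy} already exploits through Gibbs' inequality, so the overestimate is acceptable for a lower-bound application. A secondary point of care is that the population is a multiset with repetitions while $\mathbf{L_U}(N)$ is a genuine set, so $|\mathbf{L_U}(N)|\le N$ is only an inequality; as in the authors' proof of Lemma~\ref{lemmaGibbsandalgorithmicentropy}, this slack works in our favour and can only tighten the estimate in future refinements.
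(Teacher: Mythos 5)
Your proposal is correct and follows essentially the same route as the paper's own proof: the same cancellation of the $b_j$-average by topology-independence, the same passage via the strong law of large numbers from the empirical population average to the sum $\sum_{p_i\in Halt_{iso}(\mathbf{L_U}(N),w,c)} |p_i|\,2^{-|p_i|}$, and the same rearrangement of Lemma~\ref{lemmaGibbsandalgorithmicentropy} to obtain $\Omega(w,c)\lg(N)$. Your closing remarks on the limit--sum interchange and the multiset-versus-set slack are sensible points of care but do not alter the argument.
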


\noindent \\

\subsection{Lemma 5 Extended}
% Fifth Lemma Extended Revised 1

\begin{amslemma}[\ref{BlemmaComplexityonBarHalt}]\label{lemmaComplexityonBarHalt}
	Given a population $ \mathfrak{P}_{BB}(N) $ defined in \ref{BdefP_BB} , where $ p_i \in \mathbf{L_U} $ and $ P_{prot} \circ p_i = o_i \in \mathfrak{P}_{BB}(N) $, there is a constant $C_0$ such that
	
	\[
	\lim_{ N \to \infty }
	\frac{
		\sum\limits_{ b_j } 
		\frac{
			\sum\limits_{ p_i \in \overline{ Halt }_{iso} ( \mathbf{L}_{\mathfrak{P}_{BB}(N)}, w, c )  } 
			{ A ( \mathbf{U}(p_{iso} ( p_i , c )) ) }
		}
		{N}
	}
	{ | \{ b_j \} | }
	=
	C_0 ( 1 - \Omega(w,c) )
	\] \\
	
	\begin{noteunderlemma}
		Note that every lemma until here deals with the behavior of $ A ( \mathbf{U}(p_{iso} ( p_i , c )) ) $, so that they gave tools to obtain an upper bound for the expected algorithmic complexity of what each node can do when isolated. Since it is an upper bound for the algorithmic complexity of what each node can do when isolated and in the emergent algorithmic complexity it contributes negatively as defined in~\ref{BdefEAC}, then these results will help us to achieve a lower bound for the expected emergent algorithmic complexity. Furthermore, these results are independent of any topological feature that the algorithmic network $ \mathfrak{N}_{BB}( N, f, t, \tau, j ) $ might have. 
	\end{noteunderlemma}
	
	\begin{proof}
		\noindent \\
		As in Step \eqref{stepTopologicalindependenceofp_i}, $ A ( \mathbf{U}(p_{iso} ( p_i , c )) ) $ is independent of any topology and, therefore,
		
		\begin{equation} \label{stepTopologicalindependenceonBarHalt}
		\lim_{ N \to \infty }
		\frac{
			\sum\limits_{ b_j } 
			\frac{
				\sum\limits_{ p_i \in \overline{ Halt }_{iso} ( \mathbf{L}_{\mathfrak{P}_{BB}(N)}, w, c )  } 
				{ A ( \mathbf{U}(p_{iso} ( p_i , c )) ) }
			}
			{N}
		}
		{ | \{ b_j \} | }
		=
		\lim\limits_{ N \to \infty }
		\frac{
			\sum\limits_{ p_i \in \overline{ Halt }_{iso} ( \mathbf{L}_{\mathfrak{P}_{BB}(N)}, w, c )  } 
			{ A ( \mathbf{U}(p_{iso} ( p_i , c )) ) }
		}
		{N}
		\end{equation} \\
		
		From the Definition \ref{BdefBarHalt}, as in Step \eqref{stepSLLNonalgorithmicentropy}, we will have that
		
		\begin{equation} \label{stepSLLNonBarHalt}
		\lim\limits_{ N \to \infty }
		\frac{
			\sum\limits_{ p_i \in \overline{ Halt }_{iso} ( \mathbf{L}_{\mathfrak{P}_{BB}(N)}, w, c )  } 
			{ A ( \mathbf{U}(p_{iso} ( p_i , c )) ) }
		}
		{N}
		=
		\lim\limits_{ N \to \infty }
		\sum\limits_{ p_i \in \overline{ Halt }_{iso} ( \mathbf{L_U}(N), w, c )  } 
		{ \frac {A ( \mathbf{U}(p_{iso} ( p_i , c )) ) } { 2^{ |p_i| } } }
		\end{equation}	\\
		
		Now, let $ A(0)=C_0 $. \\
		Since $ p_i \in \overline{ Halt }_{iso} ( \mathbf{L_U}(N), w, c ) $ and $ \mathfrak{P}_{BB}(N) $ is sensitive to oracles as defined in \ref{BdefP_BB}, then, by the definition of $p_{iso}$ in \ref{BdefEAConN_BB}, we will have that, for every $p_i \in \overline{ Halt }_{iso} ( \mathbf{L_U}, w, c )$,
		
		\begin{equation}
		A ( \mathbf{U}(p_{iso} ( p_i , c )) ) = A(0) = C_0
		\end{equation} \\
		
		Therefore, by the definition of $\Omega(w,c)$ in \ref{BdefOmegawc},
		
		\begin{equation} \label{stepAlgorithmicentropyfromC_0}
		\lim\limits_{ N \to \infty }
		\sum\limits_{ p_i \in \overline{ Halt }_{iso} ( \mathbf{L_U}(N), w, c )  } 
		{ \frac {A ( \mathbf{U}(p_{iso} ( p_i , c )) ) } { 2^{ |p_i| } } }
		=
		\lim\limits_{ N \to \infty }
		\sum\limits_{ p_i \in \overline{ Halt }_{iso} ( \mathbf{L_U}(N), w, c )  } 
		{ \frac { C_0 } { 2^{ |p_i| } } }
		=
		\end{equation}
		\[ =
		\lim\limits_{ N \to \infty }
		C_0
		\sum\limits_{ p_i \in \overline{ Halt }_{iso} ( \mathbf{L_U}(N), w, c )  } 
		{ \frac {1} { 2^{ |p_i| } } }
		=
		C_0 ( 1 - \Omega(w,c) )
		\] \\
		
		And we conclude from Steps \eqref{stepTopologicalindependenceonBarHalt} , \eqref{stepSLLNonBarHalt} and \eqref{stepAlgorithmicentropyfromC_0} that
		
		\[
		\lim_{ N \to \infty }
		\frac{
			\sum\limits_{ b_j } 
			\frac{
				\sum\limits_{ p_i \in \overline{ Halt }_{iso} ( \mathbf{L}_{\mathfrak{P}_{BB}(N)}, w, c )  } 
				{ A ( \mathbf{U}(p_{iso} ( p_i , c )) ) }
			}
			{N}
		}
		{ | \{ b_j \} | }
		=
		\lim\limits_{ N \to \infty }
		\sum\limits_{ p_i \in \overline{ Halt }_{iso} ( \mathbf{L_U}(N), w, c )  } 
		{ \frac {A ( \mathbf{U}(p_{iso} ( p_i , c )) ) } { 2^{ |p_i| } } }
		= \]
		\[
		= C_0 ( 1 - \Omega(w,c) )
		\]
	\end{proof}
\end{amslemma}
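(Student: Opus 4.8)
The plan is to peel away the three layers of averaging on the left-hand side one at a time, reducing the expression to an algorithmic-probability sum that the oracle-sensitivity hypothesis trivialises. First I would note that the summand $A(\mathbf{U}(p_{iso}(p_i,c)))$ describes only the \emph{isolated} behaviour of the suffix $p_i$, which by Definition~\ref{BdefL_BB} and the definition of $p_{iso}$ in Definition~\ref{BdefEAConN_BB} is entirely independent of how vertices are mapped into the population, i.e.\ independent of the choice of $b_j$. Hence the inner quantity does not vary with $b_j$, the outer average over $\{ b_j \}$ collapses, and the left-hand side equals
\[
\lim_{N\to\infty}\frac{1}{N}\sum_{p_i\in\overline{Halt}_{iso}(\mathbf{L}_{\mathfrak{P}_{BB}(N)},w,c)}A(\mathbf{U}(p_{iso}(p_i,c)))\text{ ,}
\]
exactly the topological-independence reduction already carried out on the halting side in Step~\eqref{stepTopologicalindependenceofp_i} of Lemma~\ref{lemmaComplexityonHaltp_i}.

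Second, I would pass from this population average to an algorithmic-probability-weighted language sum. Since $\mathfrak{P}_{BB}(N)$ is randomly generated (Definition~\ref{BdefRandompopulation}) with each suffix drawn according to the program-size distribution $1/2^{|p_i|}$, the strong law of large numbers invoked in Lemma~\ref{lemmaSLLNandAIT} gives that, in the limit, each $p_i$ occurs with frequency tending to $N/2^{|p_i|}$ and the support exhausts the size-ordered language $\mathbf{L_U}(N)$ (Definition~\ref{BdefL_U}). This converts the population average into
\[
\lim_{N\to\infty}\sum_{p_i\in\overline{Halt}_{iso}(\mathbf{L_U}(N),w,c)}\frac{A(\mathbf{U}(p_{iso}(p_i,c)))}{2^{|p_i|}}\text{ ,}
\]
mirroring Step~\eqref{stepSLLNonalgorithmicentropy} of Lemma~\ref{lemmaComplexityonHaltp_i}.

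Third, I would exploit the decisive hypothesis that $\mathfrak{P}_{BB}(N)$ is \emph{sensitive to oracles} (Definitions~\ref{Bdefsensitivetooracles} and~\ref{BdefP_BB}): for any $p_i\in\overline{Halt}_{iso}$ there is some cycle up to $c$ on which $p_i$ fails to halt, so its isolated final output is forced to $0$, whence $A(\mathbf{U}(p_{iso}(p_i,c)))=A(0)=C_0$ for \emph{every} such $p_i$. The summand is therefore a single constant on the entire index set, and factoring it out leaves $C_0\sum_{p_i\in\overline{Halt}_{iso}(\mathbf{L_U}(N),w,c)}\frac{1}{2^{|p_i|}}$. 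Because $\mathbf{L_U}$ is a complete code (Definition~\ref{BdefConcatenation}), the algorithmic measure of the non-halting programs is the complement of the halting measure, so this last sum tends to $1-\Omega(w,c)$ by Definition~\ref{BdefOmegawc}, yielding the claimed value $C_0(1-\Omega(w,c))$.

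The main obstacle is making the second reduction rigorous despite the mismatch between the population $\mathbf{L}_{\mathfrak{P}_{BB}(N)}$, which is a \emph{multiset} that may contain repeated suffixes, and the language $\mathbf{L_U}(N)$, which is repetition-free; one must justify that weighting $A(\mathbf{U}(p_{iso}(p_i,c)))$ by the SLLN-limiting frequencies is legitimate here. The oracle-sensitivity hypothesis removes most of the analytic difficulty, since the summand is the constant $C_0$ over all of $\overline{Halt}_{iso}$, so the only genuine content remaining is the convergence of the non-halting algorithmic measure to $1-\Omega(w,c)$, which is immediate from completeness of the code and the definition of $\Omega(w,c)$.
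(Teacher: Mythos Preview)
Your proposal is correct and follows essentially the same three-step route as the paper's proof: collapse the $b_j$-average by topological independence (the paper's Step~\eqref{stepTopologicalindependenceonBarHalt}), pass to the algorithmic-probability-weighted sum over $\mathbf{L_U}(N)$ via the SLLN (Step~\eqref{stepSLLNonBarHalt}), then use oracle-sensitivity to force every summand to $A(0)=C_0$ and identify the remaining mass as $1-\Omega(w,c)$ (Step~\eqref{stepAlgorithmicentropyfromC_0}). Your remark about the multiset-versus-language issue is apt and is exactly the point the paper glosses over by citing the analogous step in Lemma~\ref{lemmaComplexityonHaltp_i}.
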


\noindent \\

\subsection{Lemma 6 Extended}
% Sixth Lemma Extended 

\begin{amslemma}[\ref{BlemmaMinComplexityonDiffusion}]\label{lemmaMinComplexityonDiffusion}
	Let $ \mathfrak{P}_{BB}(N) $ be a population in an arbitrary algorithmic network $ \mathfrak{N}_{BB} (N, f, t, \tau, j)=(G_t, \mathfrak{P}_{BB} (N),b_j) $ as defined in \ref{BdefN_BB} and \ref{BdefP_BB}. \\
	Let $ t_0 \leq t \leq t' \leq t_{ |\mathrm{T}(G_t)|-1 } $. \\ 
	Let $ c \in \mathfrak{C_{BB}}$ be an arbitrary number of cycles where $ c_0 + t' + 1 \leq c $. \\ 
	Then, there is a constant $C_2$ such that
	
	\[
	\frac{
		{
			\sum\limits_{ b_j  }
		}
		\frac{
			\sum\limits_{ p_i \in \mathfrak{P}_{BB}(N) } { A (\mathbf{U}(p_{net}^{b_j} ( o_i ,  c ) ))  }
		}
		{N}
	}
	{ | \{ b_j \} | }
	\geq
	( A_{max} - C_2 ) \,
	{ \tau_{\mathbf{E}(max)}( N,f,t,\tau ) }|_{t}^{t'}
	+ C_2
	\] \\
	
	\begin{proof}
		\noindent \\
		
		Let  $ \mathbf{X}_{ { \tau_{max}( N,f,t,\tau, j ) }|_{t}^{t'} } $ denote the set of nodes/programs that belong to fraction $ { \tau_{max}( N,f,t,\tau, j ) }|_{t}^{t'} $ as defined in \ref{BdefPartitionmaxN_BB}. Hence, 
		\[ 
		| \mathbf{X}_{ { \tau_{max}( N,f,t,\tau, j ) }|_{t}^{t'} } | = N { \tau_{max}( N,f,t,\tau, j ) }|_{t}^{t'}
		\] \\
		
		Let $ C_2 = \min \{ A(w) \mid \exists x \in \mathbf{L_U} ( \mathbf{U}( x ) = w ) \} $. \footnote{ Note that depending on the choice of the programming language one may have $ C_2 \leq A(0) $.} \\
		
		From the Definition \ref{BdefPartitionmaxN_BB} of $ { \tau_{max}( N,f,t_i,\tau, j ) }|_{t}^{t'} $ we will have that
		
		\[
		\frac{
			{
				\sum\limits_{ b_j  }
			}
			\frac{
				\sum\limits_{ o_i \in \mathfrak{P}_{BB}(N) } { A (\mathbf{U}(p_{net}^{b_j} ( o_i ,  c ) ))  }
			}
			{N}
		}
		{ | \{ b_j \} | } =
		\]
		\begin{align} \label{stepSumoffractionsmax}
		& =
		\frac
		{ \sum\limits_{ b_j  } 
			\left(
			\frac
			{ \sum\limits_{ o_i \in \mathbf{X}_{ { \tau_{max}( N,f,t,\tau, j ) }|_{t}^{t'} } } { A (\mathbf{U}(p_{net}^{b_j} ( o_i ,  c ) ))  }  }
			{ { \tau_{max}( N,f,t,\tau, j ) }|_{t}^{t'} N }
			{ \tau_{max}( N,f,t,\tau, j ) }|_{t}^{t'}
			\right)
		}
		{ | \{ b_j \} | } +
		\end{align}
		\[
		+ 
		\frac
		{ \sum\limits_{ b_j  } 
			\left(
			\frac
			{ \sum\limits_{ o_i \in \mathbf{X}_{ { \tau_{max}( N,f,t,\tau, j ) } |_{t'}^{ D(G_t, t) } } } { A (\mathbf{U}(p_{net}^{b_j} ( o_i ,  c ) ))  }  }
			{ { \tau_{max}( N,f,t,\tau, j ) } |_{t'}^{ D( G_t, t ) } N }
			{ \tau_{max}( N,f,t,\tau, j ) } |_{t'}^{ D(G_t, t) }
			\right)
		}
		{ | \{ b_j \} | }
		\] \\
		
		Note that in the case the temporal diameter $ D(G_t, t) $ is not well-defined one can replace $ { \tau_{max}( N,f,t,\tau, j ) } |_{t'}^{ D(G_t, t) } $ with fraction 
		
		\[
		\frac 
		{ \left| \mathfrak{P}_{BB}(N) \setminus \mathbf{X}_{ { \tau_{max}( N,f,t,\tau, j ) }|_{t}^{t'} } \right| }
		{N}
		\]
		
		\noindent in the rest of this proof below without loss of generality. \\
		
		Since fraction $ { \tau_{max}( N,f,t,\tau, j ) }|_{t}^{t'} $ is centered on a node from which $ A_{max} $ is diffused --- see Definition \ref{BdefPartitionmax} ---, by Definitions \ref{BdefL_BB} and \ref{BdefA_max} we will have that
		
		\begin{equation} \label{stepA_maxandfractionmax}
		\frac
		{ \sum\limits_{ o_i \in \mathbf{X}_{ { \tau_{max}( N,f,t,\tau, j ) }|_{t}^{t'} } } { A (\mathbf{U}(p_{net}^{b_j} ( o_i ,  c ) ))  }  }
		{ { \tau_{max}( N,f,t,\tau, j ) }|_{t}^{t'} N }
		{ \tau_{max}( N,f,t,\tau, j ) }|_{t}^{t'}
		\geq
		A_{max} \, { \tau_{max}( N,f,t,\tau, j ) }|_{t}^{t'}
		\end{equation} \\
		
		\noindent
		and, analogously, the following always holds despite on which node fraction  $ { \tau_{max}( N,f,t,\tau, j ) } |_{t'}^{ D(G_t, t) } $ is centered and whenever it starts to diffuse
		
		\begin{equation} \label{stepC_2andfractionmax}
		\frac
		{ \sum\limits_{ o_i \in \mathbf{X}_{ { \tau_{max}( N,f,t,\tau, j ) } |_{t'}^{D(G_t,t)} } } { A (\mathbf{U}(p_{net}^{b_j} ( o_i ,  c ) ))  }  }
		{ { \tau_{max}( N,f,t,\tau, j ) } |_{t'}^{D(G_t,t)} N }
		{ \tau_{max}( N,f,t,\tau, j ) } |_{t'}^{D(G_t,t)}
		\geq
		C_2 { \tau_{max}( N,f,t,\tau, j ) } |_{t'}^{D(G_t,t)}
		\end{equation} \\
		
		Thus, since we have that $ { \tau_{max}( N,f,t,\tau, j ) }|_{t}^{t'} + { \tau_{max}( N,f,t,\tau, j ) } |_{t'}^{ D(G_t, t) } = 1 $, then, by Steps \eqref{stepSumoffractionsmax}, \eqref{stepA_maxandfractionmax} and \eqref{stepC_2andfractionmax}, we will have that
		
		\begin{align}
		& \frac
		{ \sum\limits_{ b_j  } 
			\left(
			\frac
			{ \sum\limits_{ o_i \in \mathbf{X}_{ { \tau_{max}( N,f,t,\tau, j ) }|_{t}^{t'} } } { A (\mathbf{U}(p_{net}^{b_j} ( o_i ,  c ) ))  }  }
			{ { \tau_{max}( N,f,t,\tau, j ) }|_{t}^{t'} N }
			{ \tau_{max}( N,f,t,\tau, j ) }|_{t}^{t'}
			\right)
		}
		{ | \{ b_j \} | } +
		\end{align}
		\[
		+ 
		\frac
		{ \sum\limits_{ b_j  } 
			\left(
			\frac
			{ \sum\limits_{ o_i \in \mathbf{X}_{ { \tau_{max}( N,f,t,\tau, j ) } |_{t'}^{ D(G_t, t) } } } { A (\mathbf{U}(p_{net}^{b_j} ( o_i ,  c ) ))  }  }
			{ { \tau_{max}( N,f,t,\tau, j ) } |_{t'}^{ D( G_t, t ) } N }
			{ \tau_{max}( N,f,t,\tau, j ) } |_{t'}^{ D(G_t, t) }
			\right)
		}
		{ | \{ b_j \} | } \geq
		\]
		\[
		\geq
		\frac
		{ \sum\limits_{ b_j  } 
			\left( 
			A_{max}
			{ \tau_{max}( N,f,t,\tau, j ) }|_{t}^{t'}
			+ 
			C_2
			{ \tau_{max}( N,f,t,\tau, j ) } |_{t'}^{ D(G_t, t) }
			\right)
		}
		{ | \{ b_j \} | } =	
		\]
		\[
		=
		\frac
		{ \sum\limits_{ b_j  } 
			\left( 
			( A_{max} - C_2 )
			{ \tau_{max}( N,f,t,\tau, j ) }|_{t}^{t'}
			+ 
			C_2
			\right)
		}
		{ | \{ b_j \} | }
		\]
	\end{proof}
\end{amslemma}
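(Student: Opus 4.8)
The plan is to fix a single node mapping $b_j$ first, split the population $\mathfrak{P}_{BB}(N)$ into the set of nodes that the ``best'' node infects during the time window $[t,t']$ and its complement, estimate the contribution of each block to the sum of complexities separately, and only afterwards average over all $b_j$ using the definition of the average singleton diffusion density.

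First I would put $C_2 = \min\{\, A(w) \mid \exists x \in \mathbf{L_U}\,(\mathbf{U}(x)=w) \,\}$, the least algorithmic complexity any node output can have; since the maximal first-cycle output is itself an achievable output, $C_2 \leq A_{max}$. Fixing $b_j$ and writing $\mathbf{X} = \mathbf{X}_{\tau_{max}(N,f,t,\tau,j)|_{t}^{t'}}$, Definition~\ref{BdefPartitionmaxN_BB} gives $|\mathbf{X}| = N\,\tau_{max}(N,f,t,\tau,j)|_{t}^{t'}$. The decisive step is to show $A(\mathbf{U}(p_{net}^{b_j}(o_i,c))) \geq A_{max}$ for every $o_i \in \mathbf{X}$. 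This combines three observations: (i) by Definitions~\ref{BdefN_BB}, \ref{BdefITFOprot} and~\ref{BdefIFP}, a node computes only in its first cycle and thereafter merely relays the largest integer it has received, so no integer larger than the global maximal first-cycle output is ever produced, and by Definition~\ref{BdefPartitionmax} every $o_i \in \mathbf{X}$ has, by time instant $t'$, received exactly that maximal integer; (ii) since $c \geq c_0 + t' + 1$, the final output returned at cycle $c$ is precisely the integer held at time instant $t'$, so $\mathbf{U}(p_{net}^{b_j}(o_i,c))$ equals the maximal first-cycle output; (iii) by Definition~\ref{BdefA_max} that maximal output has algorithmic complexity $A_{max}$. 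For each of the remaining $N - |\mathbf{X}|$ nodes the returned output has complexity at least $C_2$ by the choice of $C_2$.

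Adding the two block estimates yields, for each fixed $b_j$,
\[
\frac{\sum_{o_i\in\mathfrak{P}_{BB}(N)} A(\mathbf{U}(p_{net}^{b_j}(o_i,c)))}{N}
\;\geq\;
A_{max}\,\tau_{max}(N,f,t,\tau,j)|_{t}^{t'} + C_2\bigl(1 - \tau_{max}(N,f,t,\tau,j)|_{t}^{t'}\bigr)
\;=\;
(A_{max}-C_2)\,\tau_{max}(N,f,t,\tau,j)|_{t}^{t'} + C_2 ,
\]
where the inequality splits the sum over $\mathbf{X}$ and its complement; when the temporal diffusion diameter $D(G_t,t)$ is not well-defined one replaces $\tau_{max}|_{t'}^{D(G_t,t)}$ by $|\mathfrak{P}_{BB}(N)\setminus\mathbf{X}|/N$ throughout, which does not affect the conclusion. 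Finally I would average over the uniform distribution on $\{b_j\}$: since $A_{max}$ and $C_2$ are independent of $b_j$, and by Definition~\ref{BdefAveragePartitionmaxN_BB} the average of $\tau_{max}(N,f,t,\tau,j)|_{t}^{t'}$ over $b_j$ equals $\tau_{\mathbf{E}(max)}(N,f,t,\tau)|_{t}^{t'}$, the inequality transfers verbatim to the averaged left-hand side, which is the claim.

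The hard part will be observation (i)--(iii), i.e.\ pinning down precisely what a networked node outputs. One must unwind the information-sharing protocols of Definition~\ref{BdefIFP} carefully enough to see that ``contagion-only'' genuinely forbids any node from manufacturing an integer exceeding the global maximal first-cycle output, so that each infected node's final value is that maximal integer (and not merely ``at least as good''), and that the cycle budget $c \geq c_0 + t' + 1$ is exactly what guarantees the value reached at time $t'$ survives as the final output. A secondary, routine point is that the argument uses the \emph{singleton} diffusion density $\tau_{\mathbf{E}(max)}$, which is only a lower bound on the true, possibly multi-source, diffusion density when several nodes are generated with the maximal first-cycle output; since we are proving a lower bound, the estimate holds a fortiori in that case too.
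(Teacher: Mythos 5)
Your proposal is correct and follows essentially the same route as the paper's proof: the same constant $C_2 = \min \{ A(w) \mid \exists x \in \mathbf{L_U} ( \mathbf{U}( x ) = w ) \}$, the same decomposition of the population into the infected set $\mathbf{X}_{ { \tau_{max}( N,f,t,\tau, j ) }|_{t}^{t'} }$ and its complement, the same two block estimates ($A_{max}$ on the infected block, $C_2$ on the rest), and the same final averaging over the uniform distribution on $\{ b_j \}$ via Definition~\ref{BdefAveragePartitionmaxN_BB}. The only cosmetic difference is that you fix $b_j$ first and average at the end, whereas the paper carries the sum over $b_j$ through every line; the content is identical.
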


\noindent \\

\noindent \\

\subsection{Theorem 1 Extended}

% theorem 1 extended Revised 1

\begin{thm}[\ref{BthmMain}]\label{thmMain}
	\noindent \\
	
	Let $ w \in \mathbf{L_U} $ be a network input. \\
	
	Let $ 0 < N \in \mathbb{N} $.\\
	
	Let $ \mathfrak{N}_{BB}(N,f,t,\tau,j) = ( G_t, \mathfrak{ P }_{BB}(N), b_j ) $ be well-defined.\\
	
	Let $ t_0 \leq t \leq t' \leq t_{ |\mathrm{T}(G_t)|-1 } $.\\
	
	Let $ \myfunc{c}{ \mathbb{N} } { \mathfrak{C_{BB}} } { x } { c(x)=y } $  be a total computable function where $ c(x) \geq c_0 + t'+1 $. \\

	Then, we will have that:
	
	%s25p13l1
	\begin{align*}
	& \lim\limits_{ N \to \infty }
	\mathbf{E}_{ \mathfrak{N}_{BB}(N,f,t,\tau) } 
	\left(
	{ {\displaystyle{\myDelta_{iso}^{net}} A} (o_i,c(x))} 
	\right)
	\geq
	\lim\limits_{ N \to \infty }
	\left( 
	{ \tau_{\mathbf{E}(max)}( N,f,t,\tau ) }|_{t}^{t'}
	-
	\Omega(w,c(x))
	\right)
	\lg(N) - \\
	& - \Omega(w,c(x)) \lg(x) - 2 \, \Omega(w,c(x))\lg(\lg(x)) - A(w) - C_5
	\end{align*}
	
	\noindent \\
	
	\begin{noteunderthm}
		Thus, note that, for example, for bigger enough values of $x$ compared to $N$ one can make this lower bound always negative. One of the main ideas behind forthcoming results in this paper is to find optimal conditions where this lower bound is not only positive, but also goes to $\infty$.
	\end{noteunderthm}
	
	\begin{noteunderthm}
		Note that this lower bound for the expected emergent algorithmic complexity is dependent on the value in the domain of the function $c$ and not on function $c$ itself, even if it grows fast. And it holds as long as c is a total computable function. In fact, one may want to obtain this theorem for fixed values of $c$ in which it is not a function but an arbitrary value. And the same result also holds in this case. The reader is invited to check that a simple substitution of $c(x)$ for $c$ inside $ \Omega(w,c(x)) $ and of $x$ inside the logarithms for $c$ is enough\footnote{ Besides a slightly different constant $C_5$.}.
	\end{noteunderthm}
	
	\begin{noteunderthm}
		The same result also holds if only one possible function $b_j$ is defined for each member of the family $\mathbb{G}(f,t,\tau)$. This way, only one function $b_j$ will be taken into account within the sum in order to give the mean. Thus, in this case one can replace $\tau_{\mathbf{E}(max)}$ with $\tau_{max}$ not only in Theorem \ref{BthmMain}, but also in \ref{BcorMain} and \ref{BthmMainCentralTime} . Such variation of these theorems becomes useful when one has algorithmic networks $\mathfrak{N}_{BB}(N,f,t,\tau,j)$ built upon a historical population-size growth in which each new node/program is linked (or not) to the previous existing algorithmic network. 
	\end{noteunderthm}
	
	\noindent \\
	
	\begin{proof}
		\noindent \\
		The proof will follow from Steps~\eqref{stepDefEEAC} and~\eqref{stepMainthm1} below.\\
		
		We have from our hypothesis on function $c$ and from AIT that there is $C_c \in \mathbb{N}$ such that, for every $ x \in \mathbb{N} $, 
		
		\begin{equation}\label{AITonfunctionc}
		A( c(x) ) \leq C_c + A(x)
		\end{equation} \\
		
		Let $ C_5 = C_c + C_L + C_1 + C_4 + C_0 $. \\
		
		Note that, as in Step \eqref{stepSLLNonalgorithmicentropy}, we will have from Definition~\ref{BdefOmegawc} that
		\begin{equation}\label{stepLLNinHaltisothm1}
		\lim\limits_{ N \to \infty }
		\frac{\left| Halt_{iso}( L_{ \mathfrak{P}_{BB}(N) }, w, c(x)) \right|}
		{ N }
		=
		\lim\limits_{ N \to \infty }
		\frac{1}{N}
		\sum\limits_{ p_i \in Halt_{iso}( \mathbf{L_U}(N), w, c(x)) }
		\frac{N}
		{2^{ | p_i | }}
		=
		\Omega( w, c(x) )
		\end{equation} \\
		
		From Definition \ref{BdefEEACN_BB}, we have that the expected emergent algorithmic complexity of a node/program for $ \mathfrak{N}_{BB}(N,f,t,\tau,j) = ( G_t, \mathfrak{ P }_{BB}(N), b_j ) $, where $ 0 < j \leq |\{ b_j \}| $ is given by
		
		\begin{align}\label{stepDefEEAC}
		&\mathbf{E}_{ \mathfrak{N}_{BB}(N,f,t,\tau) } 
		\left(
		{ {\displaystyle{\myDelta_{iso}^{net}} A} (o_i,c(x))} 
		\right)
		=  
		\frac{
			{
				\sum\limits_{ b_j  }
			}
			\frac{
				\sum\limits_{ o_i \in \mathfrak{P}_{BB}(N) } { A (\mathbf{U}(p_{net}^{b_j} ( o_i ,  c(x) ) )) - 
					A (\mathbf{U}(p_{iso} ( p_i ,  c(x) ))) }
			}
			{N}
		}
		{ | \{ b_j \} | }
		\end{align} \\
		
		And, from Definitions \ref{BdefHalt} , \ref{BdefBarHalt} , \ref{BdefOmegawc} , \ref{BdefAveragePartitionmaxN_BB} , \ref{BdefU'} and Lemmas \ref{lemmaComplexityp_i} , \ref{lemmaComplexityonBarHalt} , \ref{lemmaComplexityonHaltp_i} , \ref{lemmaMinComplexityonDiffusion} , \ref{lemmaSLLNandAIT} and Steps \eqref{AITonfunctionc}  and \eqref{stepLLNinHaltisothm1}, we will have that
		
		%s1p1l1
		\begin{align}\label{stepMainthm1}
		\lim\limits_{ N \to \infty }
		\frac{
			{
				\sum\limits_{ b_j  }
			}
			\frac{
				\sum\limits_{ o_i \in \mathfrak{P}_{BB}(N) } { A (\mathbf{U}(p_{net}^{b_j} ( o_i ,  c(x)) )) - 
					A (\mathbf{U}(p_{iso} ( p_i ,  c(x)))) }
			}
			{N}
		}
		{ | \{ b_j \} | } 
		= 
		\end{align}
		%s2p2l1
		\[
		=
		\lim\limits_{ N \to \infty }
		\frac{
			{
				\sum\limits_{ b_j  }
			}
			\frac{
				\sum\limits_{ o_i \in \mathfrak{P}_{BB}(N) } { A (\mathbf{U}(p_{net}^{b_j} ( o_i ,  c(x)) )) }
			}
			{N}
		}
		{ | \{ b_j \} | }
		-
		\]
		%s3p2l2
		\[
		-
		\left(
		\frac{ \sum\limits_{ b_j  }
			\left(
			\frac{
				\sum\limits_{ p_i \in Halt_{iso}( L_{ \mathfrak{P}_{BB}(N) }, w, c(x)) } {  
					A (\mathbf{U}(p_{iso} ( p_i ,  c(x)))) }
			}
			{N}
			+
			\frac{
				\sum\limits_{ p_i \in \overline{Halt}_{iso}( L_{ \mathfrak{P}_{BB}(N) }, w, c(x)) } { 
					A (\mathbf{U}(p_{iso} ( p_i ,  c(x)))) }
			}
			{N}
			\right)
		}
		{ | \{ b_j \} | }
		\right)
		=
		\]
		%s4p3l1
		\[
		=
		\lim\limits_{ N \to \infty }
		\frac{
			{
				\sum\limits_{ b_j  }
			}
			\frac{
				\sum\limits_{ o_i \in \mathfrak{P}_{BB}(N) } { A (\mathbf{U}(p_{net}^{b_j} ( o_i ,  c(x)) )) }
			}
			{N}
		}
		{ | \{ b_j \} | }
		-
		\]
		%s5p3l2
		\[
		-
		\left(
		\frac{ \sum\limits_{ b_j  }
			\left(
			\frac{
				\sum\limits_{p_i \in Halt_{iso}( L_{ \mathfrak{P}_{BB}(N) }, w, c(x)) } {  
					A (\mathbf{U}(p_{iso} ( p_i ,  c(x)))) }
			}
			{N}
			\right)
		}
		{ | \{ b_j \} | }
		+
		C_0 ( 1 - \Omega( w, c(x)) )
		\right)	
		\geq
		\] 
		%s6p4l1
		\[
		\geq
		\lim\limits_{ N \to \infty }
		\frac{
			{
				\sum\limits_{ b_j  }
			}
			\frac{
				\sum\limits_{ o_i \in \mathfrak{P}_{BB}(N) } { A (\mathbf{U}(p_{net}^{b_j} ( o_i ,  c(x)) )) }
			}
			{N}
		}
		{ | \{ b_j \} | }
		-
		\]
		%s7p4l2
		\[
		-
		\left(
		\frac{ \sum\limits_{ b_j  }
			\left(
			\frac{
				\sum\limits_{p_i \in Halt_{iso}( L_{ \mathfrak{P}_{BB}(N) }, w, c(x)) } { C_1 + | p_i | + A(w) + A(c(x)) }
			}
			{N}
			+
			C_0 ( 1 - \Omega( w, c(x)) )
			\right)
		}
		{ | \{ b_j \} | }
		\right)	
		=
		\] 
		%s8p5l1
		\[
		=
		\lim\limits_{ N \to \infty }
		\frac{
			{
				\sum\limits_{ b_j  }
			}
			\frac{
				\sum\limits_{ o_i \in \mathfrak{P}_{BB}(N) } { A (\mathbf{U}(p_{net}^{b_j} ( o_i ,  c(x)) )) }
			}
			{N}
		}
		{ | \{ b_j \} | }
		-
		\]
		%s9p5l2
		\[
		-
		\left(
		\frac{ \sum\limits_{ b_j  }
			\left(
			\frac{
				\sum\limits_{p_i \in Halt_{iso}( L_{ \mathfrak{P}_{BB}(N) }, w, c(x)) } { | p_i | }
			}
			{N}
			\right)
		}
		{ | \{ b_j \} | }
		+
		\Omega(w,c(x)) \big( C_1 + A(w) + A(c(x)) \big)
		+
		C_0 \big( 1 - \Omega( w, c(x)) \big)
		\right)
		\geq
		\] 
		%s10p6l1
		\[
		\geq
		\lim\limits_{ N \to \infty }
		\frac{
			{
				\sum\limits_{ b_j  }
			}
			\frac{
				\sum\limits_{ o_i \in \mathfrak{P}_{BB}(N) } { A (\mathbf{U}(p_{net}^{b_j} ( o_i ,  c(x)) )) }
			}
			{N}
		}
		{ | \{ b_j \} | }
		-
		\]
		%s11p6l2
		\[
		-
		\bigg(
		\Omega(w,c(x)) \lg(N)
		+
		\Omega(w,c(x)) \big( C_1 + A(w) + A(c(x)) \big)
		+
		C_0 \big( 1 - \Omega( w, c(x)) \big)
		\bigg)
		\geq
		\] 
		%s12p7l1
		\[
		\geq
		\lim\limits_{ N \to \infty }
		\left( A_{max} - C_2 \right) 
		{ \tau_{\mathbf{E}(max)}( N,f,t,\tau ) }|_{t}^{t'}
		+
		C_2
		-
		\]
		%s13p7l2
		\[
		-
		\left(
		\Omega(w,c(x)) \lg(N)
		+
		\Omega(w,c(x)) \big( C_1 + A(w) + A(c(x)) \big)
		+
		C_0 \big( 1 - \Omega( w, c(x)) \big)
		\right)
		\geq
		\]
		%s14p8l1
		\[
		\geq
		\lim\limits_{ N \to \infty }
		\left( \lg(N) - C_4 - C_2 \right) 
		{ \tau_{\mathbf{E}(max)}( N,f,t,\tau ) }|_{t}^{t'}
		+
		C_2
		-
		\]
		%s15p8l2
		\[
		-
		\Big(
		\Omega(w,c(x)) \lg(N)
		+
		\Omega(w,c(x)) \big( C_1 + A(w) + A(c(x)) \big)
		+
		C_0 \big( 1 - \Omega( w, c(x)) \big)
		\Big)
		=
		\]
		%s16p9l1
		\[
		=
		\lim\limits_{ N \to \infty }
		\left( \lg(N) - C_4 - C_2 \right) 
		{ \tau_{\mathbf{E}(max)}( N,f,t, \tau ) }|_{t}^{t'}
		+
		C_2
		-
		\]
		%sp
		\[
		-
		\Omega(w,c(x)) \lg(N)
		-
		\Omega(w,c(x)) C_1 - \Omega(w,c(x)) A(w) - \Omega(w,c(x)) A(c(x))
		-
		C_0 + C_0 \Omega( w, c(x))	
		=
		\]
		\[
		=
		\lim\limits_{ N \to \infty }
		\left( 
		{ \tau_{\mathbf{E}(max)}( N,f,t,\tau ) }|_{t}^{t'}
		-
		\Omega(w,c(x))
		\right)
		\lg(N)
		-
		\Omega(w,c(x)) A( c(x))
		-
		\]
		%s17p9l2
		\[
		- ( C_4 + C_2 ) { \tau_{\mathbf{E}(max)}( N,f,t,\tau ) }|_{t}^{t'}
		-
		\Omega(w,c(x)) C_1
		+ 
		\Omega(w,c(x)) C_0
		+ C_2 - C_0
		-
		\Omega(w,c(x)) A(w)
		\geq 	
		\]
		%s18p10l1
		\[
		\geq
		\lim\limits_{ N \to \infty }
		\left( 
		{ \tau_{\mathbf{E}(max)}( N,f,t,\tau ) }|_{t}^{t'}
		-
		\Omega(w,c(x))
		\right)
		\lg(N)
		-
		\Omega(w,c(x)) A( c(x))
		-
		\]
		%s19p10l2
		\[
		- ( C_4 + C_2 ) 
		-
		C_1
		+ C_2 - C_0
		-
		A(w)
		=	
		\]
		%s20p11l1
		\[
		=
		\lim\limits_{ N \to \infty }
		\left( 
		{ \tau_{\mathbf{E}(max)}( N,f,t,\tau ) }|_{t}^{t'}
		-
		\Omega(w,c(x))
		\right)
		\lg(N)
		-
		\Omega(w,c(x)) A( c(x))
		-
		\]
		%s21p11l2
		\[
		- C_4 
		-
		C_0
		-
		C_1
		-
		A(w)
		\geq	
		\]
		%s22p12l1
		\[
		\geq
		\lim\limits_{ N \to \infty }
		\left( 
		{ \tau_{\mathbf{E}(max)}( N,f,t,\tau ) }|_{t}^{t'}
		-
		\Omega(w,c(x))
		\right)
		\lg(N)
		-
		\]
		%s23p12l2
		\[
		- \Omega(w,c(x)) \lg(x) - \Omega(w,c(x))(1+\epsilon)\lg(\lg(x)) - \Omega(w,c(x)) C_L
		-
		\]
		%s24p12l3
		\[
		- \Omega( w, c(x) ) C_c
		- C_4 
		-
		C_0
		-
		C_1
		-
		A(w)	
		\]
		%s25p13l1
		\[
		\geq
		\lim\limits_{ N \to \infty }
		\left( 
		{ \tau_{\mathbf{E}(max)}( N,f,t,\tau ) }|_{t}^{t'}
		-
		\Omega(w,c(x))
		\right)
		\lg(N)
		-
		\]
		%s26p13l2
		\[
		- \Omega(w,c(x)) \lg(x) 
		- 2 \, \Omega(w,c(x))\lg(\lg(x)) 
		- C_5 
		- A(w)
		\]
	\end{proof}

\end{thm}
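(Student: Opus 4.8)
The plan is to prove the bound directly from the definition of $\mathbf{EEAC_{BB}}$ in Definition~\ref{BdefEEACN_BB}, by separately estimating the \emph{networked} contribution from below and the \emph{isolated} contribution from above and then taking the difference. First I would expand $\mathbf{E}_{\mathfrak{N}_{BB}(N,f,t,\tau)}\!\left({ {\displaystyle{\myDelta_{iso}^{net}} A} (o_i,c(x))}\right)$ as the average over all admissible maps $b_j$ of the per-node average of $A(\mathbf{U}(p_{net}^{b_j}(o_i,c(x)))) - A(\mathbf{U}(p_{iso}(p_i,c(x))))$, and split the subtracted term according to whether $p_i$ halts in every cycle up to $c(x)$ when isolated, i.e.\ along the partition $Halt_{iso}$ versus $\overline{Halt}_{iso}$ of Definitions~\ref{BdefHalt} and~\ref{BdefBarHalt}. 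Since, by Definition~\ref{BdefL_BB}, the isolated behaviour of each $p_i$ does not depend on the topology, the average over $b_j$ of these isolated sums is trivial and I can pass to a plain sum over the randomly generated population.

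For the non-halting part, oracle-sensitiveness (Definition~\ref{Bdefsensitivetooracles}) forces the final output to be $0$, so its averaged contribution tends to the constant $C_0(1-\Omega(w,c(x)))$ by Lemma~\ref{lemmaComplexityonBarHalt}. For the halting part I would apply Lemma~\ref{lemmaComplexityp_i} to bound each summand by $C_1 + |p_i| + A(w) + A(c(x))$; the density of halting programs is $\Omega(w,c(x))$ by Definition~\ref{BdefOmegawc} and the strong law of large numbers, while the averaged sum of the lengths $|p_i|$ is at most $\Omega(w,c(x))\lg(N)$ by Lemma~\ref{lemmaComplexityonHaltp_i} (which in turn rests on the Gibbs/Jensen estimate of Lemma~\ref{lemmaGibbsandalgorithmicentropy}). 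Finally, since $c$ is total computable there is a constant $C_c$ with $A(c(x)) \le C_c + A(x)$, and the hypothesis on $\mathbf{L_U}$ in Definition~\ref{BdefU'} gives $A(x) \le \lg(x) + (1+\epsilon)\lg(\lg(x)) + C_L \le \lg(x) + 2\lg(\lg(x)) + C_L$. Collecting these, the isolated side is bounded above by $\Omega(w,c(x))\lg(N) + \Omega(w,c(x))\lg(x) + 2\,\Omega(w,c(x))\lg(\lg(x)) + A(w)$ plus constants.

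For the networked side I would invoke Lemma~\ref{lemmaMinComplexityonDiffusion}: the condition $c(x) \ge c_0 + t' + 1$ guarantees the diffusion between time instants $t$ and $t'$ has completed and that at least one further cycle converts partial outputs into final outputs, so the averaged networked complexity is at least $(A_{max}-C_2)\,{\tau_{\mathbf{E}(max)}(N,f,t,\tau)}|_{t}^{t'} + C_2$. Then Lemma~\ref{lemmaSLLNandAIT} lets me replace $A_{max}$ by $\lg(N) - C_4$ in the limit, since the strong law of large numbers makes a program computing a Busy Beaver value of size $\lg(N)-C_{BB}$ almost surely appear in the random population. Subtracting the isolated upper bound from the networked lower bound, gathering every term that does not scale with $\lg(N)$, $\lg(x)$, $\lg(\lg(x))$ or $A(w)$ into a single constant $C_5$, and using that the ${\tau_{\mathbf{E}(max)}}$-weighted constants dominate the corresponding bare constants, yields the claimed inequality with coefficient ${\tau_{\mathbf{E}(max)}(N,f,t,\tau)}|_{t}^{t'} - \Omega(w,c(x))$ on $\lg(N)$.

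The hard part, I expect, is bookkeeping rather than conceptual: carefully tracking which estimates hold exactly, which hold only ``on average as $N$ grows'' (i.e.\ almost surely in the limit under the program-size distribution of Definition~\ref{BdefRandompopulation}), and especially the direction of every inequality — the isolated bound is an over-estimate, so it enters with a minus sign, while the networked bound is an under-estimate, and one must check that the cross-terms between ${\tau_{\mathbf{E}(max)}}$ and the various additive constants do not flip sign when subtracted. One should also verify that the per-node average, the average over the maps $b_j$, the $Halt_{iso}/\overline{Halt}_{iso}$ split, and the $N\to\infty$ limit all legitimately commute, which is where the topology-independence of the isolated computations is essential.
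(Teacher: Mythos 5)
Your proposal follows essentially the same route as the paper's own proof: the same decomposition of the expected emergent algorithmic complexity into a networked lower bound via Lemmas~\ref{lemmaMinComplexityonDiffusion} and~\ref{lemmaSLLNandAIT} and an isolated upper bound via the $Halt_{iso}/\overline{Halt}_{iso}$ split and Lemmas~\ref{lemmaComplexityp_i}, \ref{lemmaComplexityonHaltp_i}, \ref{lemmaGibbsandalgorithmicentropy}, and~\ref{lemmaComplexityonBarHalt}, followed by the same use of $A(c(x))\le C_c+A(x)$ and the language bound on $A(x)$ to collect constants into $C_5$. The argument is correct and matches the paper's proof step for step.
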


\noindent \\

\subsection{Theorem 2 Extended}

\begin{thm}[\ref{BthmMainCentralTime}]\label{thmMainCentralTime}
	\noindent \\
	
	Let $ w \in \mathbf{L_U} $ be a network input.\\
	
	Let $ 0 < N \in \mathbb{N} $.\\
	
	If there is $ 0 \leq z_0 \leq | \mathrm{T}(G_t) | -1 $ and $ \epsilon, \, \epsilon_2 > 0 $ such that 
	
	\[
	z_0 + f( N, t_{z_0}, \tau )  + 2
	= 
	\mathbf{ O }
	\left( \frac
	{ N^{ C } }
	{ \lg(N) } 
	\right)
	\]
	
	\noindent where
	
	\[ 
	0
	<
	C = 
	\frac
	{
		{ \tau_{\mathbf{E}(max)}( N,f,t_{z_0},\tau ) }|_{ t_{z_0} }^{ t_{ z_0 + f( N, t_{z_0}, \tau ) }  }
		-
		\Omega(w, c_0 + z_0 + f( N, t_{z_0}, \tau ) + 2 )
		-
		\epsilon
	}
	{ \Omega(w,  c_0 + z_0 + f( N, t_{z_0}, \tau ) + 2  ) }
	\leq
	\frac{1}{ \epsilon_2 }
	\] \\
	
	Then, for every non-decreasing total computable function $ \myfunc{c}{ \mathbb{N} } { \mathfrak{C_{BB}} } { x } { c(x)=y } $ where $ t_{z_0}, \, t_{ z_0 + f( N, t_{z_0}, \tau ) } \in \mathrm{T}(G_t) $ and $ c(z_0 + f( N, t_{z_0}, \tau ) + 2) \geq c_0 + z_0 + f( N, t_{z_0}, \tau ) + 2 $ and $\mathfrak{N}_{BB}(N,f,t_{z_0},\tau,j) = ( G_t, \mathfrak{ P }_{BB}(N), b_j ) $ is well-defined, we will have that there are $ t_{cen_2}(c) $ and $ t_{cen_1}(c) $ such that 
	
	\[
	t_{cen_2}(c) = t_{cen_1}(c) \leq t_{ z_0 }
	\]

	\noindent \\
	
	\begin{proof}
		\noindent \\
		We know from Corollary \ref{BcorMain} that
		
		\begin{equation}\label{stepFromcorMain}
		\lim\limits_{ N \to \infty } 
		\mathbf{E}_{ \mathfrak{N}_{BB}(N,f,t_z,\tau) } 
		\left(
		{ {\displaystyle{\myDelta_{iso}^{net}} A} (o_i, c( z + f( N, t_z, \tau ) + 2 ) )} 
		\right)
		\geq
		\end{equation}
		
		\begin{align*}
		& \geq
		\lim\limits_{ N \to \infty }
		\left( 
		{ \tau_{\mathbf{E}(max)}( N,f,t_{z},\tau ) }|_{ t_{z} }^{ t_{ z + f( N, t_{z}, \tau ) }  }
		-
		\Omega(w, c( z + f( N, t_z, \tau ) + 2 ))
		\right)
		\lg(N) - \\
		& - \Omega(w, c( z + f( N, t_z, \tau ) + 2 )) \lg( z + f( N, t_z, \tau ) + 2) - \\
		& - 2 \, \Omega(w, c( z + f( N, t_z, \tau ) + 2 ))\lg(\lg( z + f( N, t_z, \tau ) + 2)) - A(w) - C_5 \\
		\end{align*}
		
		Suppose that there is $ t_{z_0} \in \mathrm{T}(G_t) $, where $ 0 \leq {z_0} \leq | \mathrm{T}(G_t) | -1 $, and $\epsilon > 0$ such that
		
		\begin{equation}
		{z_0} + f( N, t_{z_0}, \tau ) + 2
		= 
		\mathbf{ O }
		\left( \frac
		{ N^{ C } }
		{ \lg(N) } 
		\right)
		\end{equation}
		
		\noindent where
		
		\[ 
		0 < C = 
		\frac
		{
			{ \tau_{\mathbf{E}(max)}( N,f,t_{z_0},\tau ) }|_{ t_{z_0} }^{ t_{ z_0 + f( N, t_{z_0}, \tau ) }  }
			-
			\Omega(w,  c_0 + z_0 + f( N, t_{z_0}, \tau ) + 2 )
			-
			\epsilon
		}
		{ \Omega(w, c_0 + z_0 + f( N, t_{z_0}, \tau ) + 2 ) }
		\] \\
		
		From Definition \ref{BdefOmegawc}, we have that, for every $y \in \mathbb{N}$, if $ y \geq c_0 + z_0 + f( N, t_{z_0}, \tau ) + 2 $, then
		
		\begin{equation}\label{stepNestingOmegas}
		\Omega(w,y) \leq \Omega( w, c_0 + z_0 + f( N, t_{z_0}, \tau ) + 2 )
		\end{equation} \\
		
		Thus, since we are assuming $ c(z_0 + f( N, t_{z_0}, \tau ) + 2) \geq c_0 + z_0 + f( N, t_{z_0}, \tau ) + 2 $, then, for fixed values of $ { \tau_{\mathbf{E}(max)}( N,f,t_{z_0},\tau ) }|_{ t_{z_0} }^{ t_{ z_0 + f( N, t_{z_0}, \tau ) }  } $ and $\epsilon$, we will have from Step \eqref{stepNestingOmegas} that

		\begin{equation}\label{stepCvsC'}
		\frac
		{
			{ \tau_{\mathbf{E}(max)}( N,f,t_{z_0},\tau ) }|_{ t_{z_0} }^{ t_{ z_0 + f( N, t_{z_0}, \tau ) }  }
			-
			\Omega(w, c( z_0 + f( N, t_{z_0}, \tau ) + 2 ))
			-
			\epsilon
		}
		{ \Omega(w, c( z_0 + f( N, t_{z_0}, \tau ) + 2 )) }
		\geq
		\end{equation}
		\[
		\geq
		\frac
		{
			{ \tau_{\mathbf{E}(max)}( N,f,t_{z_0},\tau ) }|_{ t_{z_0} }^{ t_{ z_0 + f( N, t_{z_0}, \tau ) }  }
			-
			\Omega(w,  c_0 + z_0 + f( N, t_{z_0}, \tau ) + 2 )
			-
			\epsilon
		}
		{ \Omega(w, c_0 + z_0 + f( N, t_{z_0}, \tau ) + 2 ) }
		=
		C
		\geq 0
		\] \\
		
		Let \[ C' = \frac
		{
			{ \tau_{\mathbf{E}(max)}( N,f,t_{z_0},\tau ) }|_{ t_{z_0} }^{ t_{ z_0 + f( N, t_{z_0}, \tau ) }  }
			-
			\Omega(w, c( z_0 + f( N, t_{z_0}, \tau ) + 2 ))
			-
			\epsilon
		}
		{ \Omega(w, c( z_0 + f( N, t_{z_0}, \tau ) + 2 )) } \]
		
		\noindent \\

		Remember that for every $x>0$ and $  t, t' \in \mathrm{T}(G_t) $ there is $\epsilon_2$ such that\footnote{ Remember that one can always have a program that halts for every input, so that it will also halts for every partial output and, hence, halt on every cycle --- see \ref{BdefOmegawc} . }
		
		\begin{equation}
		0 < \epsilon_2 \leq \Omega( w, x ) \leq 1
		\end{equation} 
		
		\noindent and, therefore, from the Definition \ref{BdefAveragePartitionmaxN_BB}, we will also have that
		
		\begin{equation}\label{stepBoundingC}
		\frac{-1 - \epsilon}{ \epsilon_2 }
		\leq
		\frac
		{
			{ \tau_{\mathbf{E}(max)}( N,f,t,\tau ) }|_{ t }^{ t' }
			-
			\Omega( w, x )
			- \epsilon
		}
		{ \Omega( w, x ) } \leq \frac{1}{ \epsilon_2 }
		\end{equation}
		
		Hence, from Steps \eqref{stepCvsC'} and \eqref{stepBoundingC} we will have that 
		
		\[
		z_0 + f( N, t_{z_0}, \tau )  + 2
		= 
		\mathbf{ O }
		\left( \frac
		{ N^{ C' } }
		{ \lg(N) } 
		\right)
		\]
		
		\noindent where
		
		\[ 
		0
		\leq
		C' = 
		\frac
		{
			{ \tau_{\mathbf{E}(max)}( N,f,t_{z_0},\tau ) }|_{ t_{z_0} }^{ t_{ z_0 + f( N, t_{z_0}, \tau ) }  }
			-
			\Omega(w, c( z_0 + f( N, t_{z_0}, \tau ) + 2 ) )
			-
			\epsilon
		}
		{ \Omega(w, c( z_0 + f( N, t_{z_0}, \tau ) + 2 ) ) }
		\leq
		\frac{1}{ \epsilon_2 }
		\] 
		
		\noindent \\
		
		And, since $ {z_0} + f( N, t_{z_0}, \tau ) + 2 $ is now assymptotically dominated by $ \frac
		{ N^{ C' } }
		{ \lg(N) } $, then by definition we will have that there is a constant $ C_6 $ such that
		
		\begin{equation}\label{stepMainthmMainCentralTime}
		\lim\limits_{ N \to \infty }
		\left( 
		{ \tau_{\mathbf{E}(max)}( N,f,t_{z_0},\tau ) }|_{ t_{z_0} }^{ t_{ z_0 + f( N, t_{z_0}, \tau ) }  }
		-
		\Omega(w, c( z_0 + f( N, t_{z_0}, \tau ) + 2 ))
		\right)
		\lg(N) - \\
		\end{equation}
		
		\begin{align*}
		& - \Omega(w, c( z_0 + f( N, t_{z_0}, \tau ) + 2 )) \lg( {z_0} + f( N, t_{z_0}, \tau ) + 2) - \\
		& - 2 \, \Omega(w, c( z_0 + f( N, t_{z_0}, \tau ) + 2 ))\lg(\lg( {z_0} + f( N, t_{z_0}, \tau ) + 2)) - A(w) - C_5 
		\geq \\
		%p2
		& \geq
		\lim\limits_{ N \to \infty }
		\left( 
		{ \tau_{\mathbf{E}(max)}( N,f,t_{z_0},\tau ) }|_{ t_{z_0} }^{ t_{ z_0 + f( N, t_{z_0}, \tau ) }  }
		-
		\Omega(w, c( z_0 + f( N, t_{z_0}, \tau ) + 2 ))
		\right)
		\lg(N) - \\
		& - \Omega(w, c( z_0 + f( N, t_{z_0}, \tau ) + 2 )) \lg( C_6 \, \frac
		{ N^{ C' } }
		{ \lg(N) } ) - \\
		& - 2 \, \Omega(w, c( z_0 + f( N, t_{z_0}, \tau ) + 2 ))\lg( \lg( C_6 \, \frac
		{ N^{ C' } }
		{ \lg(N) } ) ) - A(w) - C_5 
		\geq \\
		%p3
		& \geq
		\lim\limits_{ N \to \infty }
		\left( 
		{ \tau_{\mathbf{E}(max)}( N,f,t_{z_0},\tau ) }|_{ t_{z_0} }^{ t_{ z_0 + f( N, t_{z_0}, \tau ) }  }
		-
		\Omega(w, c( z_0 + f( N, t_{z_0}, \tau ) + 2 ))
		\right)
		\lg(N) - \\
		& - \Omega(w, c( z_0 + f( N, t_{z_0}, \tau ) + 2 )) \left( 
		\lg( C_6 ) + C' \, \lg( N ) - \lg(\lg(N))  )
		\right) - \\
		& - 2 \, \Omega(w, c( z_0 + f( N, t_{z_0}, \tau ) + 2 ))\lg\left( 
		\lg( C_6 ) + \lg( N^{ C' } ) - \lg(\lg(N))  ) 
		\right) - A(w) - C_5 
		\geq \\
		%p4
		& \geq
		\lim\limits_{ N \to \infty }
		\left( 
		\epsilon
		\right)
		\lg(N) - \Omega(w, c( z_0 + f( N, t_{z_0}, \tau ) + 2 )) \left( 
		\lg( C_6 ) - \lg(\lg(N))  )
		\right) - \\
		& - 2 \,  \Omega(w, c( z_0 + f( N, t_{z_0}, \tau ) + 2 ))\lg\left( 
		\lg( C_6 ) + \lg( N^{ C' } ) - \lg(\lg(N))  ) 
		\right) - A(w) - C_5 
		\geq \\
		%p5
		& \geq
		\lim\limits_{ N \to \infty }
		\left( 
		\epsilon
		\right)
		\lg(N) - \Big( 
		\lg( C_6 ) - \lg(\lg(N))  
		\Big) - 2 \, \lg\Big( 
		\lg( C_6 ) + \lg( N^{ C' } ) - \lg(\lg(N))  
		\Big) - \\
		& - A(w) - C_5 
		\geq \\
		%p6
		& \geq
		\lim\limits_{ N \to \infty }
		\left( 
		\epsilon
		\right)
		\lg(N) - 
		\lg( C_6 ) + \lg(\lg(N)) - 2 \, \lg\left( \lg( N^{ \frac{1}{ \epsilon_2 } } ) \right) - A(w) - C_5 
		\geq \\
		%p7
		& \geq
		\lim\limits_{ N \to \infty }
		\left( 
		\epsilon
		\right)
		\lg(N) - 
		\lg( C_6 ) + \lg(\lg(N)) 
		- 2 \, \lg( \frac{1}{ \epsilon_2 } \, \lg( N ) ) 
		- A(w) - C_5 
		\geq \\
		%p7
		& \geq
		\lim\limits_{ N \to \infty }
		\left( 
		\epsilon
		\right)
		\lg(N) - 
		\lg( C_6 ) + \lg(\lg(N)) 
		- 2 \, \lg( \frac{1}{ \epsilon_2 } ) - 2 \, \lg( \lg( N ) )
		- A(w) - C_5 
		\geq \\
		%p8
		& \geq
		\lim\limits_{ N \to \infty }
		\left( 
		\epsilon
		\right)
		\lg(N) - 
		\lg( C_6 ) 
		- 2 \, \lg( \frac{1}{ \epsilon_2 } )
		- \lg( \lg( N ) )
		- A(w) - C_5 
		=
		\infty \\
		\end{align*}

		Therefore, from Steps \eqref{stepFromcorMain} and \eqref{stepMainthmMainCentralTime}, we will have that

		\begin{equation}\label{stepPrevthmMainCentralTime}
		\lim\limits_{ N \to \infty } 
		\mathbf{E}_{ \mathfrak{N}_{BB}(N,f,t_{z_0},\tau) } 
		\left(
		{ {\displaystyle{\myDelta_{iso}^{net}} A} (o_i, c( z_0 + f( N, t_{z_0}, \tau ) + 2 ) ) } 
		\right)
		=
		\infty
		\end{equation} \\
		
		Therefore, directly from the Definitions \ref{BdefTimecentrality1} and \ref{BdefTimecentrality2} and Step \eqref{stepPrevthmMainCentralTime}, since $ t_{ z_0 } $ satisfies these definitions, we will have that
		
		\[
		t_{cen_2}(c) = t_{cen_1}(c) \leq t_{ z_0 }
		\]

	\end{proof}

\end{thm}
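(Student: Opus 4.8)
The plan is to reduce the theorem to Corollary~\ref{BcorMain} evaluated at the starting time instant $t_{z_0}$, to show that under the stated hypothesis the lower bound it supplies diverges to $\infty$ as $N \to \infty$, and then to read off the conclusion on $t_{cen_1}$ and $t_{cen_2}$ directly from Definitions~\ref{BdefTimecentrality1} and~\ref{BdefTimecentrality2}. Throughout I abbreviate $m = z_0 + f(N, t_{z_0}, \tau) + 2$. First I would instantiate Corollary~\ref{BcorMain} with $z = z_0$ and the given non-decreasing computable $c$, which yields a lower bound for $\lim_{N\to\infty} \mathbf{E}_{\mathfrak{N}_{BB}(N,f,t_{z_0},\tau)}\!\left({\displaystyle{\myDelta_{iso}^{net}} A}(o_i, c(m))\right)$ whose leading term is $\bigl({\tau_{\mathbf{E}(max)}( N,f,t_{z_0},\tau )}|_{t_{z_0}}^{\,t_{z_0 + f(N,t_{z_0},\tau)}} - \Omega(w, c(m))\bigr)\lg(N)$, corrected by $\Omega$-weighted logarithmic terms in $m$.

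The bridge between the hypothesis, which is phrased through $\Omega(w, c_0 + m)$, and the corollary, which is phrased through $\Omega(w, c(m))$, is the monotonicity of the cycle-bounded halting probability established in Definition~\ref{BdefOmegawc}. Since $c$ is non-decreasing with $c(m) \ge c_0 + m$, I have $\Omega(w, c(m)) \le \Omega(w, c_0 + m)$. Defining the effective exponent $C'$ by the same quotient as $C$ but with $c(m)$ in place of $c_0 + m$, this inequality forces $C' \ge C > 0$, while the general estimate $0 < \epsilon_2 \le \Omega(w, x) \le 1$ gives $C' \le 1/\epsilon_2$. Because $N^{C}/\lg N \le N^{C'}/\lg N$ for large $N$ whenever $C' \ge C$, the growth assumption $m = \mathbf{O}(N^{C}/\lg N)$ upgrades to $m = \mathbf{O}(N^{C'}/\lg N)$.

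Next I would substitute this domination into the corollary's bound, bounding $m$ above by $C_6\,N^{C'}/\lg N$ inside each logarithm, so that $\lg(m) \le \lg(C_6) + C'\lg N - \lg\lg N$ and the double logarithm is $\mathbf{O}(\lg\lg N)$. The decisive cancellation occurs in the coefficient of $\lg N$: clearing the denominator in the definition of $C'$ gives $\Omega(w, c(m))\,C' = {\tau_{\mathbf{E}(max)}( N,f,t_{z_0},\tau )}|_{t_{z_0}}^{\,t_{z_0 + f(N,t_{z_0},\tau)}} - \Omega(w, c(m)) - \epsilon$, so the coefficient collapses to exactly $\epsilon > 0$. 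Since $\epsilon$ is fixed and independent of $N$, the surviving lower bound is $\epsilon\,\lg N$ minus $\mathbf{O}(\lg\lg N)$ and additive constants, hence tends to $+\infty$. Therefore $t_{z_0}$ lies in the feasible set defining the minima in Definitions~\ref{BdefTimecentrality1} and~\ref{BdefTimecentrality2}; these sets are nonempty, so $t_{cen_1}(c)$ and $t_{cen_2}(c)$ exist, they coincide by Note~\ref{BnoteunderdefRelatingtwotimecentralities}, and both are bounded above by $t_{z_0}$.

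The main obstacle I expect is the bookkeeping of the two exponents $C$ and $C'$ together with the direction of the $\Omega$-monotonicity: one must confirm that replacing the baseline cycle count $c_0 + m$ by the possibly much larger $c(m)$ only increases the effective exponent, so that the asymptotic domination survives, while the leading $\lg N$ coefficient stays fixed at exactly $\epsilon$ --- which holds precisely because $\epsilon$ enters the hypothesis as a constant slack rather than being derived from $c$. A secondary care point is the final logical step: I must check that $t_{z_0}$ genuinely satisfies the defining predicate inside the nested minimization of both time-centrality definitions (in particular the clause asserting that the limit is $\infty$ at $t_{z_0}$), so that the minima are taken over nonempty sets and the stated inequality $t_{cen_2}(c) = t_{cen_1}(c) \le t_{z_0}$ is legitimate.
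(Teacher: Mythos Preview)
Your proposal is correct and follows essentially the same approach as the paper's proof: instantiate Corollary~\ref{BcorMain} at $t_{z_0}$, use the monotonicity of $\Omega(w,\cdot)$ to pass from the exponent $C$ (defined via $c_0+m$) to an exponent $C'\ge C$ (defined via $c(m)$) bounded above by $1/\epsilon_2$, then substitute the domination $m=\mathbf{O}(N^{C'}/\lg N)$ into the corollary's lower bound so that the $\lg N$ coefficient collapses to $\epsilon$ and the remaining terms are $\mathbf{O}(\lg\lg N)$, yielding divergence and hence the conclusion via Definitions~\ref{BdefTimecentrality1}--\ref{BdefTimecentrality2}. Your identification of the two care points (the direction of the $\Omega$-monotonicity and the nonemptiness of the centrality-defining sets) matches exactly where the paper expends its bookkeeping.
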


\end{document}